\def\pr{\text{Pr}}%
\def\dt{\text{d}}
\numberwithin{equation}{section}
\theoremstyle{plain}
\newtheorem{proposition}{{\bf{Proposition}}}
\newtheorem{lemma}{{\bf{Lemma}}}
\newcommand{\blind}{1}
\begin{document}
%%%%%%%%%%%%%%%%%%%%%%%%%%%%%%%%%%%%%%%%%%%%%%%%%%%%%%%%%%%%%%%%%%%%%%%%%%%%%%
\thispagestyle{empty}
\baselineskip=28pt
\vskip 5mm
\begin{center} {\Large{Local likelihood estimation of complex tail dependence structures, applied to U.S. precipitation extremes}}
\end{center}

\baselineskip=12pt

\vskip 5mm

\if1\blind
{
\begin{center}
\large
Daniela Castro-Camilo$^1$ and Rapha\"el Huser$^1$
\end{center}

\footnotetext[1]{
\baselineskip=10pt Computer, Electrical and Mathematical Sciences and Engineering (CEMSE) Division, King Abdullah University of Science and Technology (KAUST), Thuwal 23955-6900, Saudi Arabia. E-mails: daniela.castro@kaust.edu.sa, raphael.huser@kaust.edu.sa}
} \fi

\baselineskip=17pt
\vskip 4mm
\centerline{\today}
\vskip 6mm

%%%%%%%%%%%%%%%%%%%%%%%%%%%%%%%%%%%%%%%%%%%%%%%%%%%%%%%%%%%%%%%%%%%%%%%%
\begin{center}
{\large{\bf Abstract}}
\end{center}
To {disentangle} the complex non-stationary dependence structure of precipitation extremes over the entire contiguous U.S., we propose a flexible local approach based on factor copula models. Our sub-asymptotic spatial modeling framework yields non-trivial tail dependence structures, with a weakening dependence strength as events become more extreme, a feature commonly observed with precipitation data but not accounted for in classical asymptotic extreme-value models. To estimate the local extremal behavior, we fit the proposed model in small regional neighborhoods to high threshold exceedances, under the assumption of local stationarity, {which allows us to gain in flexibility}. Adopting a local censored likelihood approach, inference is made on a fine spatial grid, and local estimation is performed by taking advantage of distributed computing resources and the embarrassingly parallel nature of this estimation procedure. The local model is efficiently fitted at all grid points, and uncertainty is measured using a block bootstrap procedure. An extensive simulation study shows that our approach can adequately capture complex, non-stationary dependencies, while our study of U.S. winter precipitation data reveals interesting differences in local tail structures over space, which has important implications on regional risk assessment of extreme precipitation events.\baselineskip=16pt

\par\vfill\noindent
{\bf Keywords:} factor copula model, local likelihood, non-stationarity, spatial extremes, threshold exceedance.\\

\newpage
\baselineskip=26pt
% ============================
% ============================
% ======= INTRODUCTION =======
% ============================
% ============================
\section{Introduction}
\label{Introduction}
Water-related extremes such as floods and droughts can heavily impact human life, affecting our society, economic stability, and environmental sustainability. In recent years, we have witnessed an acceleration of the water cycle in some areas of the globe, including an increase in the frequency and intensity of heavy precipitation, sadly illustrated with a number of unprecedented hurricane events that recently hit the Caribbean islands and the Southeastern United States (U.S.), {including hurricane Harvey (August--September, 2017), and hurricane Florence (August--September, 2018).} 
Climate change has motivated the development of stochastic models for risk assessment and uncertainty quantification of extreme weather events. In the univariate context, the generalized extreme-value distribution with time-varying parameters has been used to study the effect of climate change on global precipitation annual maxima \citep{Westra.etal:2013}. More recently, \citet{Fischer.Knutti:2016} have adopted an empirical approach to study heavy rainfall intensification, validating the theory predicted by early generations of general circulation models. Similarly, \citet{Hoerling.etal:2016} analyzed trends in U.S. heavy precipitation data. Beyond the univariate context, several studies have focused on the spatial or spatio-temporal modeling of precipitation extremes within and across different catchments \citep{Cooley.etal:2007,Thibaud.etal:2013,huser2014space, opitz2018inla}, leading to extreme river discharges \citep{Asadi.etal:2015} or on the risk of heavy snowfall in mountainous areas \citep{Blanchet.Davison:2011}. In these small regions, however, the spatial dependence structure is often assumed to be stationary. 
By contrast, in this paper we focus on modeling the complex, non-stationary, spatial dependence structure of U.S. winter precipitation extremes.

Assessing the behavior of extreme events over space entails many challenges. First, data are measured at a finite number of locations, but extrapolation is typically required at any other location of the study region. In our case, the study region is the contiguous U.S. and the monitoring stations at which data are collected are displayed in Figure~\ref{fig:usmap.pdf}. 
\begin{figure}[t!]
\centering
\includegraphics[width=0.8\linewidth]{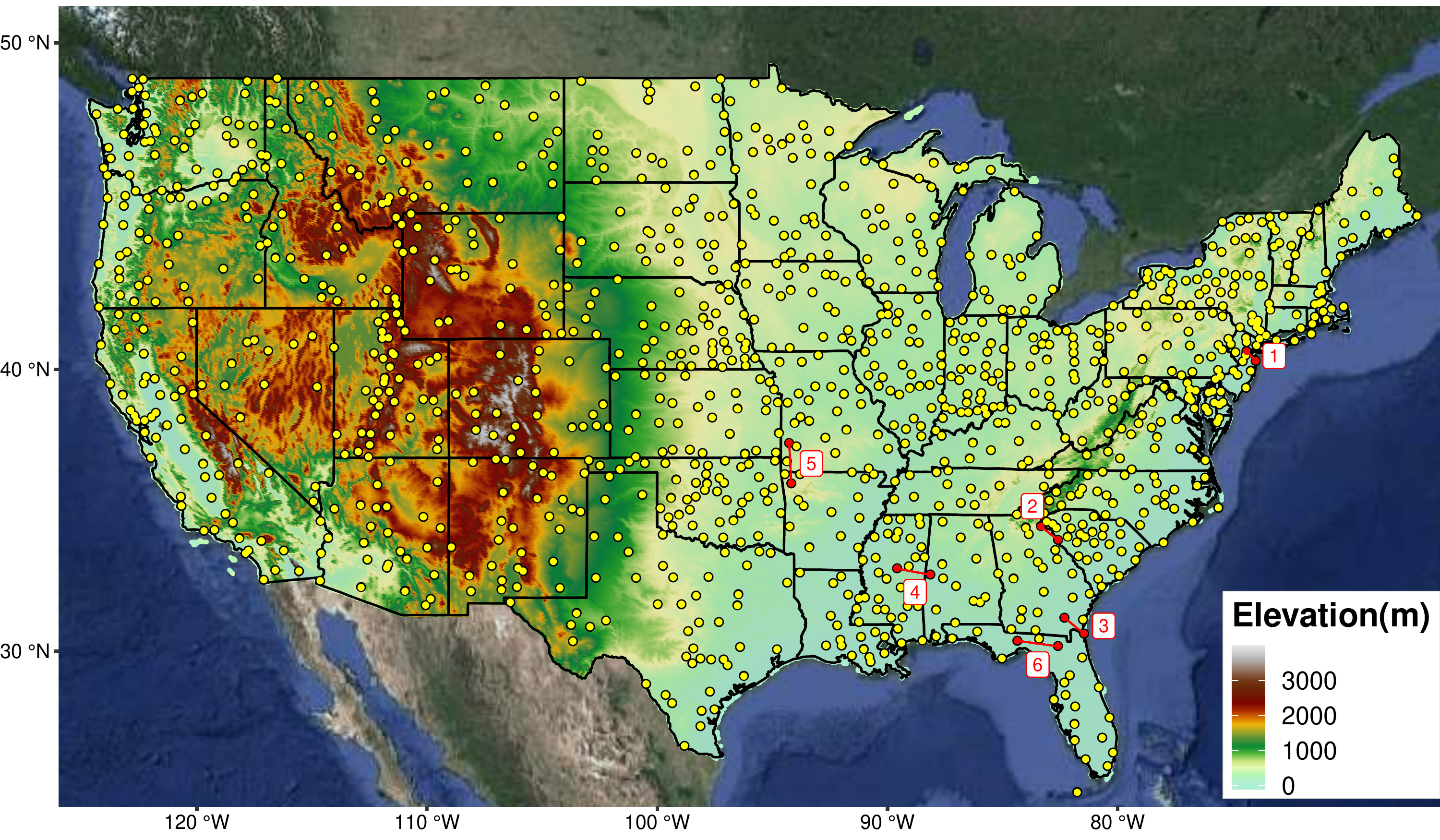}
\caption{\footnotesize Topographic map of the contiguous U.S. with the 1218 weather stations (dots). Red dots connected by a red segment represent the pairs of stations chosen in Figure~\ref{fig:chiu}, sorted by their distances (labels).}\label{fig:usmap.pdf}
\end{figure}
Second, by contrast with classical geostatistics where the estimation target is usually the (conditional) mean, statistics of extremes focuses on high (or low) quantiles, and extrapolation is not only required over space, but also into the joint upper (or lower) tail of the distribution. Third, the scarcity of extremes results in uncertainty inflation, especially when very high quantiles have to be estimated, and thus it is important to develop efficient inference methods that can potentially be applied in high dimensions. Finally, a recurrent problem with large heterogeneous regions, such as the contiguous U.S., is the modeling of spatial non-stationarity. This may concern both marginal distributions and the dependence structure. As pointed out by \citet{Huser2016}, misspecification of the joint distribution of extremes may result in poor estimation of spatial risk measures. There is no universal consensus on how to handle non-stationarity; however, it is often useful and realistic to assume some weak form of local stationarity. In this work, we fit {a set of locally stationary spatial models} to U.S. winter precipitation extremes, which contrasts with the quite rigid fully parametric model fitted by \citet{Huser2016}, and the non-parametric approach adopted by \citet{castro2016spectral}, whereby a family of bivariate extremal dependence structures at different sites are linked through the action of covariates, while neglecting spatial dependence. 

The literature on spatial extremes is mostly divided into two mainstream approaches: the first approach defines extreme events as block (e.g., annual) maxima, which are typically modeled using max-stable processes \citep{westra2011detection,Davison.Gholamrezaee:2012,Huser2016, davison2019spatial,Vettori.etal:2019}; the latter can be regarded as the functional generalization of multivariate extreme-value distributions, and they arise as limits of properly renormalized maxima of random fields. The second approach defines extreme events as high threshold exceedances, which are usually modeled using generalized Pareto processes, i.e., the threshold counterpart of max-stable processes \citep{Ferreira.deHaan:2014,ThibaudOpitz15,deFondevilleDavison16}. Both max-stable and generalized Pareto processes are asymptotic models, in the sense that they have a limiting characterization for block maxima and threshold exceedances, respectively. However, their tail dependence structure is fairly rigid: max-stable copulas are invariant to the operation of taking componentwise maxima, while generalized Pareto copulas are invariant to thresholding at higher levels. This lack of tail flexibility has recently motivated the development of sub-asymptotic spatial models for threshold exceedances \citep{Wadsworth.Tawn:2012b,Opitz:2016,Huser.etal:2016,Huser.Wadsworth:2017}, which, unlike asymptotic models, can capture weakening extremal dependence strength on the way to their limiting generalized Pareto process. {See also~\cite{huser2018penultimate} and~\cite{bopp2018hierarchical} for the case of maxima}. To illustrate this, Figure~\ref{fig:chiu} shows, for a range of quantiles {$u\in[0.5,0.995]$}, the conditional probability
\begin{align}\label{eq:chiu}
\chi_h(u)=\pr\{Y_1>F_1^{-1}(u)\mid Y_2>F_2^{-1}(u)\},\quad Y_j=Y(\mathbf{s}_j)\sim F_j,\quad j = 1,2,
\end{align}

\begin{figure}[!t]
\centering
\includegraphics[width=0.31\linewidth]{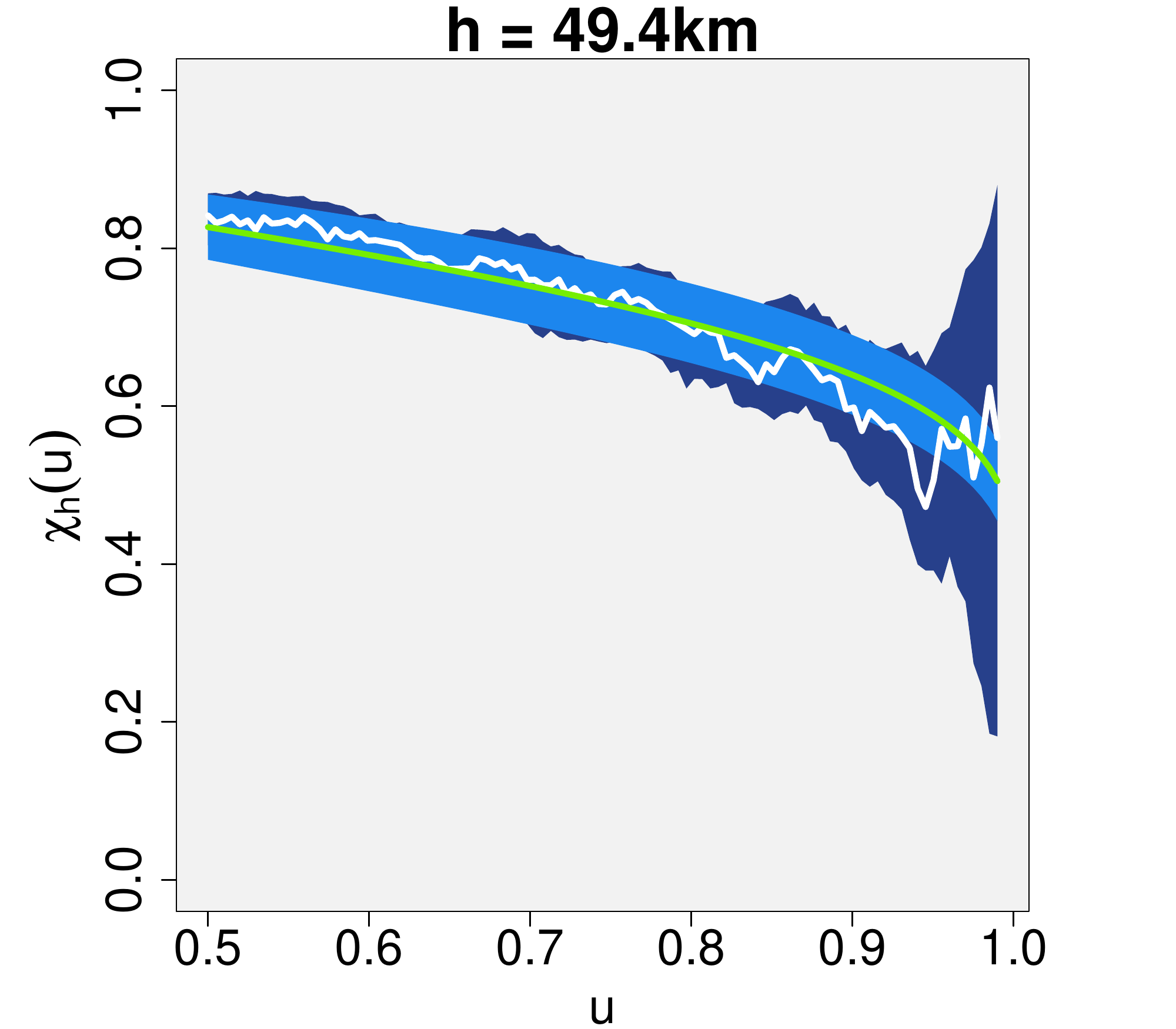}
\includegraphics[width=0.31\linewidth]{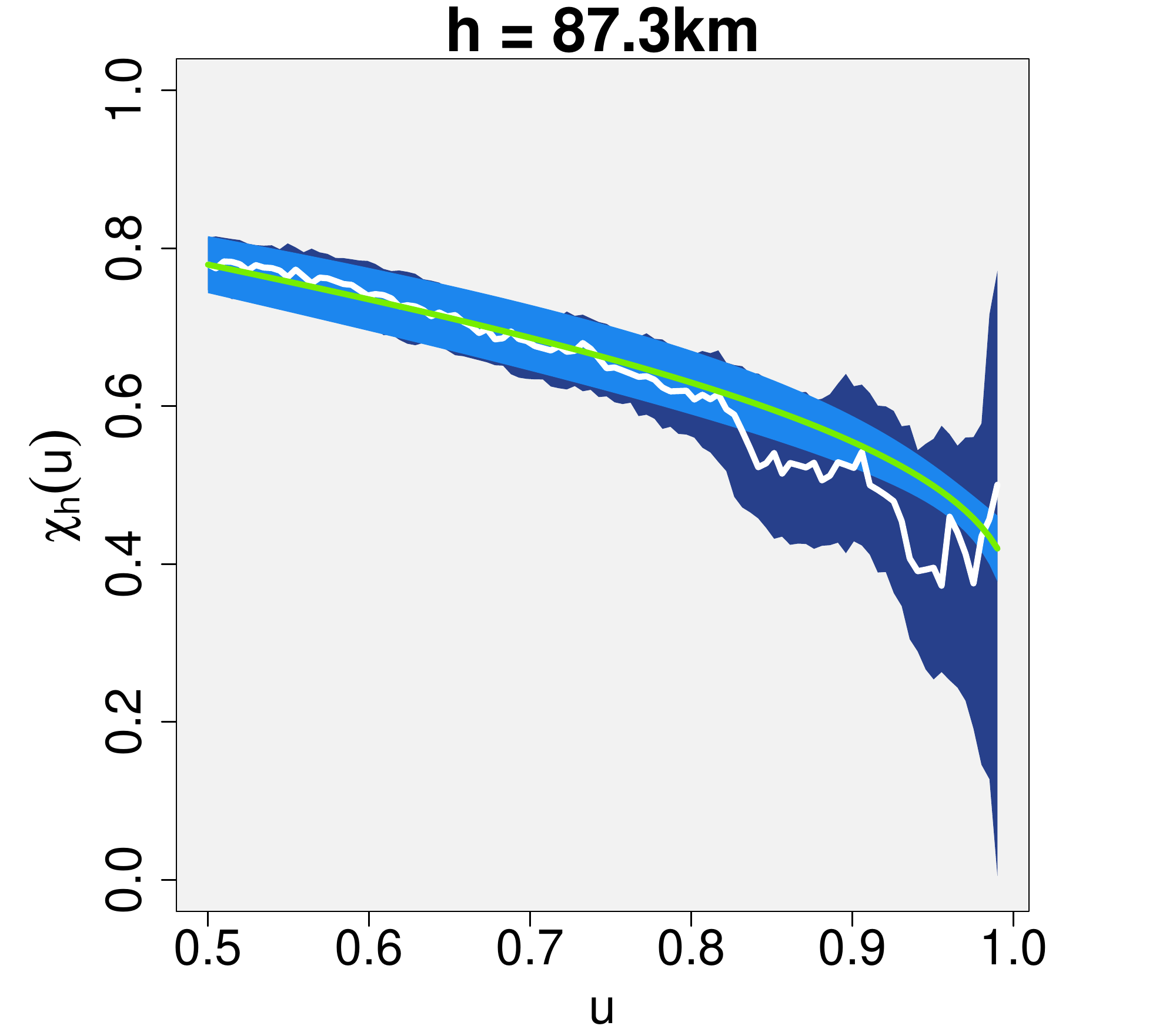}
\includegraphics[width=0.31\linewidth]{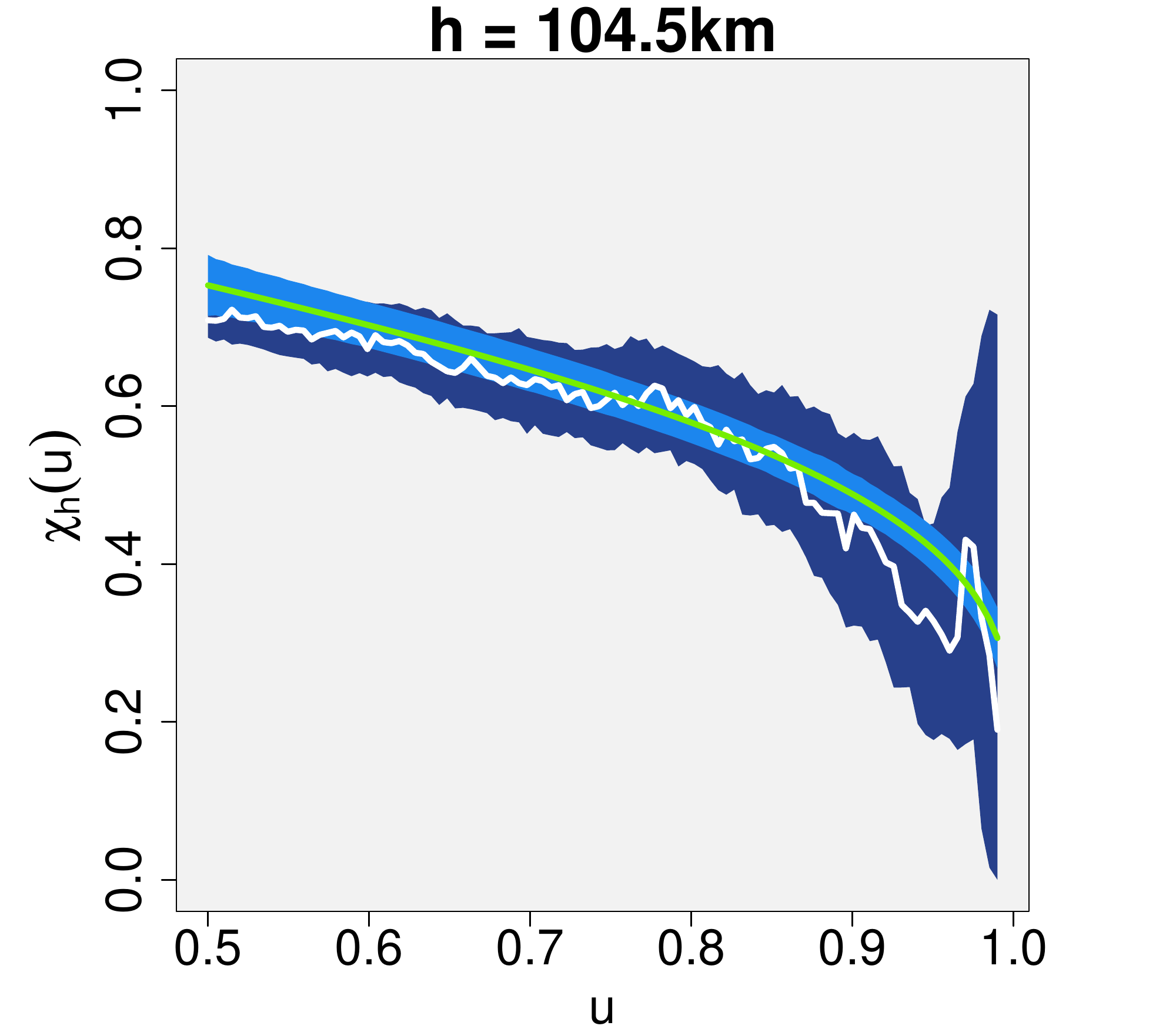}\\[5pt]
\hspace{3pt}\includegraphics[width=0.31\linewidth]{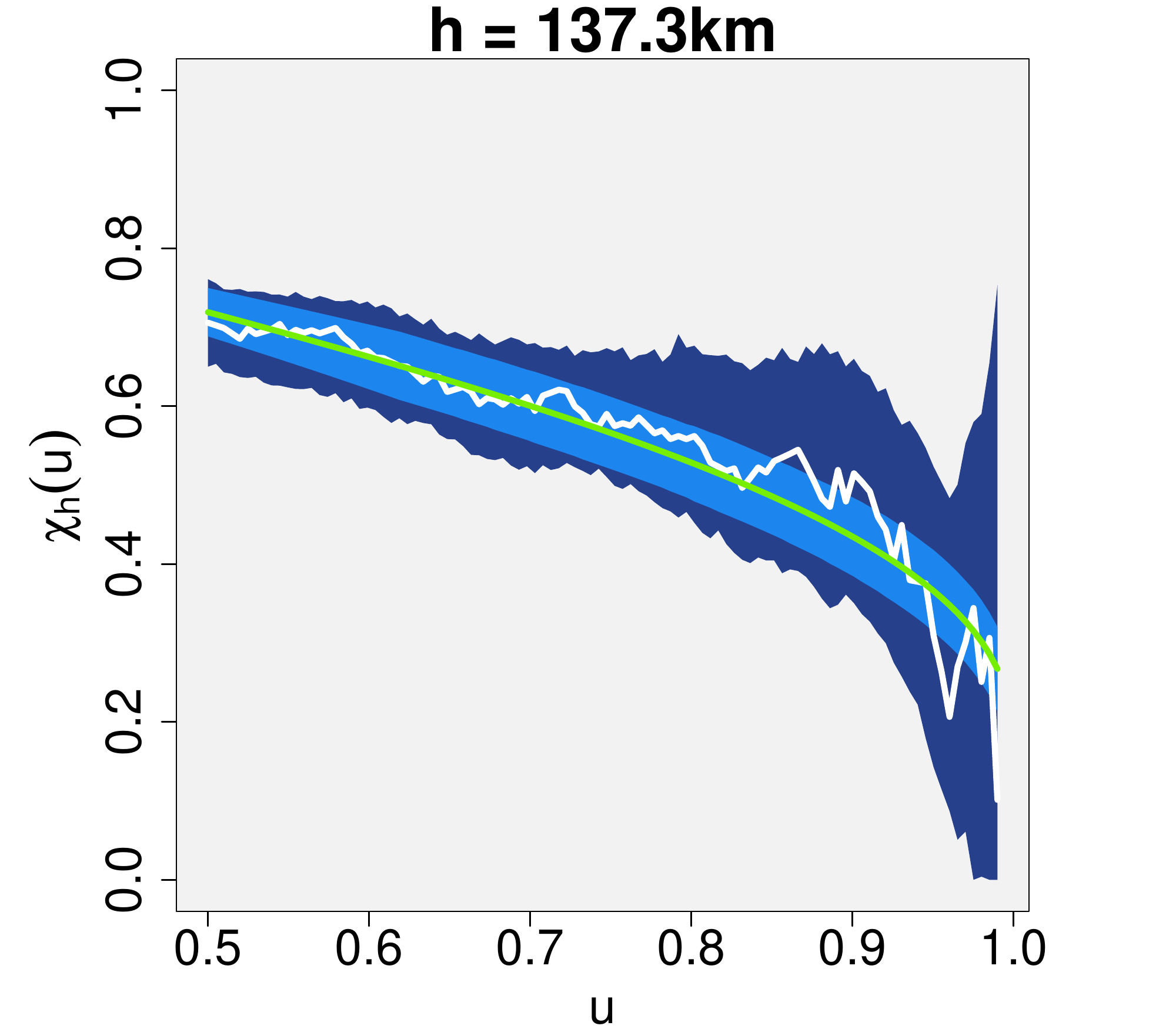}
\includegraphics[width=0.31\linewidth]{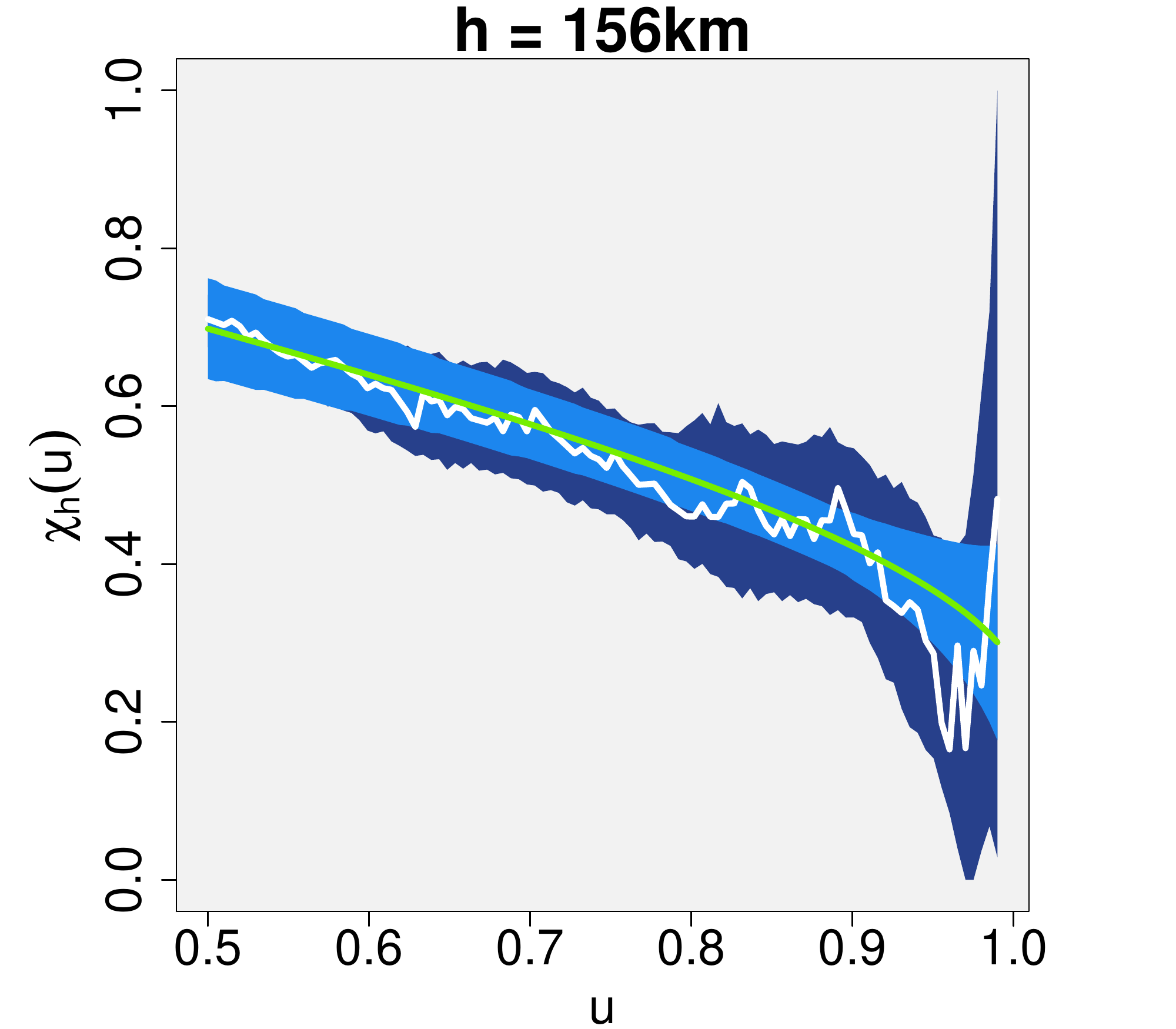}
\includegraphics[width=0.31\linewidth]{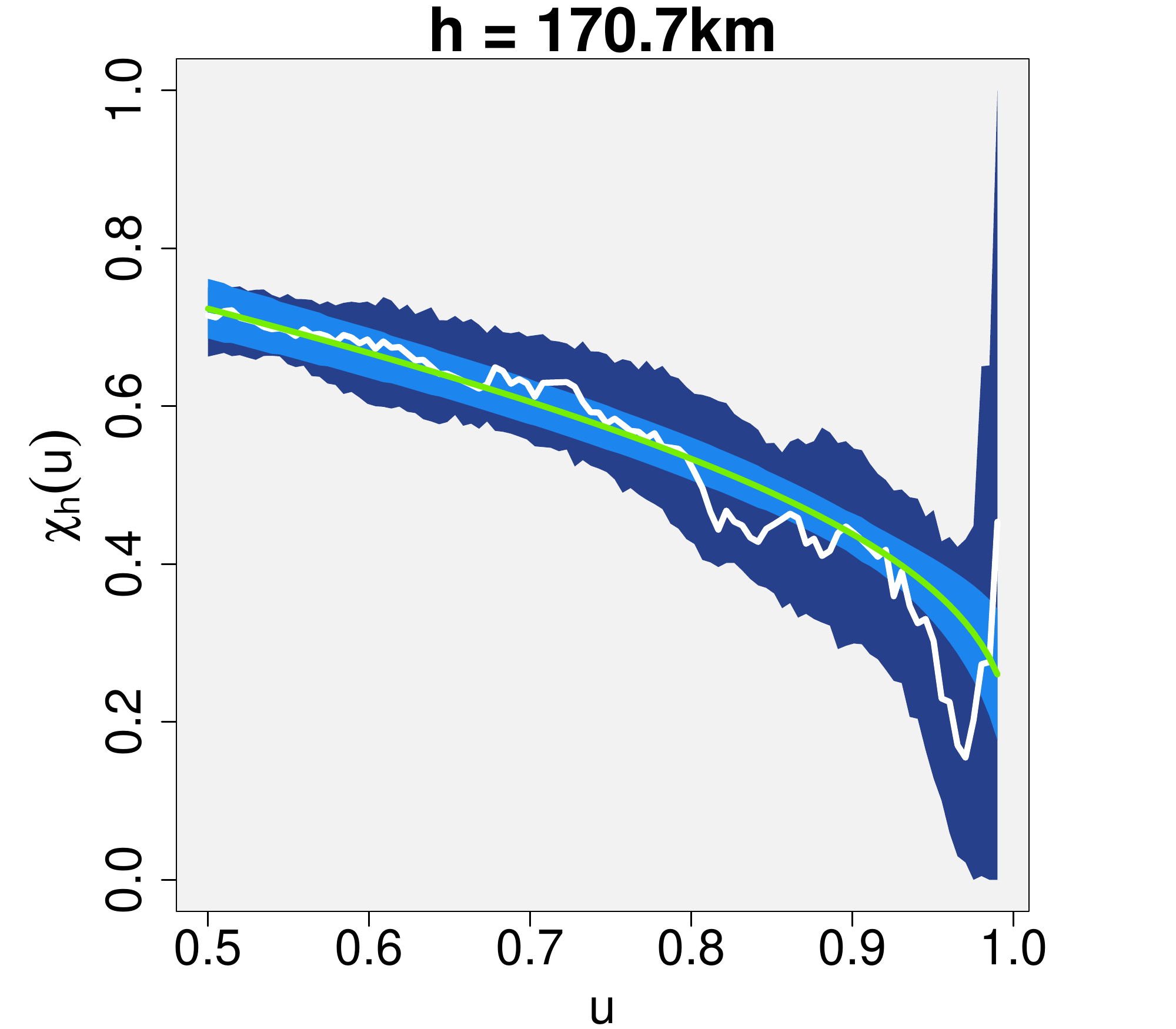}
\caption{\footnotesize {Empirical (white) and model-based (green) estimates for $\chi_h(u)$ as a function of the threshold $u\in[0.5, 0.995]$ for six selected pairs of stations, sorted according to their distance (labels 1--6 in Figure~\ref{fig:usmap.pdf}). Confidence envelopes for the empirical estimates (dark blue) are based on 300 block bootstrap samples with monthly blocks. Model-based confidence envelopes are also provided (light blue). Details on the fitted model are given in Sections~\ref{sec:Modeling} and \ref{DataApp}.}}
\label{fig:chiu}
\end{figure}
estimated for six selected pairs of stations $\{\mathbf{s}_1,\mathbf{s}_2\}\subset\mathcal{S}$ at various distances $h=\|\mathbf{s}_1-\mathbf{s}_2\|$, where $Y(\mathbf{s})$, $\mathbf{s}\in\mathcal{S}$, denotes the 5 day-cumulative winter precipitation random field defined over the contiguous U.S. (denoted {by} $\mathcal{S}$), and $F_j$ denotes the marginal distribution of $Y(\mathbf{s}_j)$ $j = 1,2$. While the function $\chi_h(u)$ in \eqref{eq:chiu} appears to {be decreasing as a function of the threshold $u$}, a generalized Pareto process \emph{always} assumes that $\chi_h(u)\equiv\chi_h$ is constant in $u$ \citep{rootzen2018multivariate} {for each lag $h$}, hence potentially leading to {an} overestimation of the dependence strength at higher quantiles. In Section~\ref{sec:Modeling}, we {describe a stationary factor copula model capturing} non-trivial tail dependence, i.e., allowing for a positive limit $\chi_h=\lim_{u\to1}\chi_h(u)>0$, while at the same time having a decreasing $\chi_h(u)$ function with known limiting extreme-value dependence structure. This approach has attractive modeling features, and is computationally more convenient than the models proposed by \citet{Huser.etal:2016} and \citet{Huser.Wadsworth:2017}. It differs fundamentally from the Laplace model of \citet{Opitz:2016}, which strongly focuses on capturing tail independence ($\chi_h=0$) and the decay rate towards the limit, rather than tail dependence ($\chi_h>0$), which is the type of extremal behavior suggested by our precipitation data at small {spatial} scales (or at least, $\chi_h(u)>0$ {remains positive} for quite high thresholds $u$); recall Figure~\ref{fig:chiu}.

%{Then, in Section~\ref{sec:Inference}, we exploit this model within a local estimation framework to describe the non-stationary behavior of precipitation extremes.

Inference for extreme-value models is known to be particularly cumbersome. When max-stable processes are directly or indirectly involved, the likelihood function becomes excessively prohibitive to compute in moderate to large dimensions \citep{castruccio2016high, huser2019full}. This has lead to the use of less efficient composite likelihood approaches \citep{padoan2010likelihood,westra2011detection,Huser.Davison:2013a,Thibaud.etal:2013,huser2014space}. Alternatively, threshold approaches lead to simpler, significantly less demanding likelihoods, although their routine application to high-dimensional datasets is hampered by the need to censor observations involving non-extreme data (i.e., below a high threshold), which entails expensive multidimensional integrals; see, e.g., \citet{WadsworthTawn14}, \citet{ThibaudOpitz15}, \citet{Huser.etal:2016} and \citet{Huser.Wadsworth:2017}. Other censoring schemes have been investigated by \citet{Engelkeetal15} and \citet{Opitz:2016}, but they {may} cause {some} bias \citep{Huseretal16}. More recently, \citet{morris2017space} {proposed to} impute the censored, non-extreme, observations using a simulation-based algorithm within a Bayesian framework. In this work, although our full dataset comprises 1218 monitoring stations over the contiguous U.S., the dimensionality is considerably reduced by adopting a local estimation approach, making it possible to perform traditional censored likelihood inference in a robust and efficient way. 

Our main {methodological contributions} can be summarized as follows: we extend the stationary exponential factor copula model by allowing the parameters to change smoothly with locations, {and provide further theoretical results for it.} We {develop} a local likelihood approach adapted to the context of extremes to fit the stationary exponential factor copula model {locally}, censoring non-extreme observations. Finally, we {derive} simplified expression for the terms involved in the local censored likelihood function in order to compute maximum likelihood estimates in a reasonable time, while avoiding numerical {instabilities}.

In Section~\ref{sec:Modeling}, the stationary exponential factor copula model {is} presented, and {its} tail properties are studied.~Section~\ref{sec:Inference} discusses censored local likelihood inference based on high threshold exceedances.~In Section~\ref{Simulation}, a simulation study is conducted {based on a non-stationary factor copula model with spatially-varying parameters, in order} to assess the performance of our approach in various non-stationary contexts.~In Section~\ref{DataApp}, we apply {our proposed} model to study the dependence structure of heavy precipitation over the whole contiguous U.S., and we {use it to estimate return periods of spatial extreme events}.~Section~\ref{conclusion} concludes with {a comprehensive and critical} discussion.

% ========================
% ========================
% ======= MODELING =======
% ========================
% ========================
\section{Modeling spatial extremes using factor copulas}\label{sec:Modeling}
\subsection{Copula models}
A copula is a multivariate probability distribution with $\mbox{Unif}(0,1)$ margins. Copulas are used to describe the dependence between random variables, and may be used to link univariate marginal distributions to construct a joint distribution. Specifically, let $(X_1,\ldots, X_D)^T\in\mathbb{R}^D$ be a random vector with continuous marginals $F_j(x) = \pr(X_j\leq x)$, $j = 1,\ldots,D$. The copula of $(X_1,\ldots,X_D)^T$ is defined through the random vector $(U_1,\ldots, U_D)^T = \{F_1(X_1), \ldots, F_D(X_D)\}^T$ {(with standard uniform margins)} as $C(u_1,\ldots,u_D) = \pr(U_1\leq u_1,\ldots,U_D\leq u_D)$. \citet{sklar1959fonctions} showed that each multivariate distribution $F(x_1,\ldots, x_D)$ with continuous margins $F_j(x)$ has a unique copula $C$, which may be expressed as
\begin{align}\label{copula}
F(x_1,\ldots x_D) &= C\{F_1(x_1),\ldots,F_D(x_D)\}\;\;\Longleftrightarrow\;\; C(u_1,\ldots, u_D) = F\{F_1^{-1}(u_1),\ldots,F_D^{-1}(u_D)\};
\end{align}
this implies that $F$ can be written in terms of univariate marginal distributions chosen independently from the dependence structure between the variables. This result can be associated with a two-step approach for inference, where margins are treated separately from the dependence structure. Based on \eqref{copula}, several copula families have been proposed and applied in practice for the modeling of environmental data, the most common one being the Gaussian copula obtained by choosing the joint distribution $F$ to be the standard multivariate Gaussian distribution $\Phi_D(\cdot;\mathbf{\Sigma})$ with correlation matrix $\mathbf{\Sigma}$. Other more flexible elliptical copulas may be derived similarly, such as the Student-$t$ copula, which is tail-dependent as opposed to the Gaussian copula. An alternative general family of copulas generating interesting tail dependence structures are factor copula models \citep{Krupskii.Joe:2015,krupskii2018factor}, in which a random and unobserved factor affects all measurements simultaneously. In Section~\ref{ExpStatLocalModel}, we describe the construction of the stationary exponential factor copula model proposed by \citet{krupskii2018factor}, which has appealing modeling and inference properties, and we then embed it in a more general non-stationary model in Section~\ref{ExpNonStatLocalModel}.

\subsection{The stationary exponential factor copula model}\label{ExpStatLocalModel}
Let $Z(\mathbf{s})$, $\mathbf{s}\in\mathcal{S}\subset\mathbb R^2$, be a standard Gaussian process with stationary correlation function $\rho(h)$ (see \citet{gneiting2006geostatistical} for a review of correlation functions), and let $V\geq0$ be an exponentially distributed random variable with rate parameter $\lambda>0$, independent of $Z(\mathbf{s})$, and which does not depend on the spatial location $\mathbf{s}\in\mathcal{S}$. The exponential factor copula model may be expressed in continuous space through the random process
\begin{align}\label{LocalModel}
W(\mathbf{s}) &= Z(\mathbf{s}) + V,\qquad \mathbf{s}\in\mathcal{S}.
\end{align}
The $W$ process in \eqref{LocalModel} is a Gaussian \emph{location} mixture, i.e., a standard Gaussian process with a random (exponentially distributed), constant mean. In this sense, it has similarities with the Student-$t$ process {(see e.g.,~\citeauthor{roislien2006t}, \citeyear{roislien2006t})}, which is a Gaussian \emph{scale} mixture (with standard deviation following a specific inverse gamma distribution). However, these models are not particular cases of each other and their dependence structures have significant differences.

Although Model \eqref{LocalModel} may seem quite artificial, it is only used to generate a flexible upper tail dependence structure, which we then fit to precipitation extremes. In other words, we disregard the margins and only consider the copula associated to \eqref{LocalModel}, which we fit to high threshold exceedances using a censored likelihood approach, reducing the contribution of small precipitation values; more details are given in Sections~\ref{sec:Inference} and \ref{DataApp}.

From~\eqref{LocalModel}, each configuration of $D$ sites $\{\mathbf{s}_1,\ldots,\mathbf{s}_D\}\subset\mathcal{S}$ yields a $D$-variate copula as follows: let $Z_j = Z(\mathbf{s}_j)$ and $W_j = W(\mathbf{s}_j)$, $j=1,\ldots,D$. The random vector $\mathbf{Z} = (Z_1,\ldots, Z_D)^T$ has a multivariate normal distribution with correlation matrix $\mathbf{\Sigma}_{\mathbf{Z}}$ that depends on the correlation function $\rho(h)$ and the sites' configuration, i.e., $\mathbf{Z}\sim\Phi_D(\cdot;\mathbf{\Sigma}_{\mathbf{Z}})$. The components of the random vector $\mathbf{W} = (W_1,\ldots, W_D)^T$ are $W_j = Z_j + V$, $j=1,\ldots,D$, where $V\sim{\rm Exp}(\lambda)$ is independent of $\mathbf{Z}$; by {integrating out} $V$, the joint distribution of $\mathbf{W}$ may be expressed as
\begin{equation}\label{cdfW}
F_D^{\mathbf{W}}(w_1,\ldots, w_D) =  \lambda \int_{0}^{\infty} \Phi_D(w_1 - v,\ldots, w_D - v;\mathbf{\Sigma}_{\mathbf{Z}}) \exp(-{\lambda v})\text{d}v,
\end{equation}
whilst its density is 
\begin{equation}\label{pdfW}
f_D^{\mathbf{W}}(w_1,\ldots, w_D) = \lambda \int_{0}^{\infty} \phi_D(w_1 - v,\ldots, w_D - v;\mathbf{\Sigma}_{\mathbf{Z}}) \exp(-{\lambda v})\text{d}v,
\end{equation}
where $\phi_D(\cdot;\mathbf{\Sigma})$ is the multivariate standard normal density with correlation matrix $\mathbf{\Sigma}$. In Appendix~\ref{simplified} we provide simpler and computationally efficient expressions to compute~\eqref{cdfW}, \eqref{pdfW}, and {partial derivatives of~\eqref{cdfW}}, avoiding the integral in $v$. Applying \eqref{copula}, the resulting copula and its density may be written as
\begin{align}\label{CopulaModel}
C_D^{\mathbf{W}}(u_1,\ldots, u_D) = F_D^{\mathbf{W}}(w_1,\ldots,w_D),\qquad c_D^{\mathbf{W}}(u_1,\ldots, u_D) &= \frac{f_D^{\mathbf{W}}(w_1,\ldots, w_D)}{\prod_{j = 1}^D f_1^{\mathbf{W}}(w_j)},
\end{align}
where $w_j = {(F_1^{\mathbf{W}}})^{-1}(u_j;\lambda)$, $j = 1,\dots, D$, and $F_1^{\mathbf{W}}(\cdot;\lambda)$ and $f_1^{\mathbf{W}}(\cdot;\lambda)$ denote the marginal distribution and density of the $W$ process, respectively. In particular, {we} can show that
\begin{align}
\label{cdfW1}
F_1^{\mathbf{W}}(w;\lambda) &= \Phi(w) - \exp(\lambda^2/2 - \lambda w)\Phi(w - \lambda),
\end{align}
where $\Phi(\cdot)$ is the standard normal distribution function; see \eqref{eq:marginalcdf} in Appendix \ref{simplified}.

To illustrate the tail flexibility of the stationary exponential factor copula model, Figure~\ref{fig:chiuStatFCM} displays the function $\chi_h(u)$ defined in \eqref{eq:chiu} as a function of the quantile level $u\in[0.95,1]$ and the Euclidean distance $h=\|\mathbf{s}_1-\mathbf{s}_2\|\geq0$, for different rate $\lambda>0$ and range $\delta>0$ parameters, assuming an exponential correlation function $\rho(h)=\exp(-h/\delta)$, $h\geq0$. While the range $\delta$ controls the correlation decay with distance, the rate $\lambda$ determines the overall tail dependence strength and strongly impacts the value of $\chi_h(u)$ and {its limit} $\chi_h = \lim_{u\to1}\chi_h(u)$ at large distances, i.e., as $h\to\infty$. 
\begin{figure}[t!]
\centering 
\includegraphics[scale=0.3]{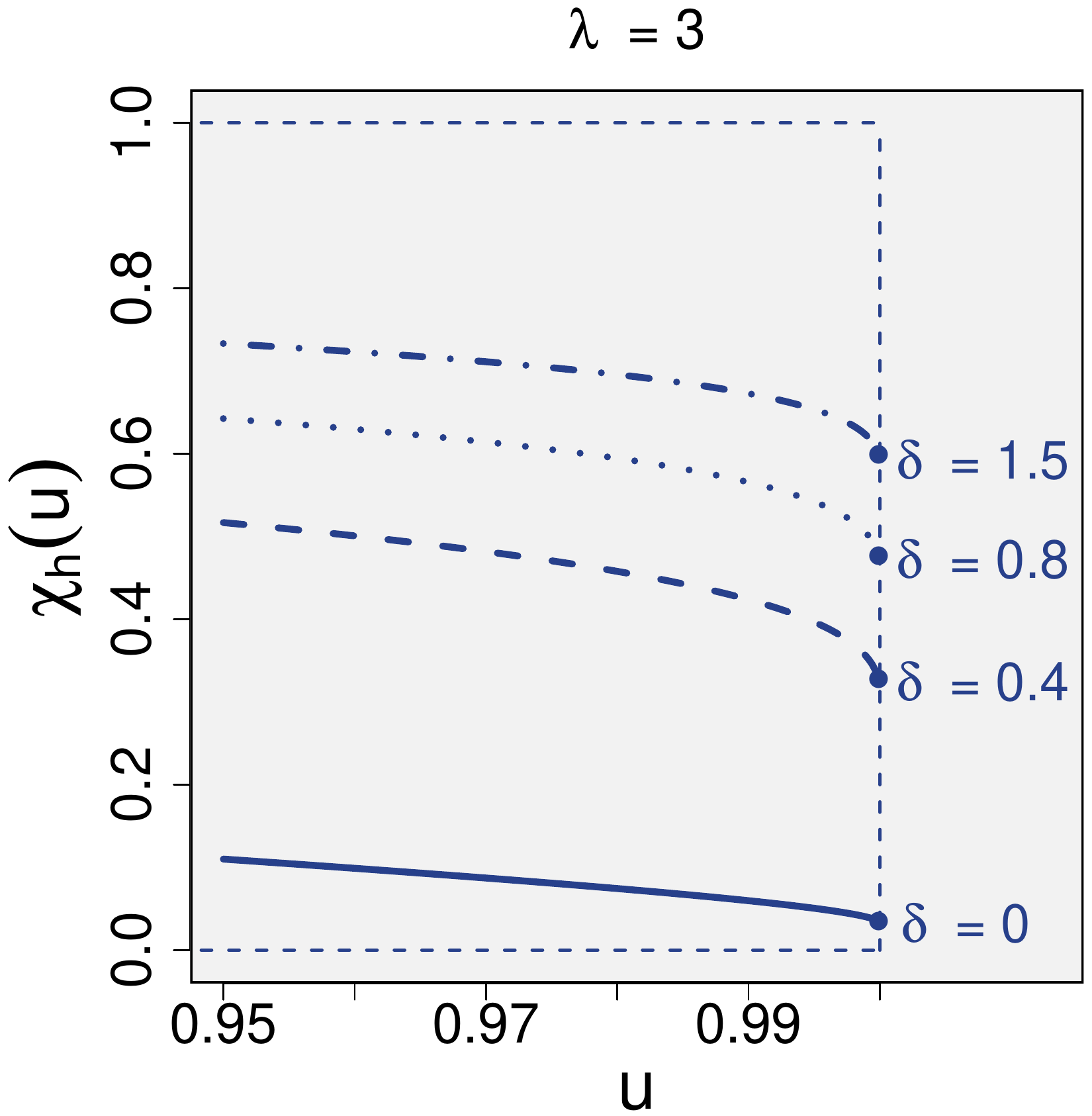}\hspace{20pt}
\includegraphics[scale=0.3]{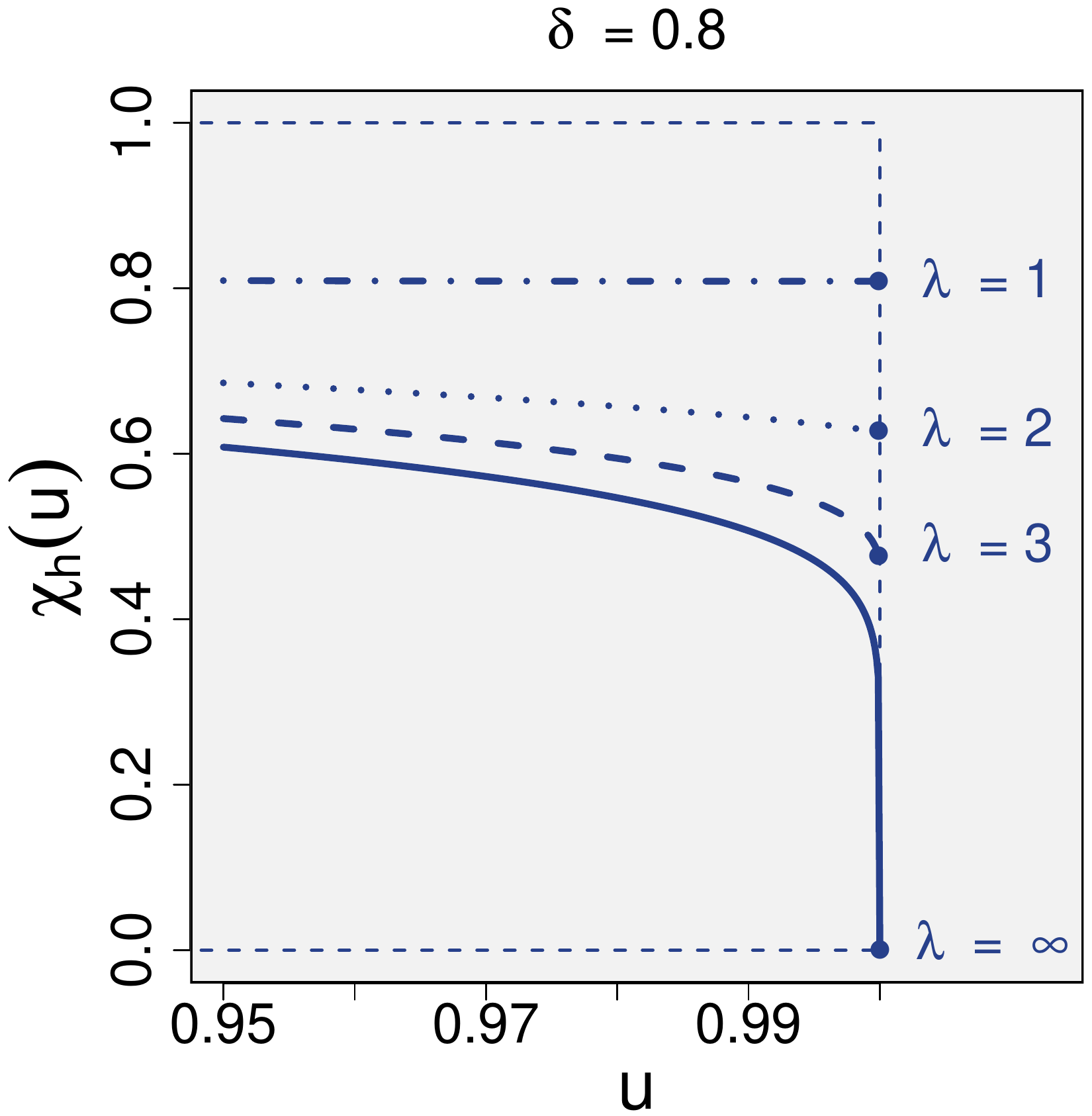}\\[5pt]
\includegraphics[scale=0.3]{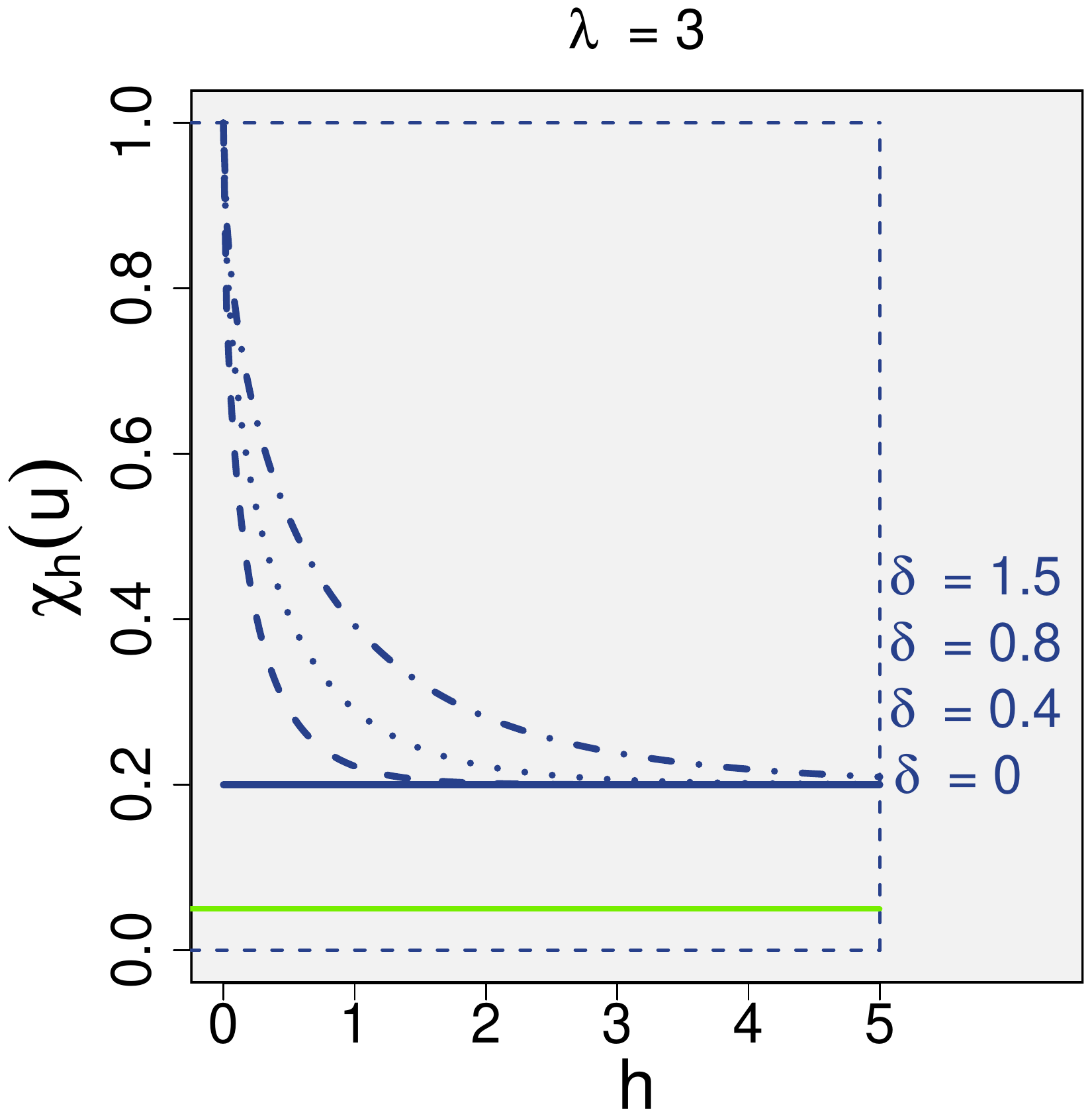}\hspace{20pt}
\includegraphics[scale=0.3]{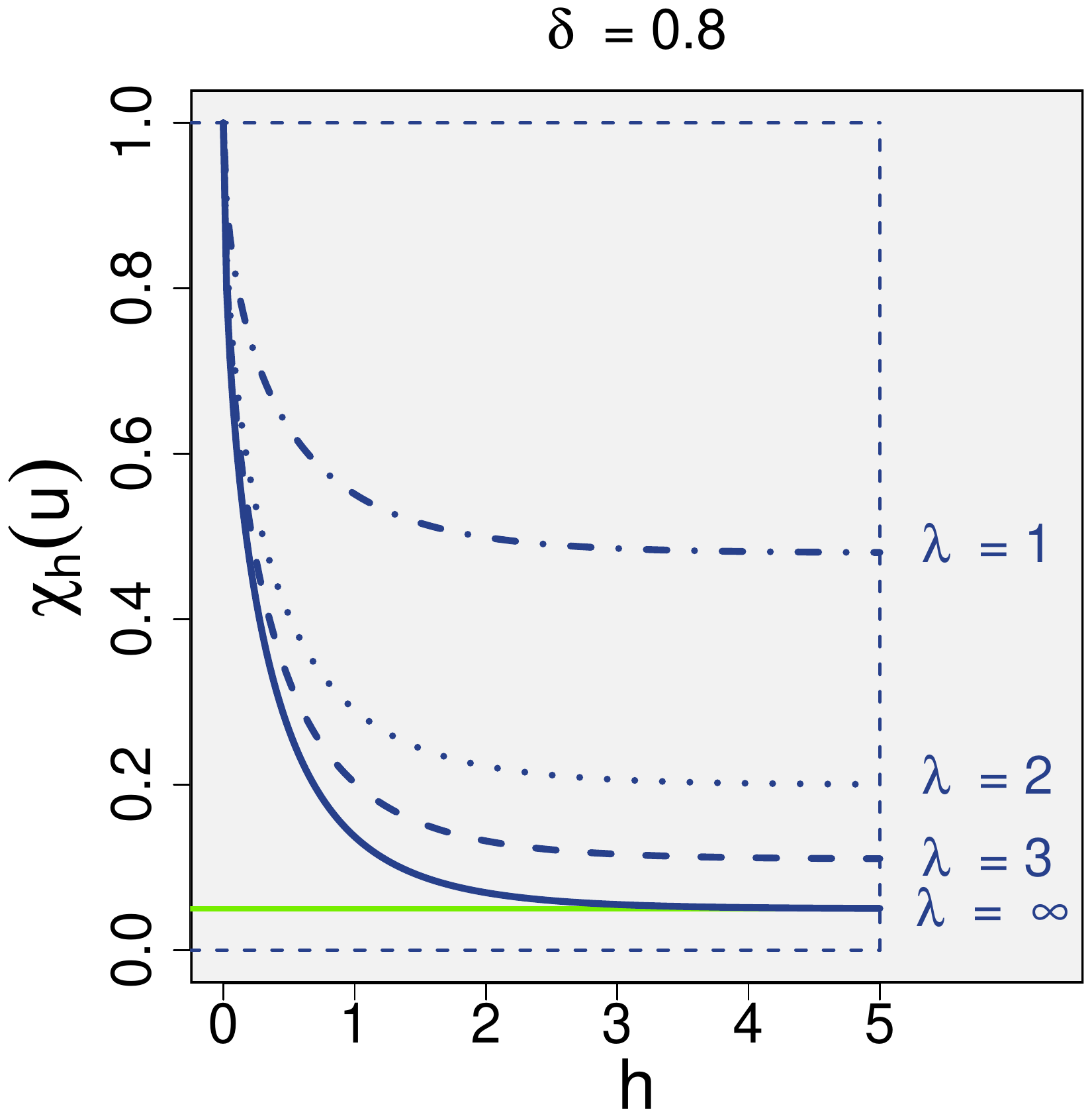}
\caption{\footnotesize Conditional probability $\chi_h(u)$ as defined in~\eqref{eq:chiu} for the stationary exponential factor model with rate parameter $\lambda>0$ and exponential correlation function $\rho(h)=\exp(-h/\delta)$, plotted with respect to the level $u\in[0.95,1]$ at distance $0.1$ (top) and with respect to the distance $h\in[0,5]$ for fixed $u=0.95$ (bottom). Left: fixed $\lambda=3$ and $\delta=0,0.4,0.8,1.5$. Right: fixed $\delta=0.8$ and $\lambda=1,2,3,\infty$. The dots in the top panels correspond to the limit $\chi_h=\lim_{u\to1}\chi_h(u)$, and the green horizontal lines in the bottom panels correspond to the complete independence case.}    
\label{fig:chiuStatFCM}
\end{figure}
In particular, because the random factor $V$ in \eqref{LocalModel} is common to all sites $\mathbf{s}\in\mathcal{S}$, spatial dependence in the $W$ process does not vanish as $h\to\infty$. Therefore, this stationary model may be useful for (replicated) spatial data collected on a local or regional scale, but it is generally not realistic on larger scales, such as the whole contiguous U.S. In Section~\ref{sec:Inference}, {we develop a local estimation approach assuming that the stationary model~\eqref{LocalModel} is only valid locally, and in Section~\ref{ExpNonStatLocalModel} we discuss} a flexible extension to {generate} non-stationary tail dependencies {structures}. As shown by \citet{krupskii2018factor}, the distribution of the random factor characterizes the tail properties of the copula~\eqref{CopulaModel}; in particular, the stationary exponential factor copula model is tail-dependent (i.e., $\chi_h>0$) for fixed $\lambda>0$, and its upper-tail dependence structure is governed the \citet{husler1989maxima} copula, which has been widely used for multivariate and spatial extremes \citep{davison2012statistical}. Moreover, as $\lambda\to\infty$, Model~\eqref{LocalModel} boils down to the Gaussian copula, which is tail-independent (i.e., $\chi_h=0$). Conversely, as $\lambda\to0$, the $W$ process is perfectly dependent over space. Thus, the exponential factor copula model interpolates between tail independence as $\lambda\to\infty$ and perfect dependence as $\lambda\to0$, while capturing a wide range of non-trivial tail dependence structures for $\lambda\in(0,\infty)$. 

We now provide more detailed information on the sub-asymptotic behavior of Model \eqref{LocalModel}, refining the description of its tail structure. The rate at which $\chi_h(u)$ converges to $\chi_h$, as $u\to1$, characterizes the flexibility of the process to capture sub-asymptotic extremal dependence. In practice, this is important, as the model will always be fitted at a finite threshold. Proposition~\ref{prop2} shows that the parameter $\lambda$ regulates this rate of convergence; the smaller $\lambda$, the faster the convergence, and vice versa.  We defer the proof to Appendix~\ref{tailprop}.

\begin{proposition}\label{prop2}
Consider Model \eqref{LocalModel} with rate $0<\lambda<\infty$. {We have} the expansion
\begin{eqnarray*}
\chi_h(u)-\chi_h=-f(u) + (2-\chi_h)\left\{\lambda s(u) + k(u)\right\}\{1+o(1)\},\qquad u\to1,
\end{eqnarray*}
where $f(u)\sim {2\phi\{z(u)\}\over (1-u)z(u)}$, $s(u)\sim{\phi\{z(u)\}\over(1-u)z(u)\{z(u)-\lambda\}}$ and, for $0<\rho(h)<1$, $k(u)\sim{2\phi\{z(u)-\lambda\}\over(2-\chi_h)\{z(u)-\lambda\}}$, with $z(u)={(F_1^{\mathbf{W}}})^{-1}(u;\lambda)\sim -\lambda^{-1}\log(1-u)+\lambda/2\{1+o(1)\}$ such that in the limit as $u\to1$,
\begin{eqnarray*}
\chi_h(u)-\chi_h=-f(u)\{1+o(1)\}.
\end{eqnarray*}
\end{proposition}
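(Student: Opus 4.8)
\medskip
\noindent\textit{Proof sketch.}
The plan is to reduce the bivariate conditional probability to a ratio of one-dimensional integrals by conditioning on the common factor $V$, and then to expand this ratio as $u\to1$ --- equivalently, as the marginal quantile $z=z(u)\to\infty$ --- using Mills-ratio asymptotics for the univariate Gaussian tail together with the classical tail expansion of the bivariate Gaussian survivor function. First, write $z=z(u)=(F_1^{\mathbf{W}})^{-1}(u;\lambda)$, so that $\pr(W_2>z)=1-u$, and condition on $V$ in $W_j=Z_j+V$; using independence of $V$ and $\mathbf Z$ in~\eqref{LocalModel},
\begin{equation*}
\pr(W_1>z,W_2>z)=\lambda\int_0^\infty\bar\Phi_2\{z-v,z-v;\rho(h)\}\,e^{-\lambda v}\,\text{d}v,\qquad 1-u=\lambda\int_0^\infty\bar\Phi(z-v)\,e^{-\lambda v}\,\text{d}v,
\end{equation*}
where $\bar\Phi$ and $\bar\Phi_2(\cdot,\cdot;\rho)$ denote the univariate and bivariate standard Gaussian survivor functions. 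The substitution $t=z-v$ turns each side into $\lambda e^{-\lambda z}$ times an integral $\int_{-\infty}^z(\cdot)\,e^{\lambda t}\,\text{d}t$, so the prefactor cancels in the ratio and $\chi_h(u)=N(z)/D(z)$, with $N(z)=\int_{-\infty}^z\bar\Phi_2\{t,t;\rho(h)\}e^{\lambda t}\,\text{d}t$ and $D(z)=\int_{-\infty}^z\bar\Phi(t)e^{\lambda t}\,\text{d}t$. An inclusion--exclusion identity gives the equivalent form $\chi_h(u)=2-\pr\{\max(W_1,W_2)>z\}/(1-u)$; since $\max(W_1,W_2)=\max(Z_1,Z_2)+V$, the same conditioning argument applies, and it is this representation that naturally produces the factor $2-\chi_h$ in the statement.

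Next, I would identify the two limiting ingredients. Because the Gaussian integrands decay faster than $e^{\lambda t}$ grows, $D(z)\uparrow D_\infty=\lambda^{-1}e^{\lambda^2/2}$ and $N(z)\uparrow N_\infty<\infty$ as $z\to\infty$, so $\chi_h=N_\infty/D_\infty>0$, consistent with the H\"usler--Reiss limit identified by~\citet{krupskii2018factor}. From the marginal distribution function~\eqref{cdfW1}, $1-u=\bar\Phi(z)+e^{\lambda^2/2-\lambda z}\Phi(z-\lambda)$; inverting this with the Mills expansion gives $z(u)=-\lambda^{-1}\log(1-u)+\lambda/2+o(1)$ and the exact relation $D(z)=(1-u)e^{\lambda z}/\lambda$, which I would use to eliminate $1-u$. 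Setting $\tilde D(z)=D_\infty-D(z)=\int_z^\infty\bar\Phi(t)e^{\lambda t}\,\text{d}t$ and $\tilde N(z)=N_\infty-N(z)$, elementary algebra yields
\begin{equation*}
\chi_h(u)-\chi_h=\frac{\chi_h\tilde D(z)-\tilde N(z)}{D(z)}=\frac{\lambda\{\chi_h\tilde D(z)-\tilde N(z)\}}{(1-u)e^{\lambda z}},
\end{equation*}
so everything reduces to the asymptotics of these two tail remainders.

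The substantive step is to expand $\tilde D(z)$ and $\tilde N(z)$ as $z\to\infty$. For $\tilde D(z)$ I would use $\phi(t)e^{\lambda t}=e^{\lambda^2/2}\phi(t-\lambda)$ together with the Mills-ratio expansion of $\bar\Phi$, followed by a Watson-type boundary-Laplace estimate of $\int_z^\infty$. For $\tilde N(z)$, when $0<\rho(h)<1$ I would insert the classical bivariate Gaussian tail expansion $\bar\Phi_2(t,t;\rho)\sim\{(1+\rho)^{3/2}/(2\pi\sqrt{1-\rho})\}\,t^{-2}e^{-t^2/(1+\rho)}$ and again integrate by Laplace's method at the endpoint; the case $\rho(h)=1$ is degenerate, since then $\bar\Phi_2(t,t;1)=\bar\Phi(t)$ and $\chi_h=1$, and is treated directly. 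Substituting these expansions back and collecting terms identifies $f(u)$, $s(u)$ and $k(u)$ as the leading parts of the resulting pieces, produces the displayed expansion, and --- comparing orders --- isolates the single dominant contribution stated last.

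The step I expect to be the main obstacle is precisely this final bookkeeping: the leading contributions of the several pieces are of the same order and cancel to a large extent, so the expansions of $\tilde D(z)$, $\tilde N(z)$ and of $z(u)$ (the latter entering through $(1-u)e^{\lambda z}$) each have to be pushed one order beyond their crude leading terms, with the bivariate-Gaussian tail constant kept exact; some care is also needed to make the boundary-Laplace estimates uniform enough for a rigorous argument and to thread through the distinction between $0<\rho(h)<1$, where $k(u)$ is present, and $\rho(h)=1$.
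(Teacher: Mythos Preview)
Your framework is sound and your identities $D(z)=(1-u)e^{\lambda z}/\lambda$ and $\chi_h(u)-\chi_h=\{\chi_h\tilde D(z)-\tilde N(z)\}/D(z)$ are correct, but the route is genuinely different from the paper's and considerably heavier. The paper does \emph{not} keep the integral over $V$: Appendix~\ref{simplified} first derives a closed form~\eqref{eq:multivariatecdf} for $F_D^{\mathbf W}$ as a finite combination of Gaussian distribution functions, and for $D=2$ this gives, with $A=\lambda\sqrt{\{1-\rho(h)\}/2}$,
\[
C_2^{\mathbf W}(u,u)=\Phi_2\{z,z;\mathbf\Sigma_{\mathbf Z}\}-2e^{\lambda^2/2-\lambda z}\,\Phi_2\bigl(A,\,z-\lambda;\tilde{\mathbf\Omega}\bigr),
\]
so that $\chi_h(u)=2-f(u)-g(u)h(u)$ with $f$, $g$, $h$ explicit Gaussian quantities. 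The factor $2-\chi_h=2\Phi(A)$ then appears as the limit of $h(u)$, not from inclusion--exclusion, and $k(u)$ is \emph{defined} as $\{\Phi(A)-\Phi_2(A,z-\lambda;\tilde{\mathbf\Omega})\}/\Phi(A)$, a bivariate normal probability you would never meet from Laplace asymptotics on $\tilde N(z)$. After this, the proof is just first-order Mills-ratio expansions of $f$, $g$, $h$ together with the two preliminary lemmas on the marginal distribution and quantile; no Watson/Laplace machinery or higher-order bookkeeping is needed.

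Your integral approach can certainly deliver \emph{an} expansion of $\chi_h(u)-\chi_h$ --- indeed, since $\bar\Phi_2(t,t;\rho)$ decays like $e^{-t^2/(1+\rho)}$ you get $\tilde N(z)=o\{\tilde D(z)\}$ for $0<\rho(h)<1$, which already isolates the rate --- but its natural output is the decomposition $\chi_h\tilde D/D-\tilde N/D$, not the one in the statement. Matching your pieces to the specific $f$, $s$, $k$ of the proposition would in effect force you to rederive the closed form above. The shortest fix is therefore to evaluate your $N(z)$ and $D(z)$ exactly via $\Pr(W_1>z,W_2>z)=1-2u+C_2^{\mathbf W}(u,u)$ and the closed bivariate formula~\eqref{eq:multivariatecdf}; the integrals then vanish, the anticipated cancellations become transparent, and the proposition's decomposition drops out directly.
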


{For model~\eqref{ExpStatLocalModel}, we have $\chi_{h}=2\{1 - \Phi(\sqrt{\gamma_h}/2)\}$, where $\gamma_h=2\lambda^2\{1-\rho(h)\}$; see~\cite{krupskii2018factor} and Section 1 of the Supplementary Material}. The {measures $\chi_{h}(u)$ and $\chi_h$} summarize the effects of all dependence parameters, and we use them in our simulation experiments (Section~\ref{Simulation}) and the data application (Section~\ref{DataApp}) to study the performance of our model and to assess its ability to flexibly capture different levels of extremal dependence.

Other types of tail dependence structures can be obtained using alternative distributions for the random factor $V$ (see \citealp{krupskii2018factor}), but we here restrict ourselves to the exponential factor copula model, which yields flexible tail structures and fast inference.

% =========================
% =========================
% ======= INFERENCE =======
% =========================
% =========================
\section{Local likelihood inference with partial censoring}\label{sec:Inference}
The assumption of stationarity underlying \eqref{LocalModel} is unrealistic over large heterogeneous regions, such as the whole contiguous U.S., but it may be the starting point for more sophisticated models; in particular, the true precipitation data generating process might be approximately stationary in small regions, providing support for non-stationary, but locally stationary, models. We here assume that Model \eqref{LocalModel} provides a good approximation to the local tail dependence structure while stemming from a more complex global data generating process (e.g., such as the one used in our simulation study in Section~\ref{ExpNonStatLocalModel}).

Local likelihood estimation for univariate threshold exceedances was proposed by \citet{Davison.Ramesh:2000}, while \citet{anderes2011local} investigated how such an approach may be applied in the spatial context based on a single realization from a Gaussian process. \cite{risser2015local} {used {a} local likelihood {approach} to estimate the spatially-varying parameters of a non-stationary Gaussian process, modeling locally-varying anisotropies using an approach similar to the discrete mixture kernel convolution of~\cite{higdon1998process}}. We here detail how to perform local likelihood estimation based on high threshold exceedances by adapting the methodology {of} \citet{anderes2011local} to the joint upper tail of the stationary copula model~\eqref{LocalModel}, under the assumption that it is valid in small regional neighborhoods. {Unlike~\cite{anderes2011local}}, we assume that multiple replicates of the process are observed, with possibly arbitrary marginal distributions.

Since our focus is on the data's {tail} dependence structure, we advocate a two-step semi-parametric estimation procedure, whereby margins are first estimated at each site separately using the empirical distribution function, and the copula model is then estimated locally in a second step by maximum likelihood, censoring low (i.e., non-extreme) values to prevent them from influencing the fit. Such a two-step approach is standard in the copula literature, and has been studied in depth; see, e.g., \citet{Genest.etal:1995} and \citet{joe2014dependence}, Chapter 5. 

More complex parametric approaches are also possible to estimate margins {in the first step}. Although marginal estimation is not our primary target, we studied the effect of the rank transformation {on the estimation of the} copula, by comparing it to {an approach based on} the generalized Pareto distribution, and the {optimal scenario where} marginal distributions {are known exactly}. We found that the use of the rank-based empirical distribution function provides a {fast}, robust, and reliable method, which does not affect much the subsequent estimation of the copula (see {{Section 2.1}} {of} the Supplementary Material).

The first step of our proposed estimation procedure consists in transforming the observed data non-parametrically to the uniform scale. Let $y_{1j},\ldots,y_{Nj}$ denote $N$ independent and identically distributed observations at the $j$th monitoring station, with essentially arbitrary margins. Pseudo-uniform scores may be obtained using ranks as follows:
$$u_{ij}={{\rm rank}(y_{ij})\over N+1},\qquad i=1,\ldots,N.$$
In the second step, the scores $u_{ij}$, $i=1,\ldots,N$, are treated as a perfect random sample from the ${\rm Unif}(0,1)$ distribution. To estimate the spatial copula structure, we then discretize the space $\mathcal S$ (in our case, the whole contiguous U.S.) into a fine grid $\mathcal G\subset\mathcal S$. For each grid point $\mathbf{s}_0\in\mathcal{G}$, we assume that the local stationary copula model \eqref{LocalModel} with parameters $\boldsymbol{\theta}_0\in \mathbf{\Theta}\subset\mathbb{R}^p$ is valid in a small neighborhood $\mathcal N_{\mathbf{s}_0}\subset\mathcal S$ around $\mathbf{s}_0$. In what follows, these regional neighborhoods will be determined by a small number, $D_0$, of nearest stations from the site $\mathbf{s}_0$, so we will write $\mathcal N_{\mathbf{s}_0}\equiv\mathcal N_{\mathbf{s}_0;D_0}$.
Obviously, the choice of neighborhood is important: the stationary model \eqref{LocalModel} might be a poor approximation for large neighborhoods (i.e., for large $D_0$), whereas model fitting might be cumbersome for small neighborhoods characterized by a small number $D_0$ of stations. This bias-variance trade-off is tricky to deal with, and Section~\ref{DataApp} describes one possible approach to mitigate this issue.

To estimate the local tail dependence structure, we suggest using a censored likelihood approach, which is standard in statistics for spatial extremes, though it has never been applied to Model \eqref{LocalModel}; see, e.g., \citet{Ledford.Tawn:1996}, \citet{Thibaud.etal:2013}, \citet{huser2014space}, \citet{WadsworthTawn14}, \citet{ThibaudOpitz15} and \citet{Huser.etal:2016}. Essentially, vectors with non-extreme observations (i.e., lower than a high threshold) are partially or fully censored to prevent these low values from affecting the estimation of the extremal dependence structure. More precisely, for each grid point $\mathbf{s}_0\in\mathcal{G}$ with associated neighborhood $\mathcal N_{\mathbf{s}_0;D_0}$, let $\mathbf{u}^0_j=(u_{1j}^0,\ldots,u_{Nj}^0)^T$, $j=1,\ldots,D_0$, denote the pseudo-uniform scores for each of the nearby stations in $\mathcal N_{\mathbf{s}_0;D_0}$, and let $u^\star_j$, $j=1,\ldots,D_0$, be high marginal thresholds; in our application in Section~\ref{DataApp}, we take $u^\star_j=0.80$ for all $j$. Using the notation introduced in \eqref{cdfW}--\eqref{cdfW1}, the censored local log-likelihood may be expressed as
\begin{align}
\ell(\boldsymbol{\theta}_0 \mid \mathcal N_{\mathbf{s}_0;D_0}) 
&= \sum_{i \in \mathcal I_{\rm nc}} \log f_{D_0}^{\mathbf{W}}(w_{i1}^0,\ldots,w_{iD_0}^0;\boldsymbol{\theta}_0) - \sum_{i \in \mathcal I_{\rm nc}}\sum_{j = 1}^{D_0} \log f_1^{\mathbf{W}}(w_{ij}^0;\lambda_0) \label{WLL}\nonumber\\
&+ N_{\rm fc}\times \log F_{D_0}^{\mathbf{W}}(w_{1}^\star,\ldots,w_{D_0}^\star;\boldsymbol{\theta}_0)\nonumber\\
&+  \sum_{i \in \mathcal I_{\rm pc}}\log \partial_{J_i}F_{D_0}^{\mathbf{W}}(\max(w_{i1}^0,w_{1}^\star),\ldots,\max(w_{iD_0}^0,w_{D_0}^\star);\boldsymbol{\theta}_0) - \sum_{i \in \mathcal I_{\rm pc}}\sum_{j\in J_i}\log f_1^{\mathbf{W}}(w_{ij}^0; \lambda_0),
\end{align}
where $w_{ij}^0 = (F_1^{\mathbf{W}})^{-1}(u_{ij}^0; \lambda_0)$ and $w_{j}^\star = (F_1^{\mathbf{W}})^{-1}(u_{j}^\star; \lambda_0)$, $j=1,\ldots,D_0$, $\mathcal I_{\rm nc}=\{i\in\{1,\ldots,N\}: u_{ij}>u_j^\star, \forall j = 1,\ldots, D_0\}$ is the index set of all \emph{non-censored} observations (i.e., all vector components are extreme), $\mathcal I_{\rm fc}=\{i\in\{1,\ldots,N\}: u_{ij}\leq u_j^\star, \forall j = 1,\ldots, D_0\}$ is the index set of all \emph{fully censored} observations (i.e., none of the vector components are extreme) with $N_{\rm fc}=|\mathcal I_{\rm fc}|$, $\mathcal I_{\rm pc}=\{1,\ldots,N\}\setminus\{\mathcal I_{\rm nc}\cup\mathcal I_{\rm fc}\}$ is the index set of all \emph{partially censored} observations (i.e., some, but not all, vector components are extreme), $J_i=\{j\in\{1,\ldots,D_0\}:u_{ij}> u_j^\star\}$ is the index set of threshold exceedances for the $i$th vector of observations, and $\partial_{J_i}$ denotes differentiation with respect to the variables indexed by the set $J_i$. Numerical maximization of \eqref{WLL} yields the maximum likelihood estimates $\hat{\boldsymbol{\theta}}_0$ for location $\textbf{s}_0$. In Appendix~\ref{simplified}, we provide simple expressions for $f_{D_0}^{\mathbf{W}}$, $F_{D_0}^{\mathbf{W}}$ and $\partial_{J_i}F_{D_0}^{\mathbf{W}}$, which involve the $D_0$-variate Gaussian density and the multivariate Gaussian distribution in dimension $D_0$ and $D_0-|J_i|$, respectively. When $D_0$ is large, the computation of the multivariate Gaussian distribution can significantly slow down the estimation procedure. However, thanks to our local approach, $D_0$ is typically quite small, and this allows to estimate the model at a reasonable computational cost. Furthermore, because the likelihood maximizations can be done independently at each grid point $\mathbf{s}_0\in\mathcal{G}$, we can easily take advantage of distributed computing resources to perform each fit in parallel.

Following \citet{anderes2011local}, it {is} possible to generalize \eqref{WLL} to obtain smoother parameter estimates over space. Specifically, {we can} instead maximize the weighted log-likelihood function defined as
\begin{align}
\label{WLL2}
\ell_\omega(\boldsymbol{\theta}_0 \mid \mathcal N_{\mathbf{s}_0;D_0})  &= \sum_{j = 1}^{D_0} \omega(\|\mathbf{s}_j-\mathbf{s}_0\|)\left\{\ell(\boldsymbol{\theta}_0 | \mathcal N_{\mathbf{s}_0;j})  - \ell(\boldsymbol{\theta}_0 | \mathcal N_{\mathbf{s}_0;j-1})\right\},
\end{align}
where $\omega(h)\geq0$ is a non-negative weight function, and $\{\mathcal N_{\mathbf{s}_0;j};j=0,\ldots,D_0\}$, denotes the nested sequence of subsets comprising the first $j$ nearest neighbors of $\mathbf{s}_0$, with the convention that $\mathcal N_{\mathbf{s}_0;0}=\emptyset$ and $\ell(\boldsymbol{\theta}_0 | \emptyset)=0$. Choosing hard-thresholding weights with $\omega(\|\mathbf{s}_j-\mathbf{s}_0\|)=1$ for all $\mathbf{s}_j\in \mathcal N_{\mathbf{s}_0;D_0}$ boils down to \eqref{WLL}. However, smoother parameter estimates may be obtained by selecting soft-thresholding weights that smoothly decay to zero near the neighborhood boundaries, e.g., using the biweight function $\omega(h)=(1-(h/\tau_0)^2)_+^2$, for some bandwidth $\tau_0>0$. Although this estimation approach seems quite appealing, it significantly increases the computational burden, since the weighted log-likelihood function \eqref{WLL2} requires computing \eqref{WLL} in dimensions $1,\ldots,D_0$ instead of just once in dimension $D_0$. {In {Section 2.2 of} the Supplementary Material, we compare hard- and soft-thesholding weights in a simulation study, and do not notice any major difference in the results. This may be explained by the fact that, unlike \cite{anderes2011local}, multiple time replicates are available to fit the model}. For {these} reasons, and because the censored likelihood procedure is already quite intensive, we do not pursue this approach further in this paper. Subsequent simulation and real data experiments are all based on \eqref{WLL}.

{Remark that}, as mentioned above, we assume that Model \eqref{LocalModel} provides a good approximation to the local tail dependence structure while stemming from a more complex global data generating process. To make sure that the local stationary model \eqref{LocalModel} can truly come from a well-defined global stochastic process, {we describe in Section~\ref{ExpNonStatLocalModel}} one possible way to embed the model \eqref{LocalModel} into a non-stationary process, which we use in our simulations.

%==========================
%==========================
%======= SIMULATION =======
%==========================
%==========================
\section{Simulation study}\label{Simulation}

\subsection{Goals of our simulation experiments}
In this section, we study the flexibility of our copula model \eqref{LocalModel} to locally describe complex extremal dynamics across space, and we assess the performance of our {censored} local estimation approach based on \eqref{WLL} at capturing such non-stationary dependence structures. We also analyze the sensitivity of the parameter estimates to the neighborhood size {$D_0$}, {and the sensitivity of a profile likelihood procedure to select the smoothness parameter {$\nu$}}.

\subsection{Data-generating scheme and simulation scenarios}\label{ExpNonStatLocalModel}
{We} first describe how we can extend the stationary exponential factor copula model~\eqref{LocalModel} to a non-stationary process that we use in our simulations. Further details and asymptotic results are provided in Section 1 {of} the Supplementary Material. Our proposed model extension, equivalent to \eqref{LocalModel} on infinitesimal regions, is to consider the process
\begin{align}\label{GlobalModel}
W(\mathbf{s}) &= Z(\mathbf{s}) + \lambda_{\mathbf{s}}^{-1}E,\qquad \mathbf{s}\in\mathcal{S},
\end{align}
where $Z(\mathbf{s})$ is a zero mean Gaussian process with non-stationary correlation function $\rho(\mathbf{s}_1,\mathbf{s}_2)$ and $E$ is a standard exponentially distributed common factor, independent of $Z(\mathbf{s})$. The rate parameter $\lambda_{\mathbf{s}}>0$, $\mathbf{s}\in\mathcal{S}$, is assumed to be a smooth surface, which dictates different degrees of tail dependence in distinct regions. {The model in~\eqref{GlobalModel} is also tail-dependent and the limit $\chi_{12} \equiv \chi_{12}(\mathbf{s}_1, \mathbf{s}_2)=\lim_{u\to1}\chi_{12}(u)$, with $\chi_{12}(u)$ defined analogously to~\eqref{eq:chiu} but in the non-stationary context, is $\chi_{12}=2\{1-\Phi(\sqrt{\gamma_{12}}/2)\}$ with $\gamma_{12} = \lambda_{\mathbf{s}_1}^2 -2\rho(\mathbf{s}_1,\mathbf{s}_2)\lambda_{\mathbf{s}_1}\lambda_{\mathbf{s}_2}+\lambda_{\mathbf{s}_2}^2$.} Although \eqref{GlobalModel} may not realistically capture long-distance dependencies owing to the latent factor $E$ being constant over space, its spatially-varying parameters describe the local dependence structure more flexibly; Model \eqref{GlobalModel} is therefore useful to ``think'' about the results  \emph{globally}, while generating various forms of extremal dependence \emph{locally} (or \emph{regionally}).

In the literature, different non-stationary correlation functions $\rho(\mathbf{s}_1,\mathbf{s}_2)$ have been proposed; see, e.g., \citet{Fuentes:2001}, \citet{Nychka.etal:2002}, \citet{Stein:2005b}, and \citet{paciorek2006spatial}. Here, we focus on a non-stationary, locally isotropic, Mat\'ern correlation function with constant smoothness parameter $\nu>0$, constructed through the kernel convolution approach advocated by \citet{paciorek2006spatial}. {Specifically, we choose}
\begin{align}\label{corr}
\rho(\mathbf{s}_1,\mathbf{s}_2) &= \frac{2^{2-\nu}\delta_{\mathbf{s}_1} \delta_{\mathbf{s}_2}}{\Gamma(\nu)(\delta_{\mathbf{s}_1}^2 + \delta_{\mathbf{s}_2}^2)}\mathcal{K}_\nu\left(\frac{2\sqrt{2\nu}}{\sqrt{\delta_{\mathbf{s}_1}^2+\delta_{\mathbf{s}_2}^2}}\|\mathbf{s}_1 - \mathbf{s}_2\|\right),
\end{align}
where $\delta_{\mathbf{s}}>0$, $\mathbf{s}\in\mathcal{S}$, is a smoothly varying range parameter, $\Gamma(\cdot)$ is the Gamma function and $\mathcal{K}_{\nu}$ is the modified Bessel function of second kind of order $\nu$. The stationary Mat\'ern correlation function (obtained by setting $\delta_{\mathbf{s}}\equiv\delta$ for all $\mathbf{s}\in\mathcal{S}$) has become popular because of its appealing properties \citep{Stein:1999}. In particular, a Gaussian process with Mat\'ern correlation function is $m$ times mean-square differentiable if and only if $\nu>m$. For $\nu=0.5$, it boils down to the exponential correlation function, which yields continuous but non-differentiable sample paths. As $\nu\to\infty$, sample paths are infinitely differentiable. The non-stationary correlation function defined in \eqref{corr} is locally Mat\'ern, and therefore it inherits these attractive properties. An extension of \eqref{corr} allowing for varying degrees of smoothness over space has been proposed in the unpublished manuscript of \citet{Stein:2005b} (see also \citeauthor{anderes2011local},~\citeyear{anderes2011local}), but in practice, estimating $\nu$ is cumbersome and conservative approaches are usually adopted. 

We simulated data on a $25\times25$ grid in $\mathcal S=[1,10]^2$ from the copula model stemming from \eqref{GlobalModel} based on the non-stationary Mat\'ern correlation function \eqref{corr}; $500$ independent replicates were generated. We chose three different levels of smoothness by fixing $\nu=0.5,1.5,2.5$, and we considered three scenarios for the range $\delta_{\mathbf s}$ and the rate $\lambda_{\mathbf s}$ parameters, representing various levels of non-stationarity in the tail behavior{; see Figure~\ref{SimSce.pdf}}. The true parameter values and the {bivariate $\chi_{12}$ measure} with respect to three different reference points are shown in Figure~\ref{SimSce.pdf} for the weakly, mildly and strongly non-stationary scenarios. In all simulations, the smoothness parameter $\nu$ was held fixed, while the rate $\lambda_{\mathbf{s}}$ and range $\delta_{\mathbf{s}}$ parameters were estimated on a $10\times 10$ grid $\mathcal G\subset[1,10]^2$ using the local estimation approach with censoring thresholds $u_j^\star=0.95$ for all $j$, as described in Section~\ref{sec:Inference}. Thus, there were $500\times0.05=25$ exceedances per location. To compute performance metrics and to measure the uncertainty of estimated parameters, we replicated all simulation experiments $1000$ times.

\begin{figure}[t!]
    	\begin{tabular}{ccc}
		\vspace{0.3cm}
    		\footnotesize \rotatebox{90}{\textbf{\hspace{-1.2cm}{\color{white}{aa}}Log-rate}}
		\hspace{-1cm}
    		\begin{minipage}[c]{0.5\linewidth}
			\begin{center}
				{\hspace{-2 cm}\footnotesize \textbf{Weakly non-stat.}}
			\end{center}
			\vspace{-0.5cm}
    			\includegraphics[scale=0.23]{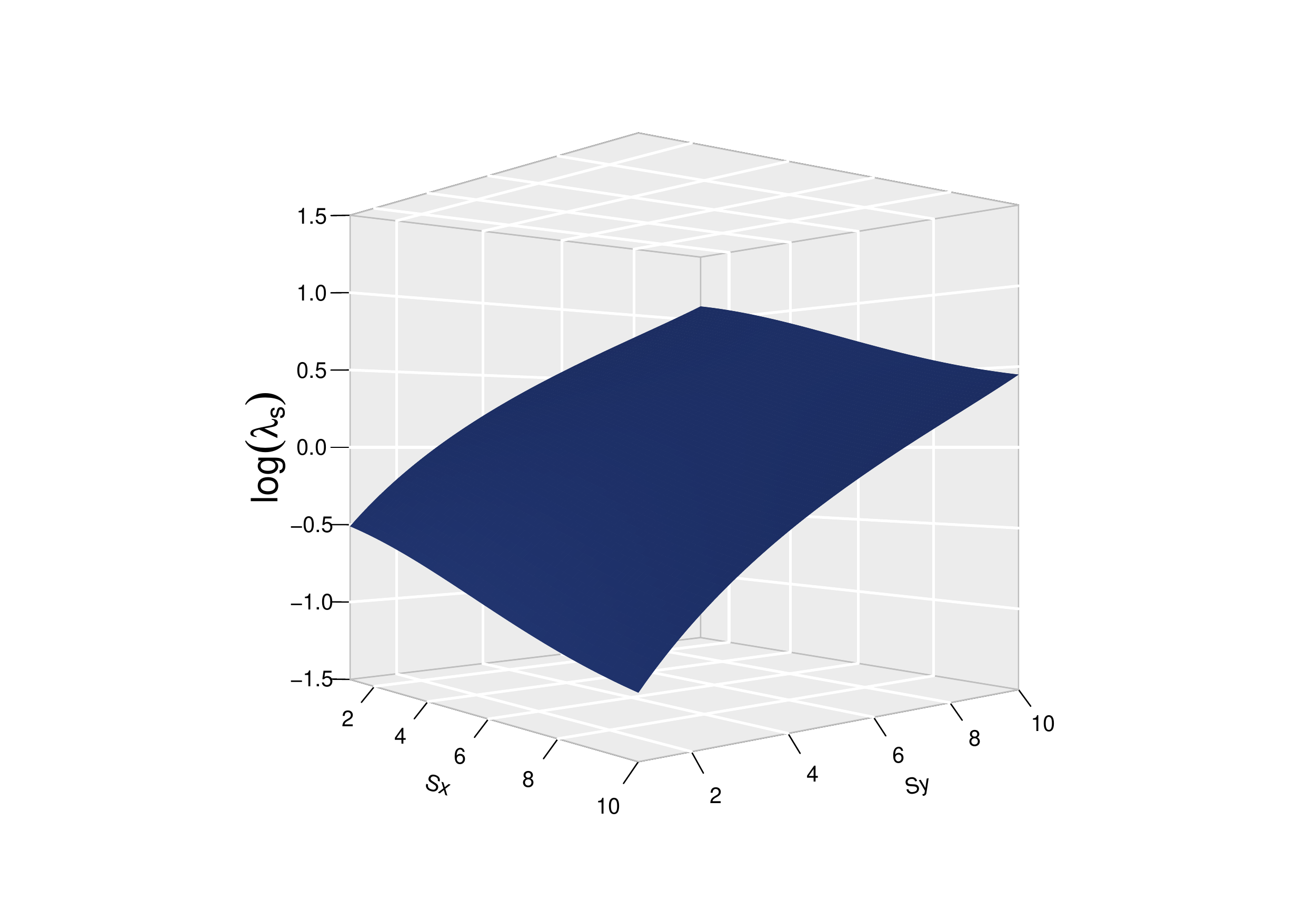}
      			\vspace{-.3cm}
     		\end{minipage}&
     		\hspace{-5cm}
     		\begin{minipage}[c]{0.5\linewidth}
			\begin{center}
				{\hspace{-2 cm}\footnotesize \textbf{Mildly non-stat.}}
			\end{center}
			\vspace{-0.5cm}
     			\includegraphics[scale=0.23]{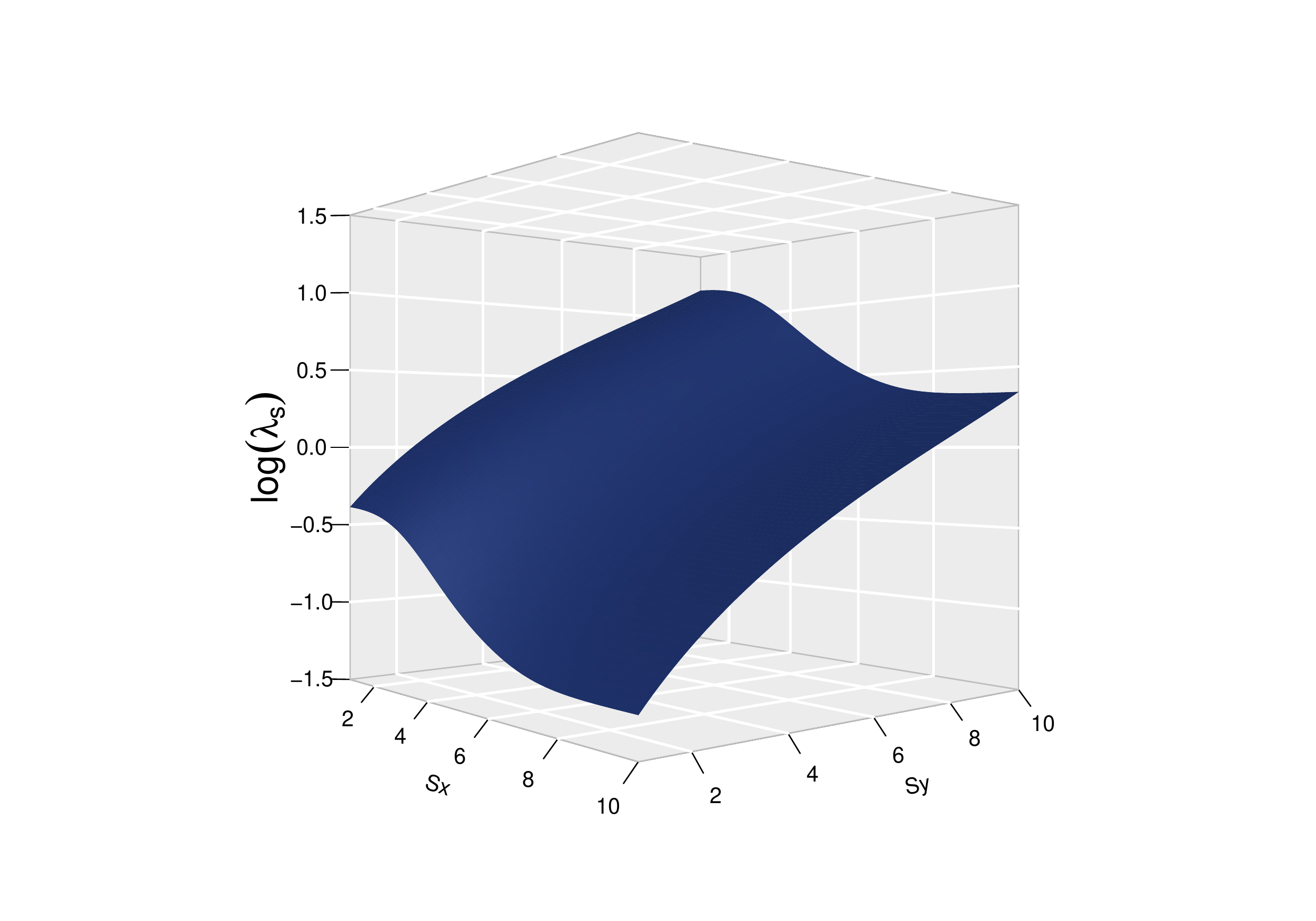}
    			 \vspace{-.3cm}
     		\end{minipage}
           	\hspace{-4.6cm}
           	\begin{minipage}[c]{0.5\linewidth}
			\begin{center}
				{\hspace{-2 cm}\footnotesize \textbf{Strongly non-stat.}}
			\end{center}
			\vspace{-0.5cm}
           		\includegraphics[scale=0.23]{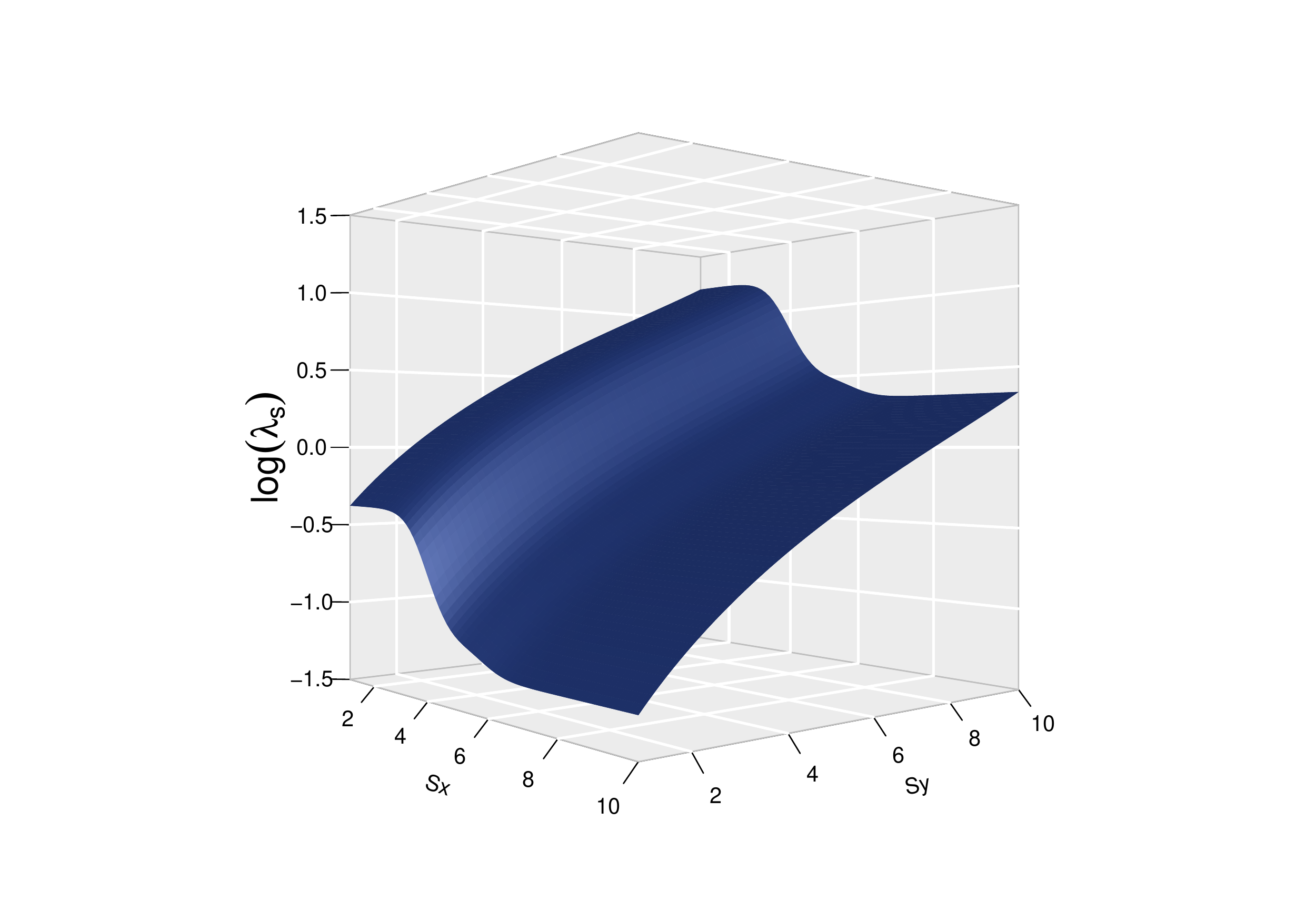}
                		\vspace{-.3cm}
           	\end{minipage}
		\hspace{-3.7cm}
           	\begin{minipage}[c]{0.5\linewidth}
			\begin{center}
				{\hspace{-2 cm}\footnotesize \textbf{Estimates (mild case)}}
			\end{center}
			\vspace{-0.5cm}
           		\includegraphics[scale=0.23]{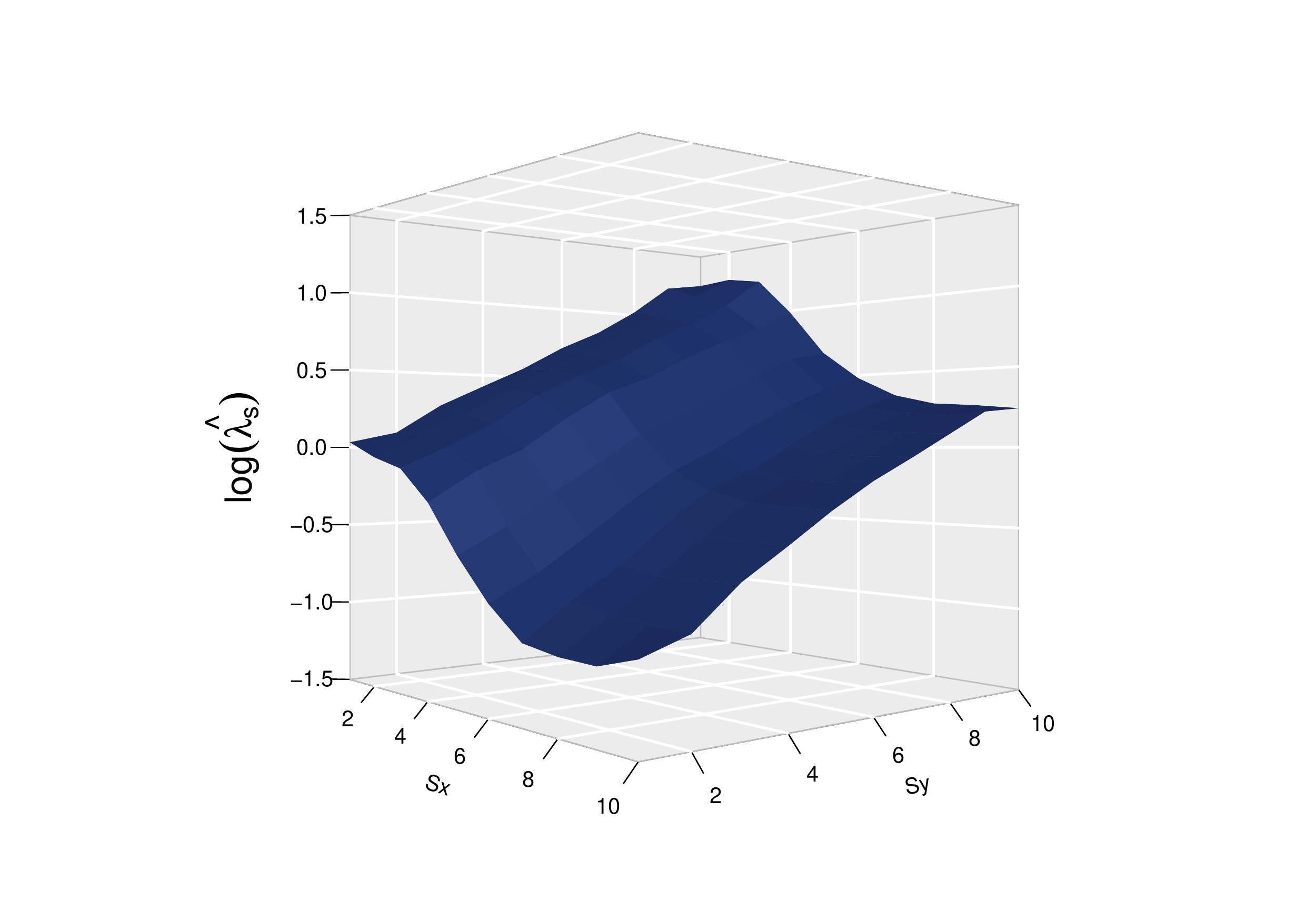}
                		\vspace{-.3cm}
           	\end{minipage}\\
		\vspace{0.5cm}
      		\footnotesize \rotatebox{90}{\textbf{\hspace{-1.2cm}{\color{white}{aaaa}}Range}}
		\hspace{-1cm}
      		\begin{minipage}[c]{0.5\linewidth}
        			\includegraphics[scale=0.23]{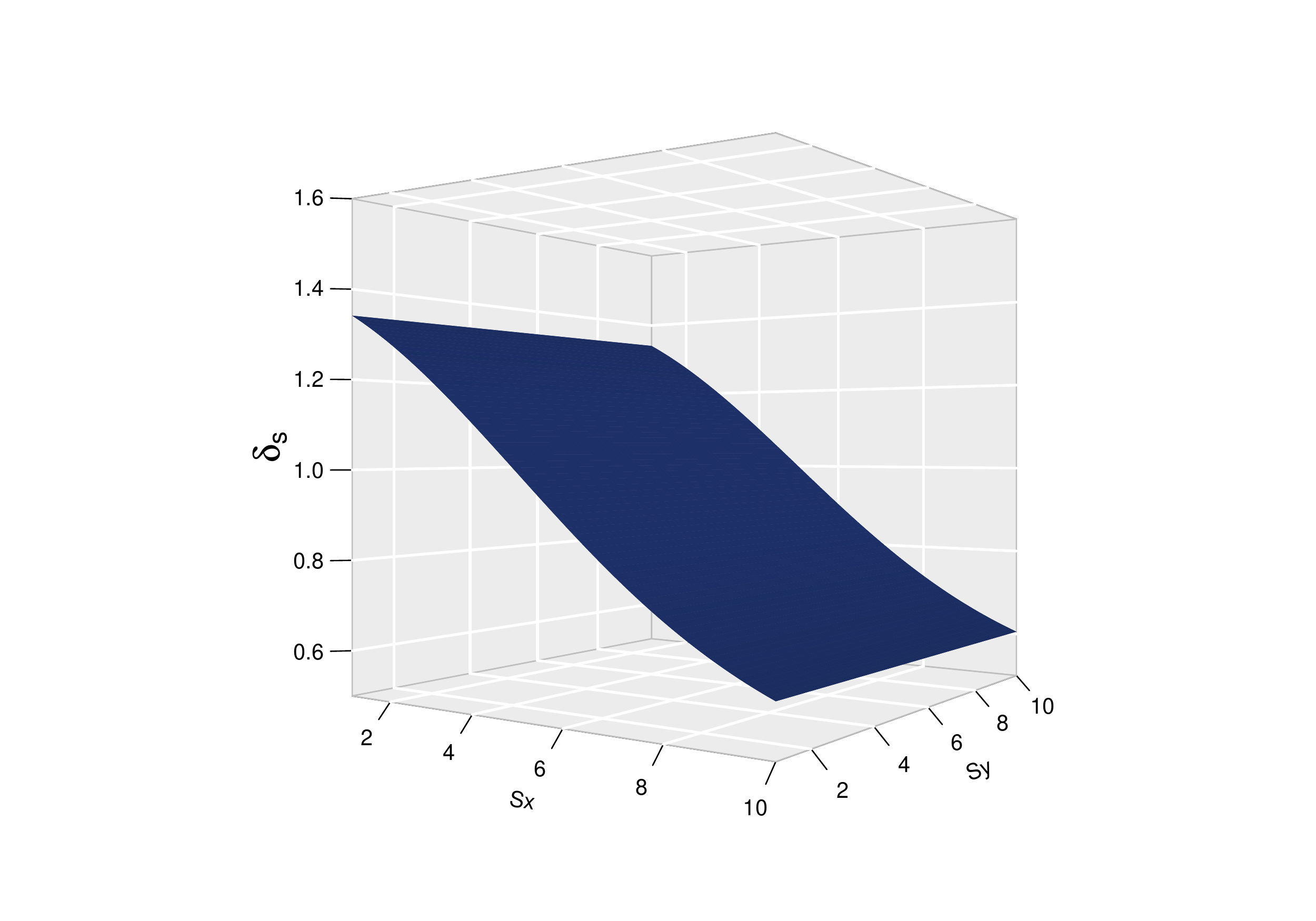}
        			\vspace{-.3cm}
      		\end{minipage}&
          	\hspace{-5cm}
          	\begin{minipage}[c]{0.5\linewidth}
          		\includegraphics[scale=0.23]{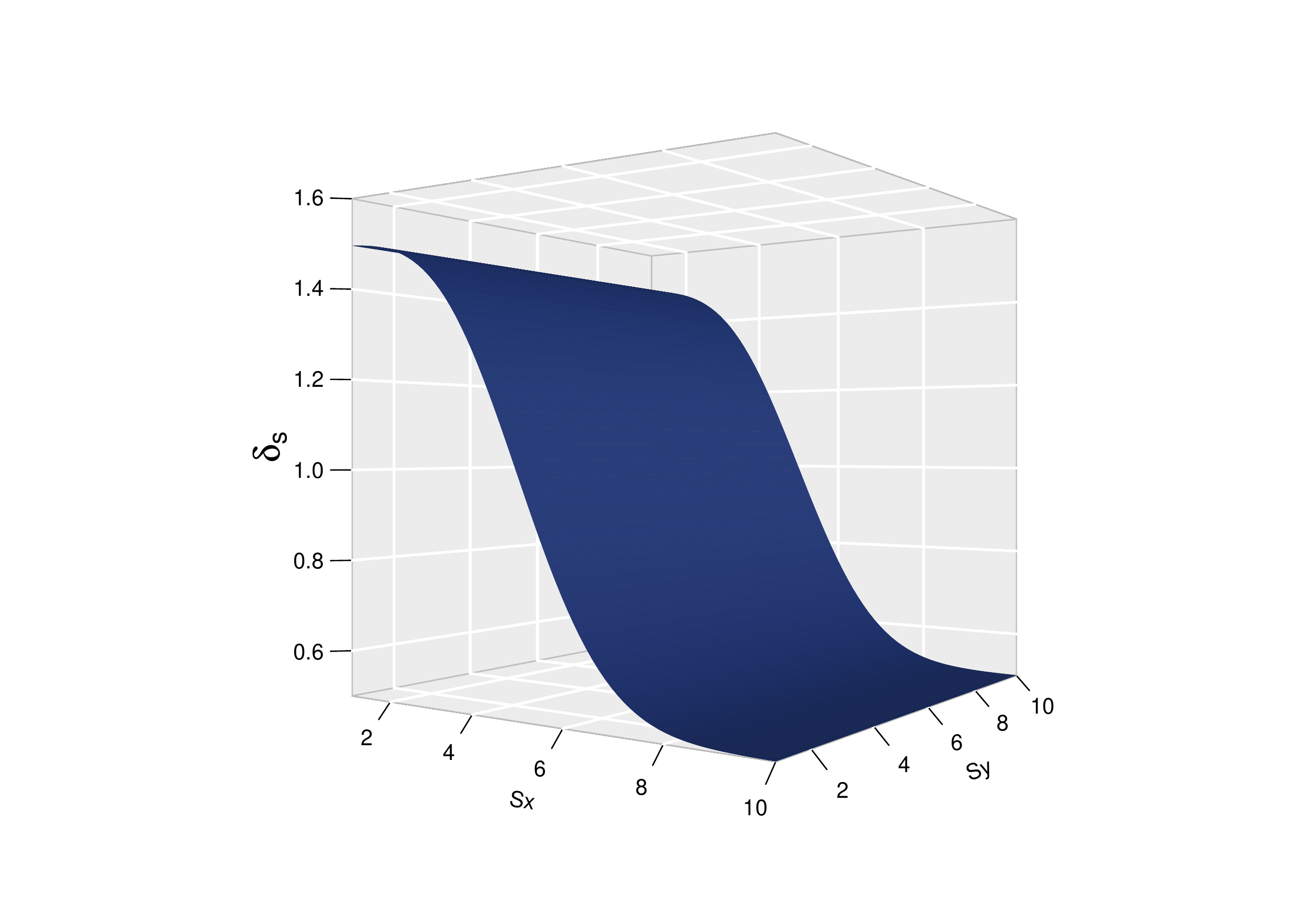}
                		\vspace{-.3cm}
                \end{minipage}
                \vspace{-.3cm}
                \hspace{-4.6cm}
                \begin{minipage}[c]{0.5\linewidth}
                		\includegraphics[scale=0.23]{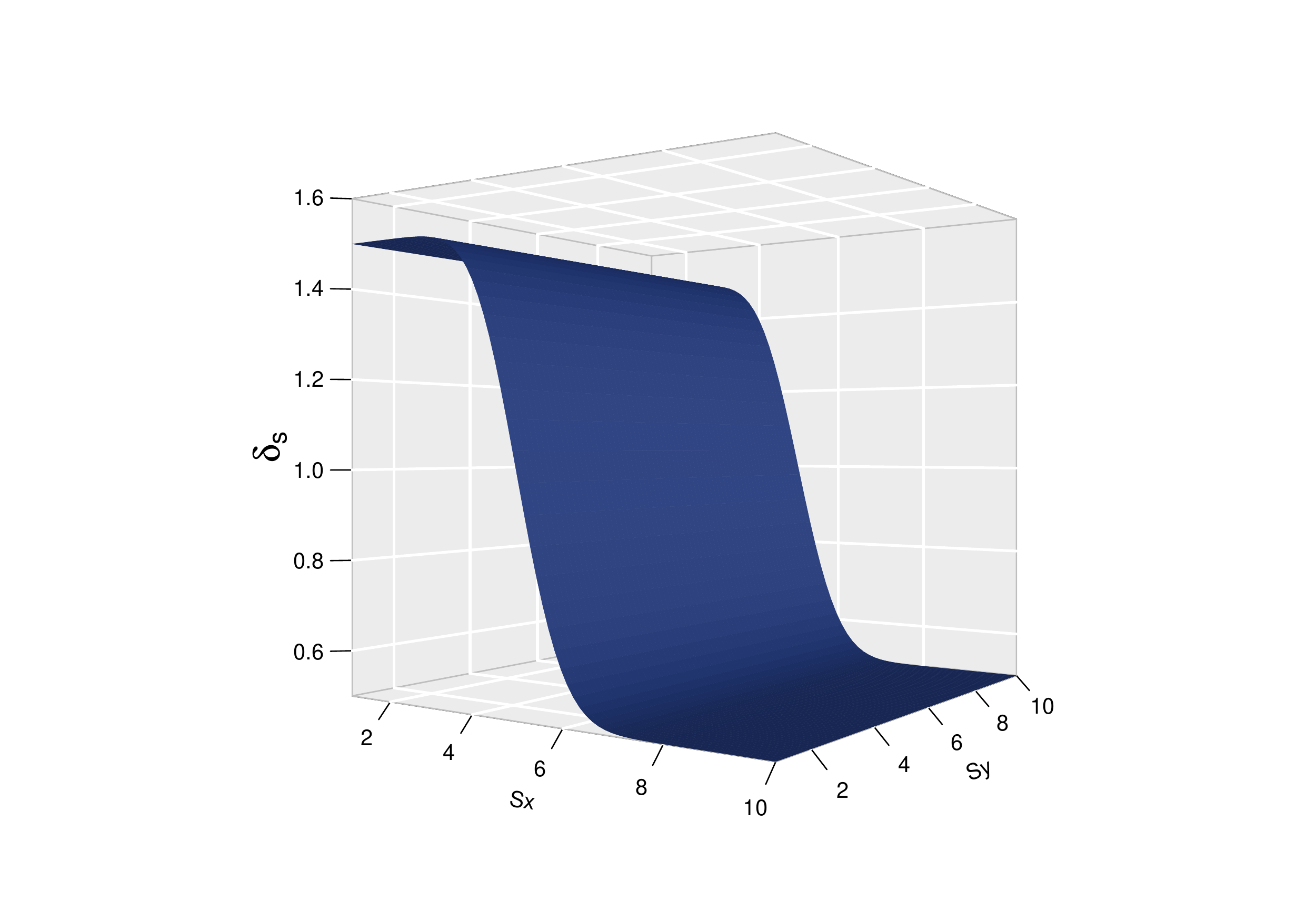}
                     	\vspace{-.3cm}
                \end{minipage}
                \hspace{-3.7cm}
                \begin{minipage}[c]{0.5\linewidth}
                		\includegraphics[scale=0.23]{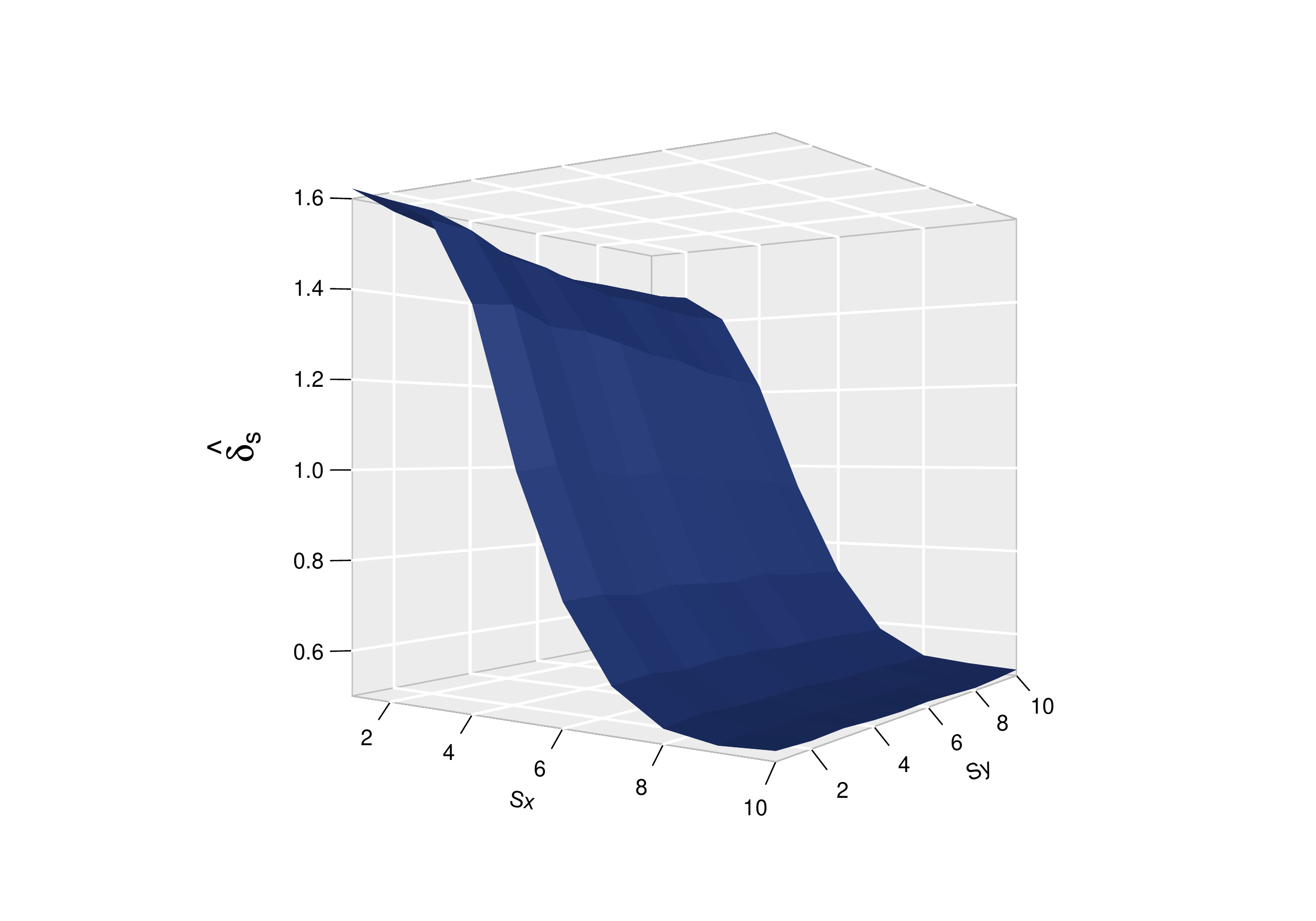}
                     	\vspace{-.3cm}
                \end{minipage}\\ 
                \hspace{-.7cm}
      		\begin{minipage}[c]{0.5\linewidth}
        			\includegraphics[scale=0.23]{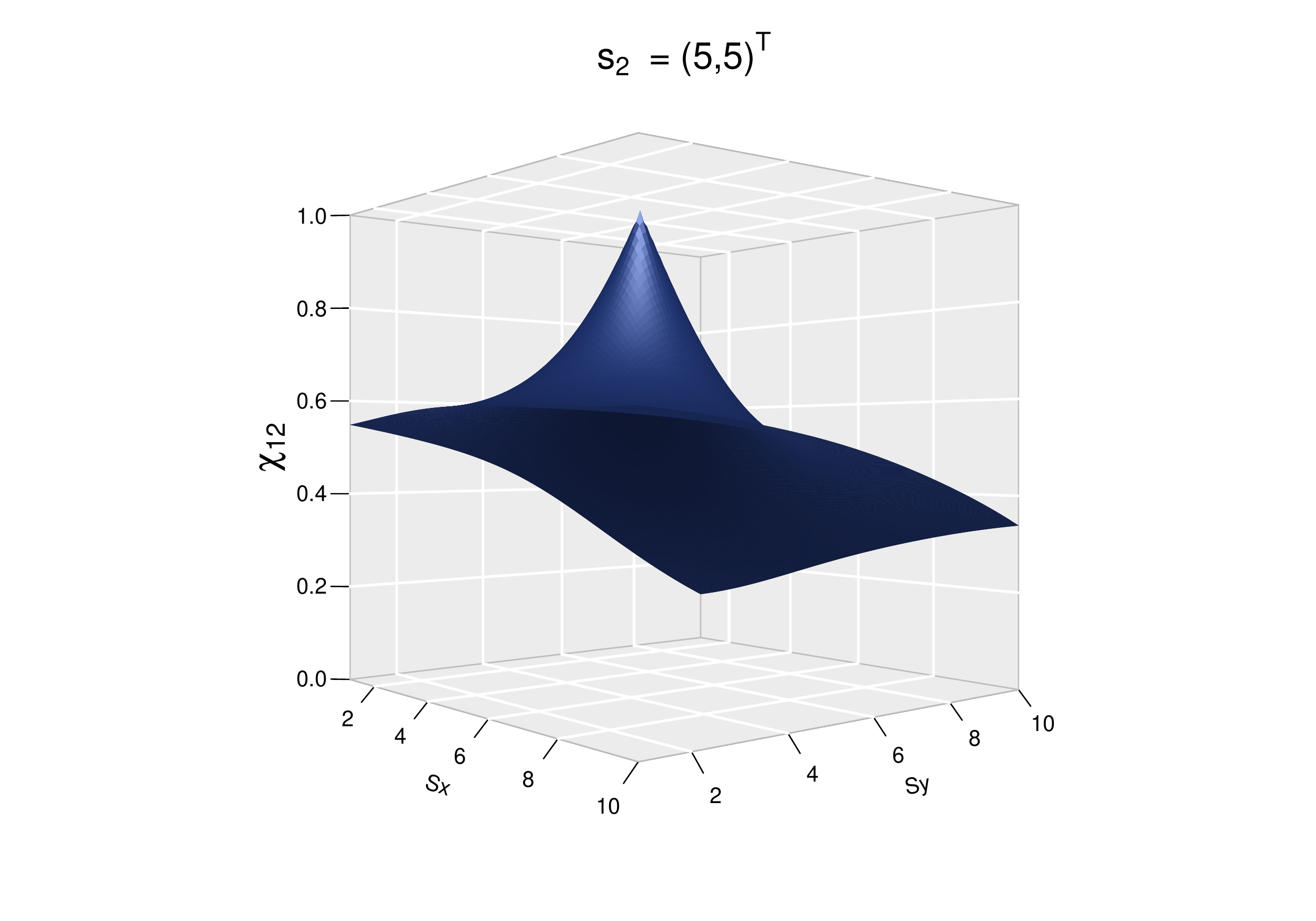}
        			\vspace{-.3cm}
      		\end{minipage}&
                \hspace{-5cm}
                \begin{minipage}[c]{0.5\linewidth}
                		\includegraphics[scale=0.23]{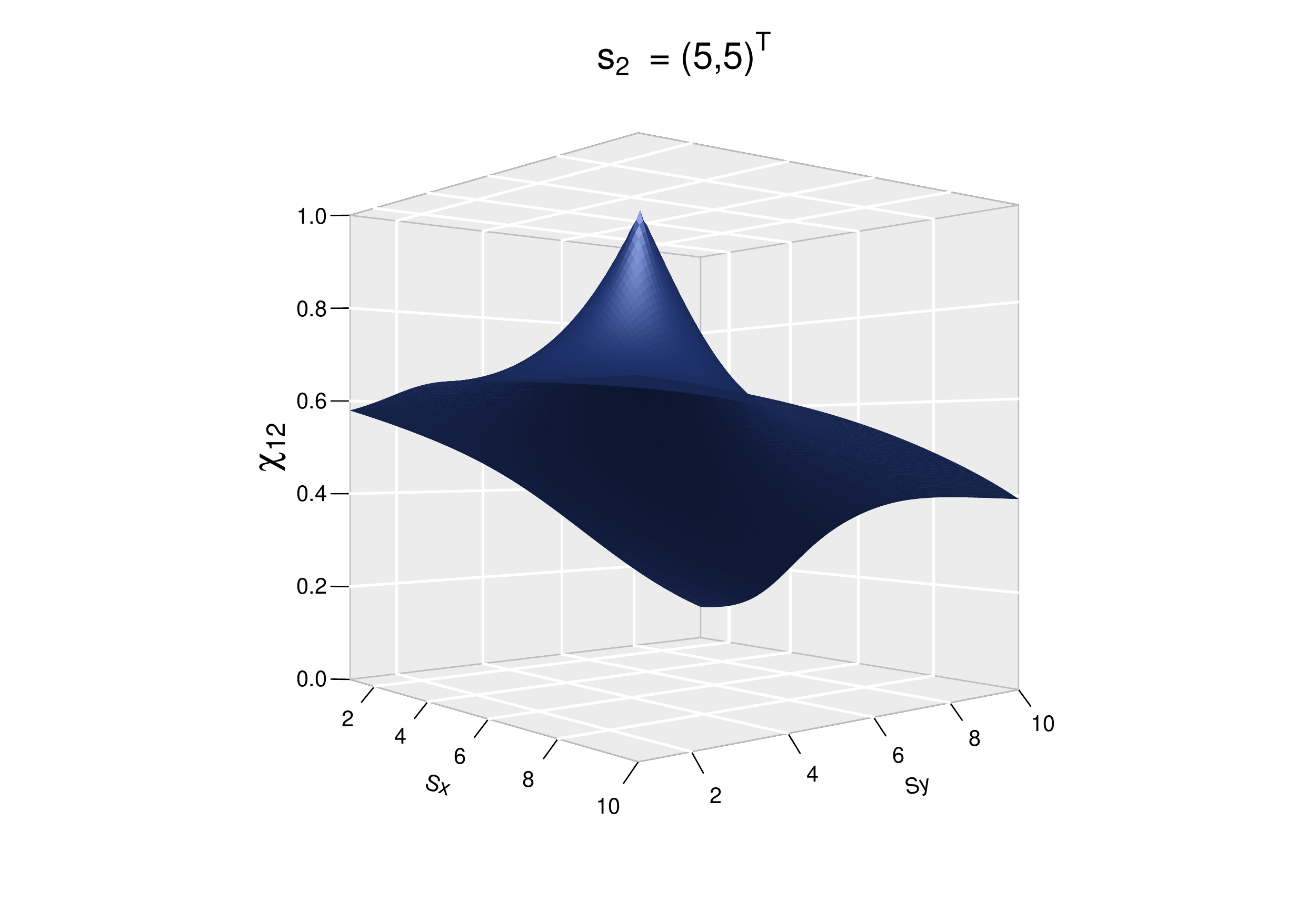}
                    	\vspace{-.3cm}
                \end{minipage}
                \vspace{-.3 cm}
                \hspace{-4.6cm}
                \begin{minipage}[c]{0.5\linewidth}
                     	\includegraphics[scale=0.23]{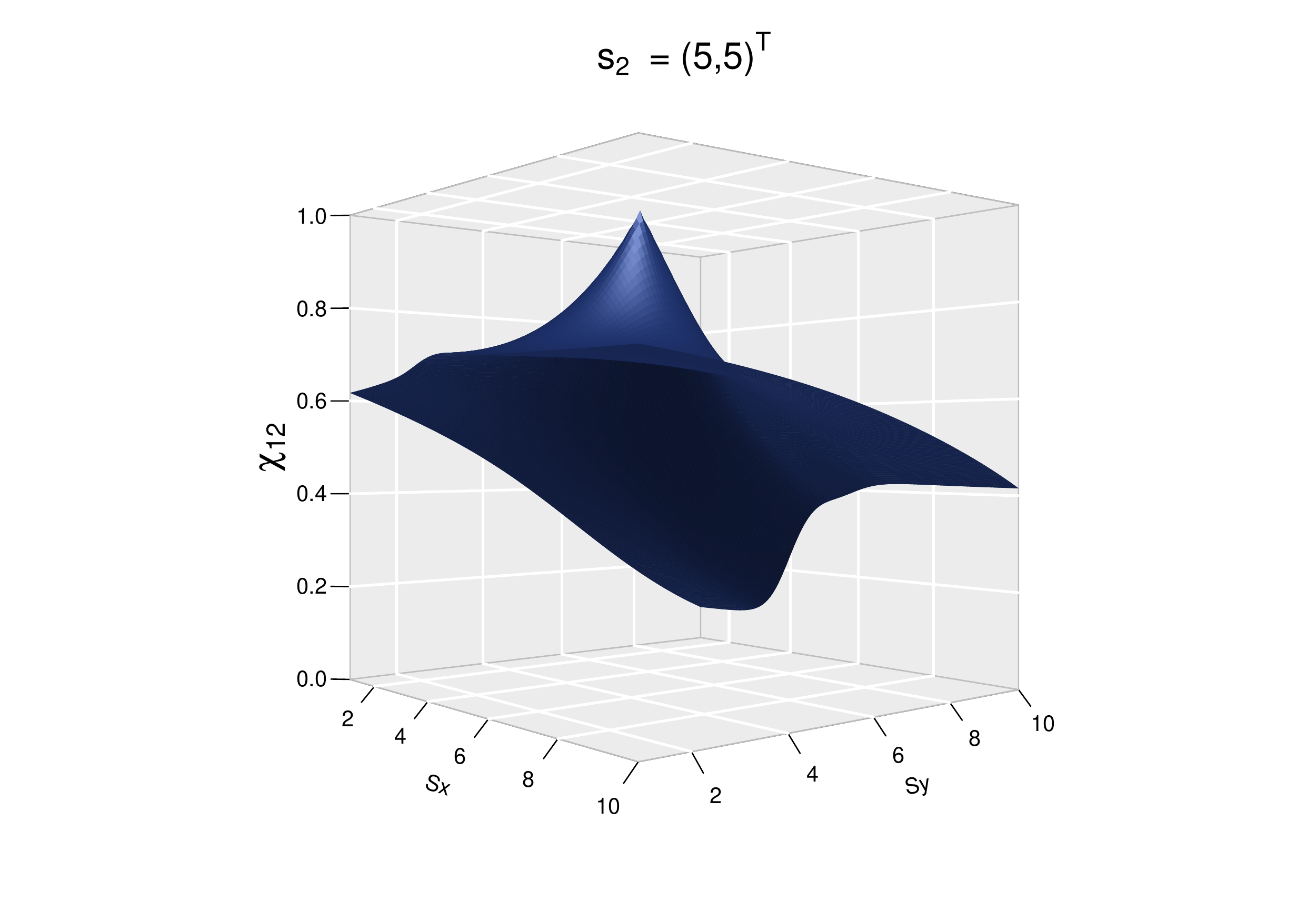}
                \vspace{-.3cm}
                \end{minipage}
                \hspace{-3.7cm}
                \begin{minipage}[c]{0.5\linewidth}
                     	\includegraphics[scale=0.23]{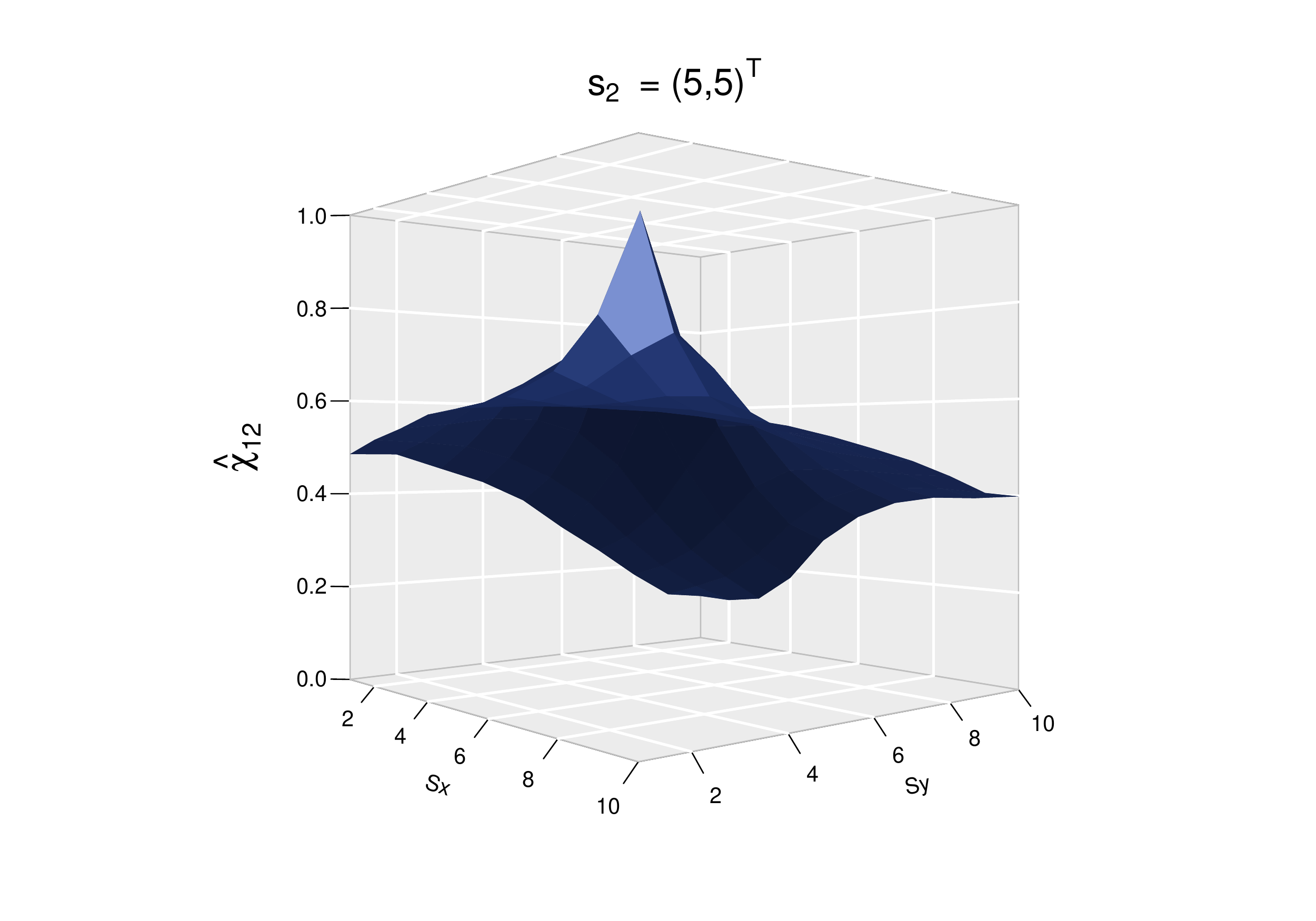}
                \vspace{-.35cm}
                \end{minipage}\\
                \footnotesize \rotatebox{90}{\textbf{\hspace{-1.7cm}Bivariate $\chi_{12}$ measure}}
                \hspace{-1cm}
      		\begin{minipage}[c]{0.5\linewidth}
        			\includegraphics[scale=0.23]{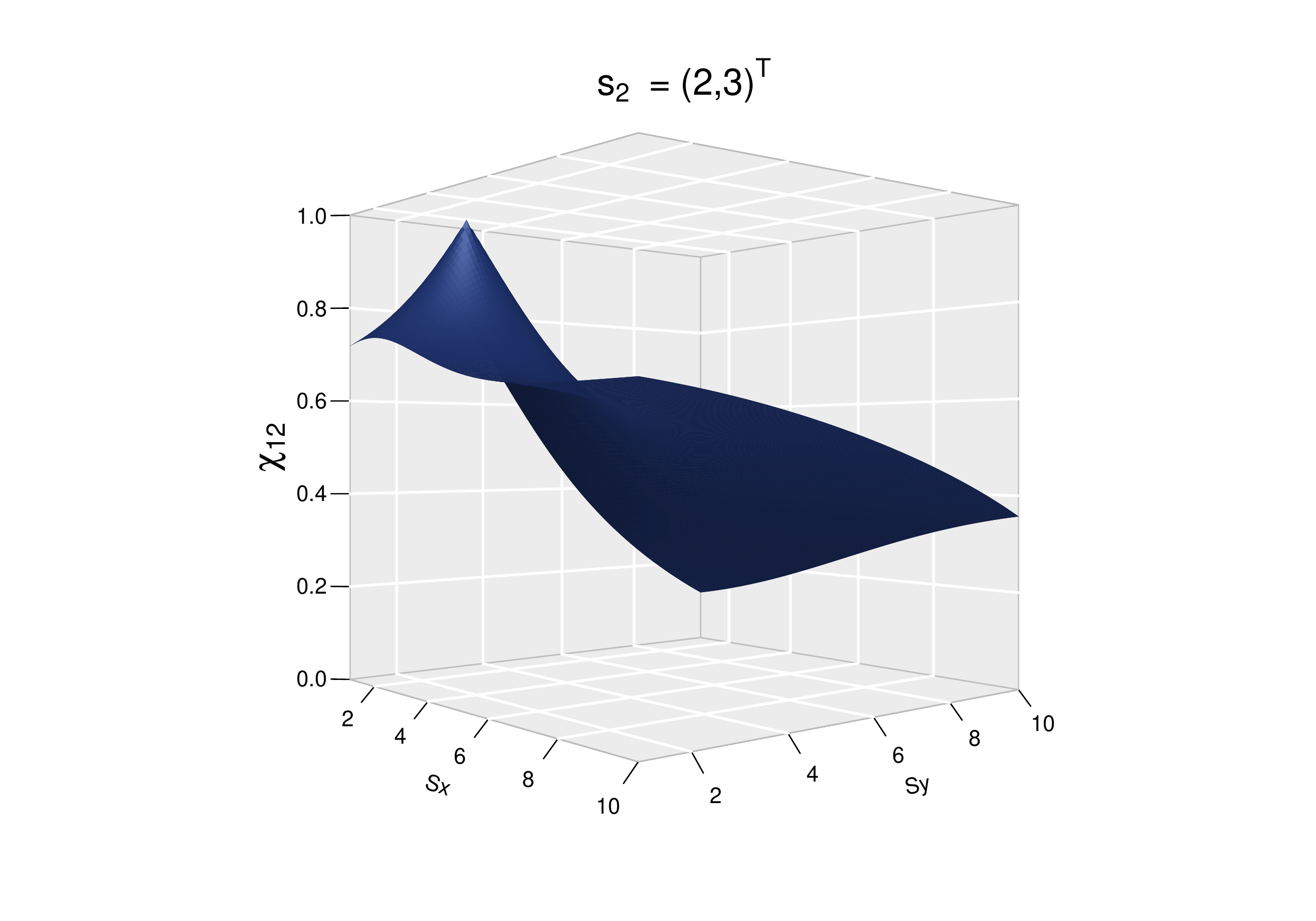}
%        			\vspace{-.1cm}
      		\end{minipage}&
                 \hspace{-5cm}
                 \begin{minipage}[c]{0.5\linewidth}
                     	\includegraphics[scale=0.23]{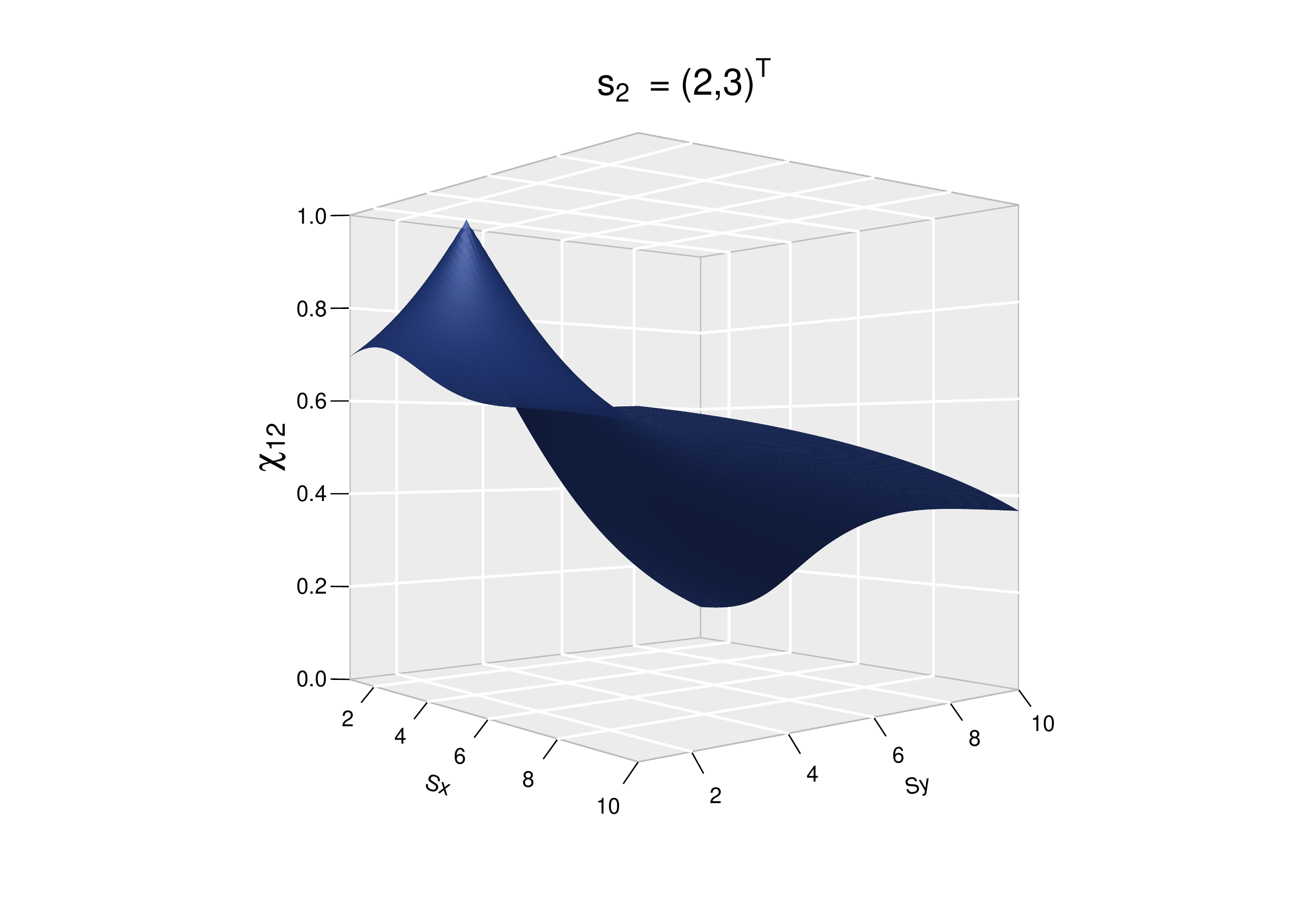}
%                        	\vspace{-.1cm}
                 \end{minipage}
                 \vspace{-.5 cm}
                 \hspace{-4.6cm}
                 \begin{minipage}[c]{0.5\linewidth}
                     	\includegraphics[scale=0.23]{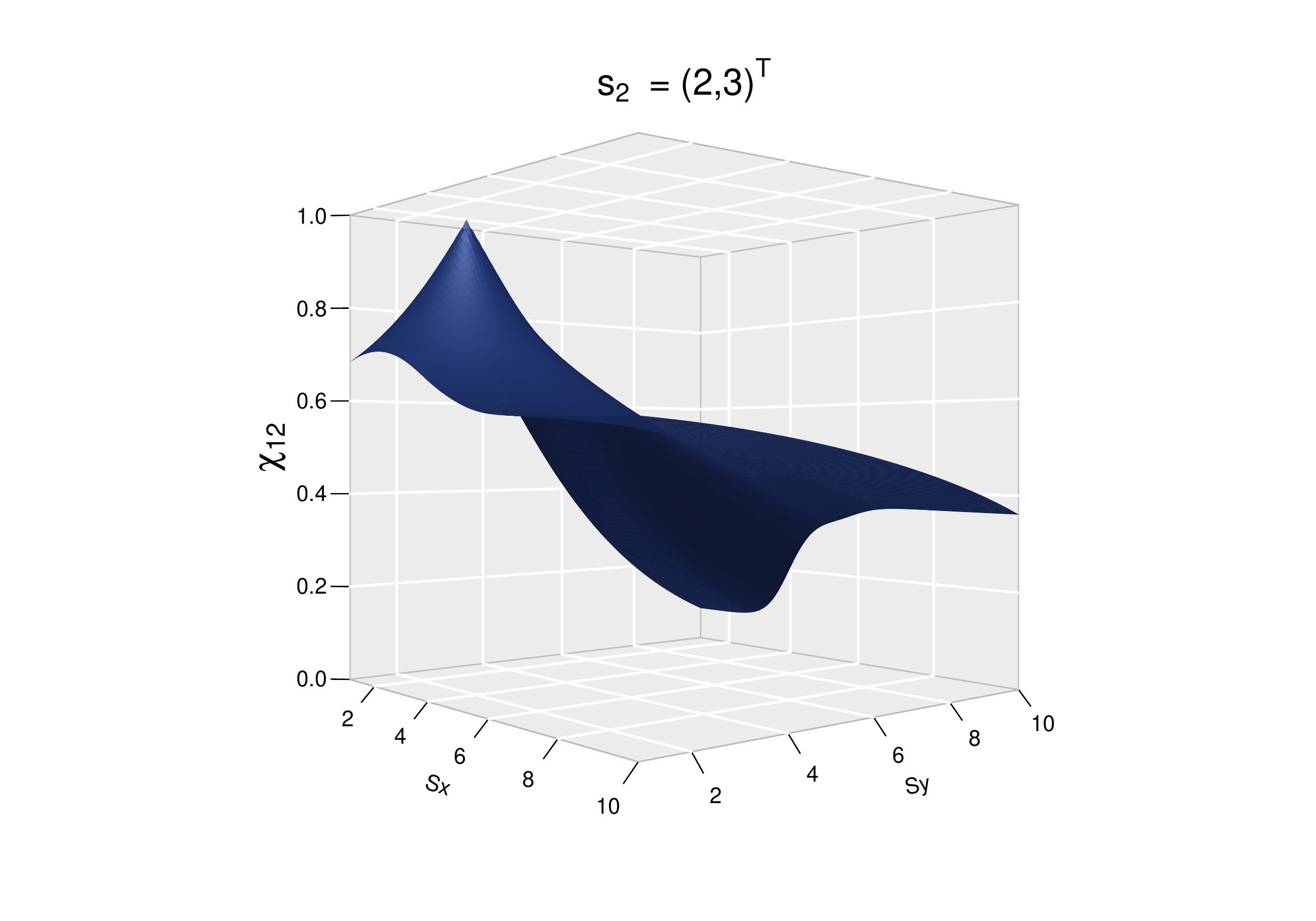}
%                 \vspace{-.1cm}
                 \end{minipage} 
                 \hspace{-3.7cm}
                 \begin{minipage}[c]{0.5\linewidth}
                     	\includegraphics[scale=0.23]{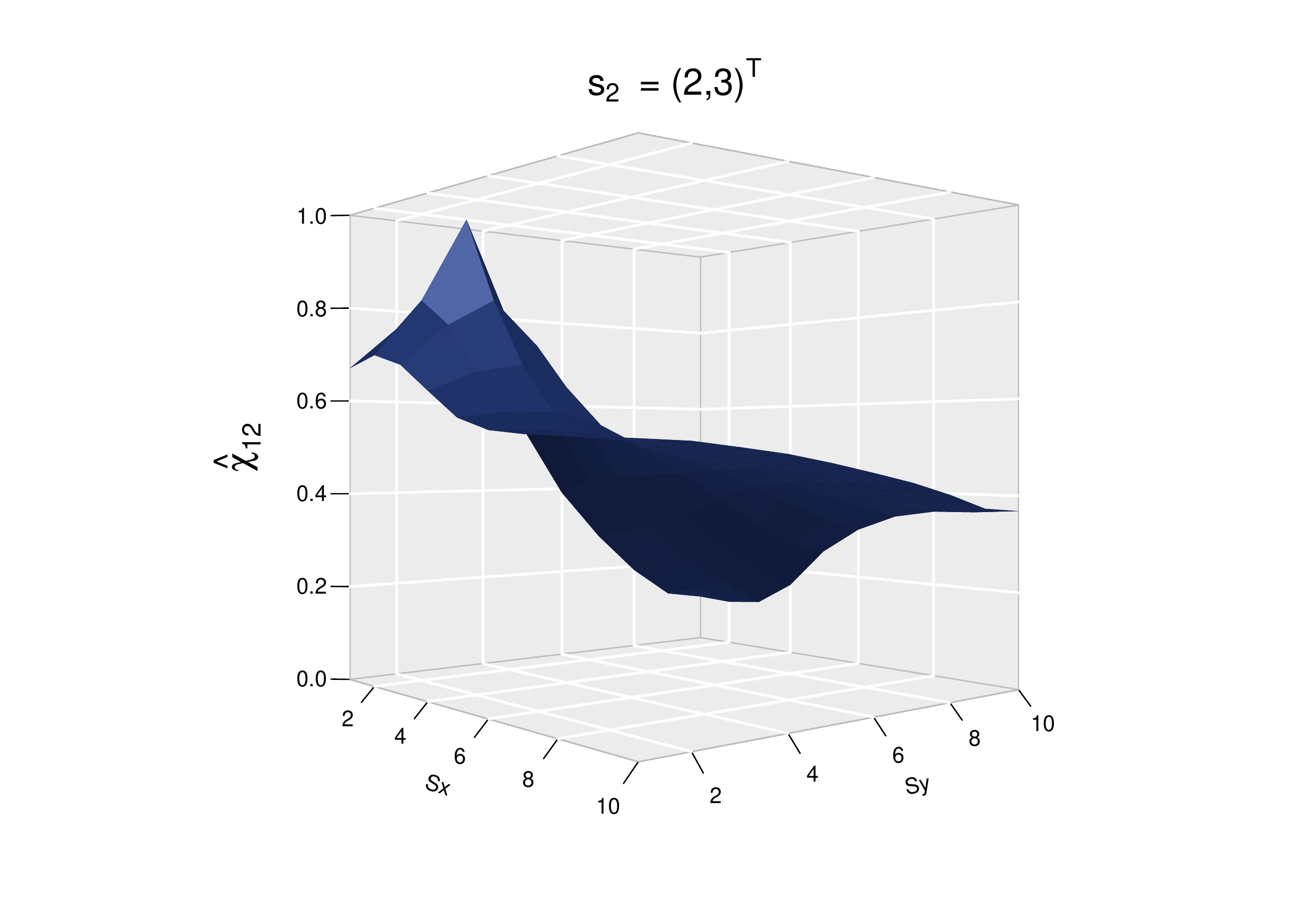}
%                 \vspace{-.1cm}
                 \end{minipage}\\
                 \hspace{-.7cm}
      		\begin{minipage}[c]{0.5\linewidth}
        			\includegraphics[scale=0.23]{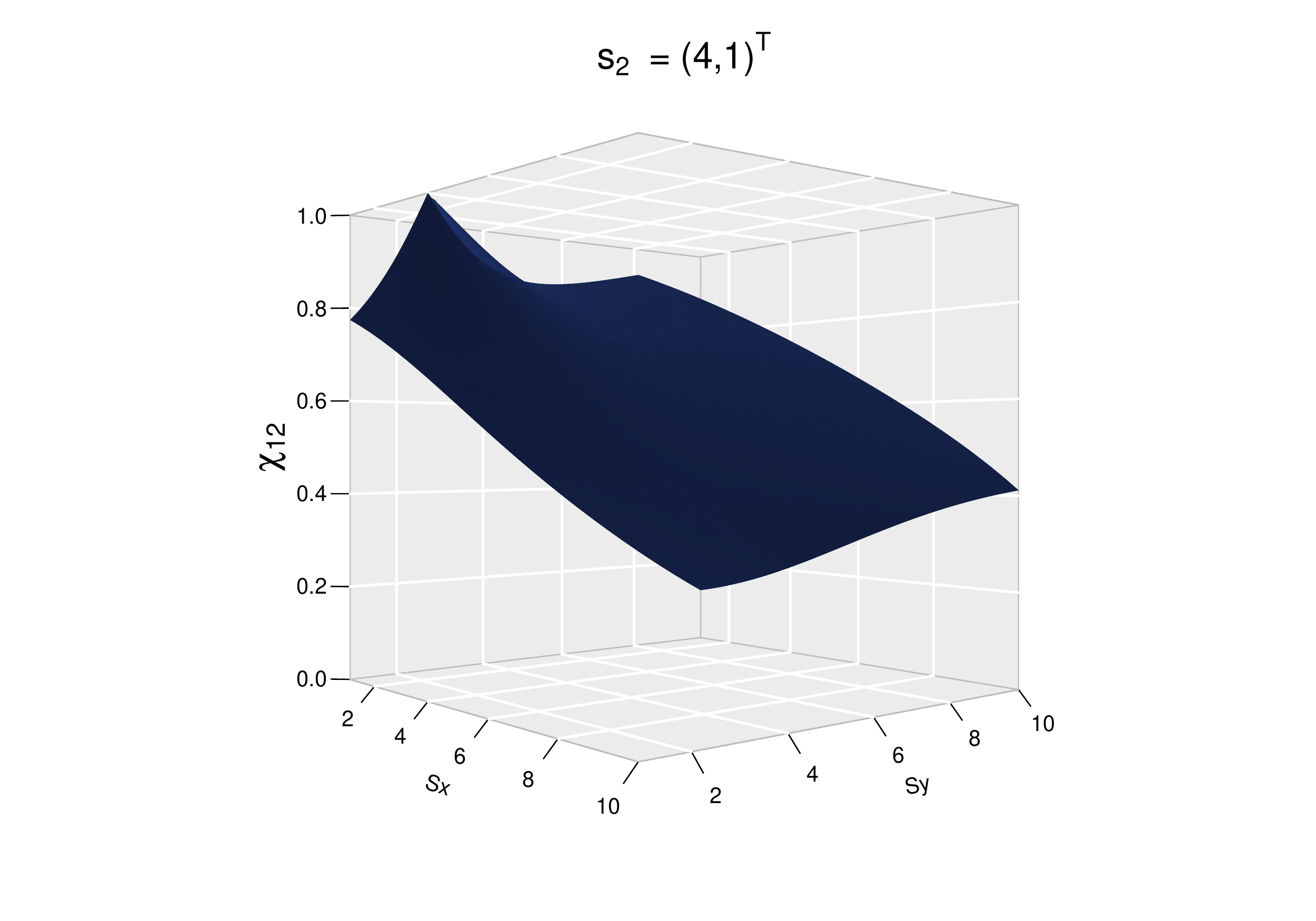}
        			\vspace{-1.5cm}
      		\end{minipage}&
                 \hspace{-5cm}
                 \begin{minipage}[c]{0.5\linewidth}
                     	\includegraphics[scale=0.23]{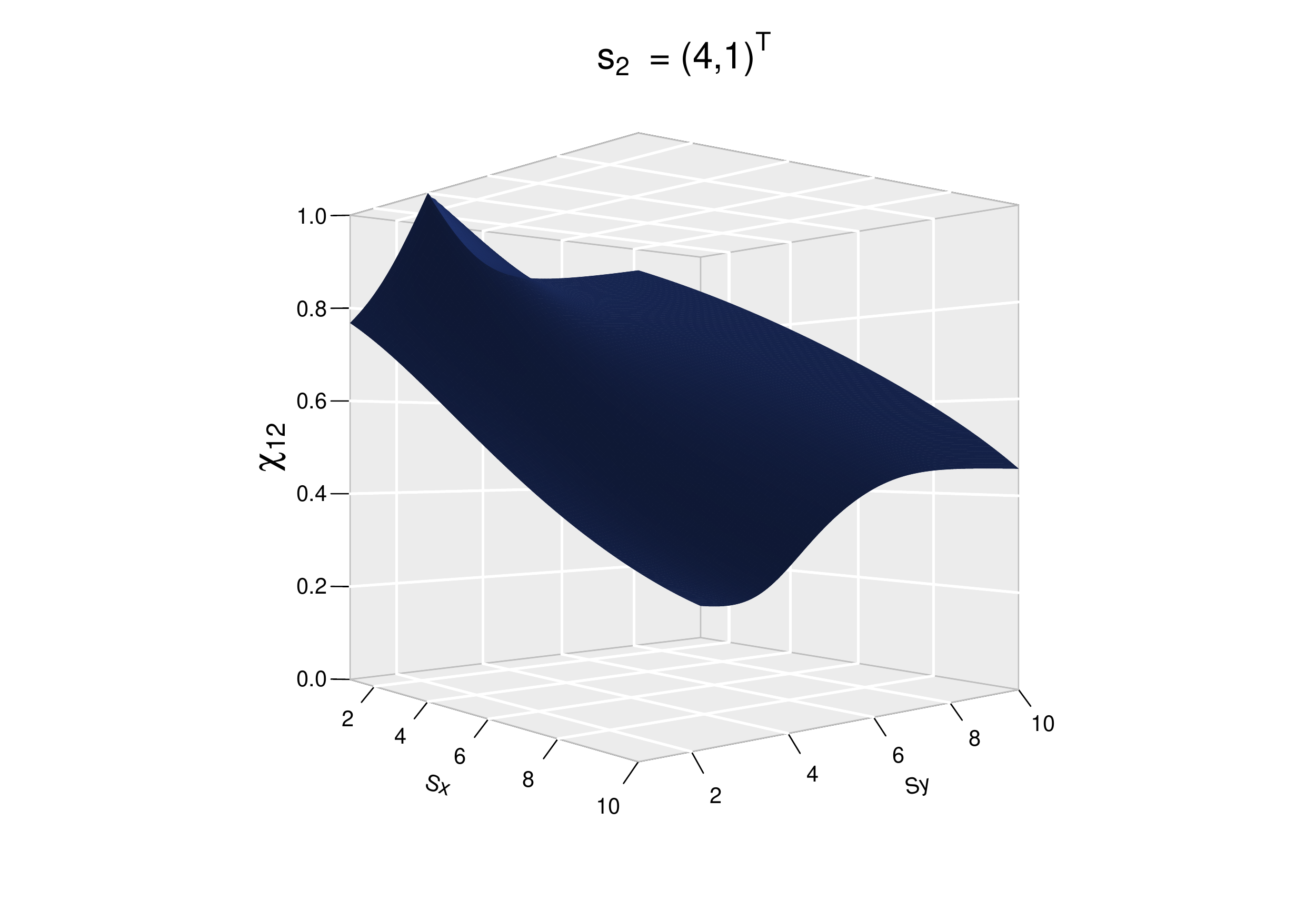}
                       	\vspace{-1.5cm}
                 \end{minipage}
                 \vspace{-.3 cm}
                 \hspace{-4.6cm}
                 \begin{minipage}[c]{0.5\linewidth}
                     	\includegraphics[scale=0.23]{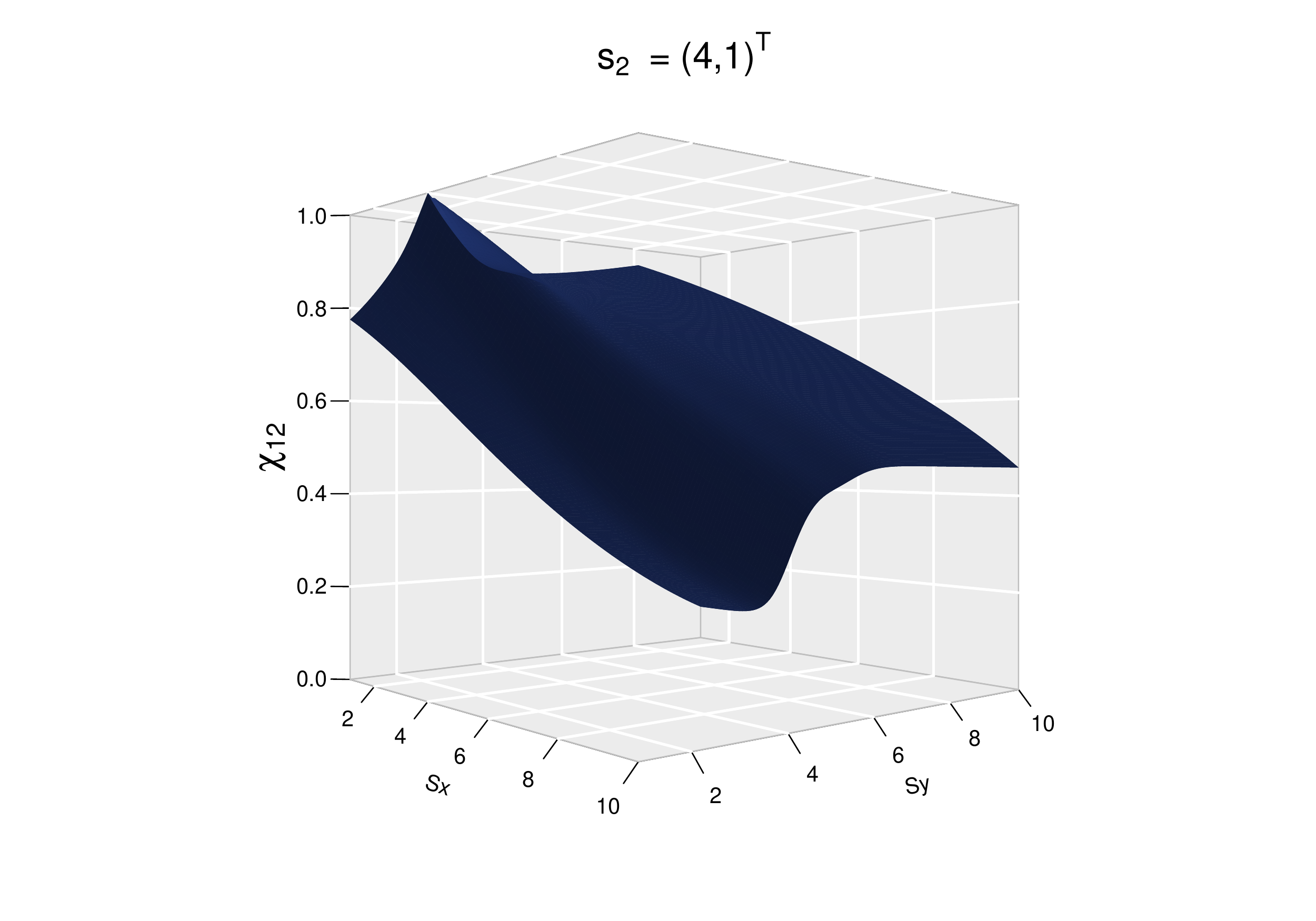}
                        	\vspace{-1.5cm}
                 \end{minipage}
                 \hspace{-3.7cm}
                 \begin{minipage}[c]{0.5\linewidth}
                     	\includegraphics[scale=0.23]{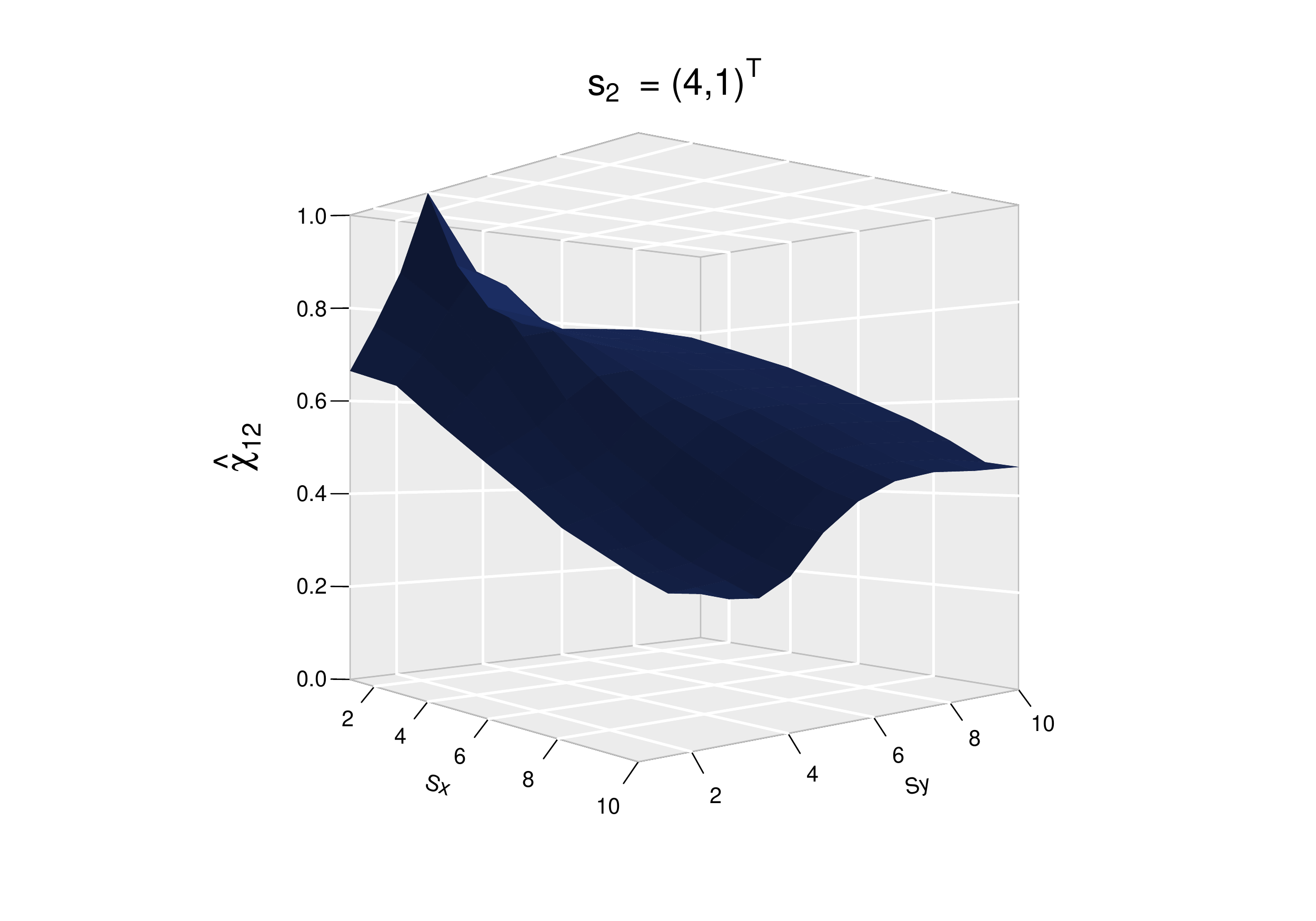}
                        	\vspace{-1.5cm}
                 \end{minipage}
    	\end{tabular}
	\vspace{1.3cm}
	\caption{\footnotesize \emph{Columns 1--3:} True spatially-varying log-rate $\log(\lambda_{\mathbf s})$ ($1^{\rm st}$ row), range $\delta_{\mathbf s}$ ($2^{\rm nd}$ row), and {bivariate $\chi_{12}$ measure} ($3^{\rm rd}$--$5^{\rm th}$ rows) for a fixed location $\mathbf s_2$. Columns 1--3 correspond to different levels of non-stationarity. \emph{Column 4:} Mean surface of estimated parameters for the mildly stationary scenario, based on $1000$ simulation experiments, using the local censored likelihood approach with thresholds $u_j^\star=0.95$ for all $j$, and $D_0=20$ nearest neighbors, as detailed in Section~\ref{sec:Inference}. The smoothness parameter is fixed to $\nu=2.5$.}    
	\label{SimSce.pdf}
\end{figure}

\subsection{Simulation results}\label{Data-generating}

We now detail the results for the mildly non-stationary scenario (second column of Figure~\ref{SimSce.pdf}) with $\nu=2.5$ and choosing $D_0=20$ nearest neighbors. Results for all simulation scenarios are summarized in Table~\ref{tableRMISE}. The rightmost column of Figure~\ref{SimSce.pdf} displays mean surfaces, computed over the $1000$ independent experiments, of the estimated log-rate $\log(\lambda_{\mathbf s})$ and range $\delta_{\mathbf s}$ parameters, as well as the fitted {bivariate $\chi_{12}$ measure} for the three different reference locations. By comparing the true surfaces to the mean estimates, the bias appears to be quite small overall, except perhaps at the boundaries of the study region, $\mathcal S$. This is a well-known drawback of local estimation approaches: because the neighborhoods are asymmetric near the boundaries, the bias is generally more severe, and this can also be noticed in our case, e.g., in regions where {$\chi_{12}\approx 0$}. This issue is similar to the boundary problem in kernel density estimation occurring with positively or compactly supported densities; some strategies have been advocated to deal with it, such as taking asymmetric kernels near the boundaries, or performing local linear regression (see, e.g.,~\citeauthor{castro2018time}, \citeyear{castro2018time}), but we do not pursue this here. Apart from this minor boundary problem, our local estimation approach succeeds in capturing the complex non-stationary dependence dynamics over space.

 \begin{table}[t!]
 \begin{center}
 \caption{ 
  \label{tableRMISE}\footnotesize RMISEs for the rate $\hat{\lambda}_{\mathbf{s}}$ and range $\hat{\delta}_{\mathbf{s}}$ parameters for each smoothness parameter $\nu=0.5,1.5,2.5$, computed over 1000 replicates from the data-generating configurations discussed in Section~\ref{Data-generating}.}
 \vspace{5pt}   
\begin{tabular}{l|*{2}{c}| *{2}{c}|*{2}{c}}
&\multicolumn{2}{c|}{$\nu = 0.5$}&\multicolumn{2}{c|}{$\nu = 1.5$} &\multicolumn{2}{c}{$\nu = 2.5$} \\\cline{2-7}
Configuration& $\hat{\lambda}_{\mathbf{s}}$&$\hat{\delta}_{\mathbf{s}}$ & $\hat{\lambda}_{\mathbf{s}}$&$\hat{\delta}_{\mathbf{s}}$ & $\hat{\lambda}_{\mathbf{s}}$&$\hat{\delta}_{\mathbf{s}}$\\\hline
  Weakly non-stationary &0.09 &0.59 &0.03 &0.01 &  0.03 & 0.00 \\
  Mildly non-stationary &0.09 &0.69 &0.06 &0.02 & 0.04&0.00 \\
  Strongly non-stationary & 0.17& 0.89& 0.15& 0.05& 0.22&0.02
\end{tabular}
\end{center}
\end{table}

To assess the variability of the parameter estimates, Figure~\ref{MeanSurf.pdf} displays a functional boxplot for the range parameter $\delta_{\mathbf{s}}$, projected onto the $x$-axis representing the only direction of variation in the true $\delta_{\mathbf{s}}$ values, and a surface boxplot for the {bivariate $\chi_{12}$ measure} for fixed $\mathbf{s}_2=(2,3)^T$. Functional and surface boxplots are the natural extensions of the classical boxplot to the case of functional data, and we refer to \citet{sun2012functional} and \citet{genton2014surface} for their precise interpretation. 
\begin{figure}[t!]
\centering
\begin{minipage}[c]{0.49\linewidth}
\centering
\includegraphics[scale=0.34]{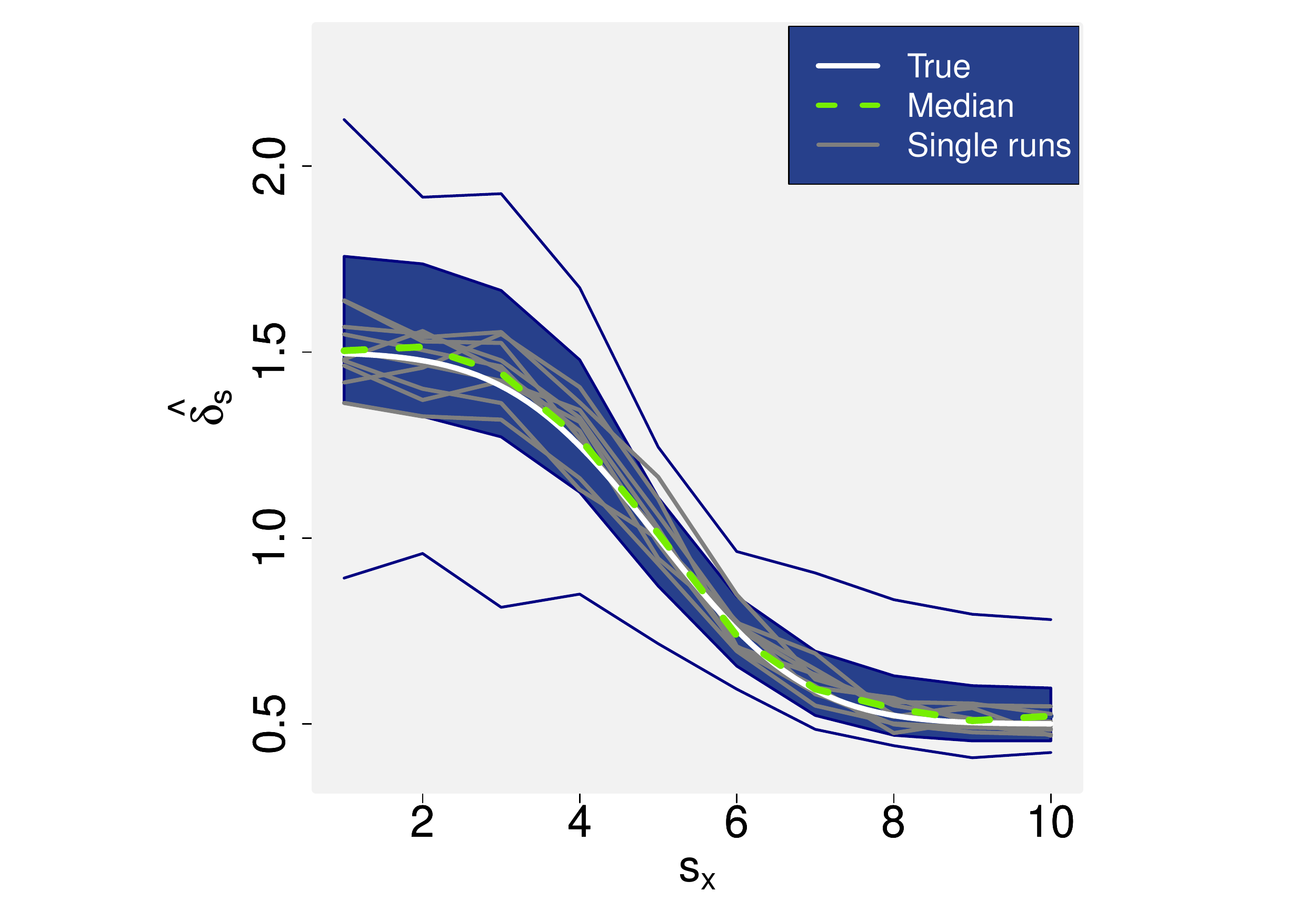}
\end{minipage}
\begin{minipage}[c]{0.49\linewidth}
\centering
\includegraphics[scale=0.4]{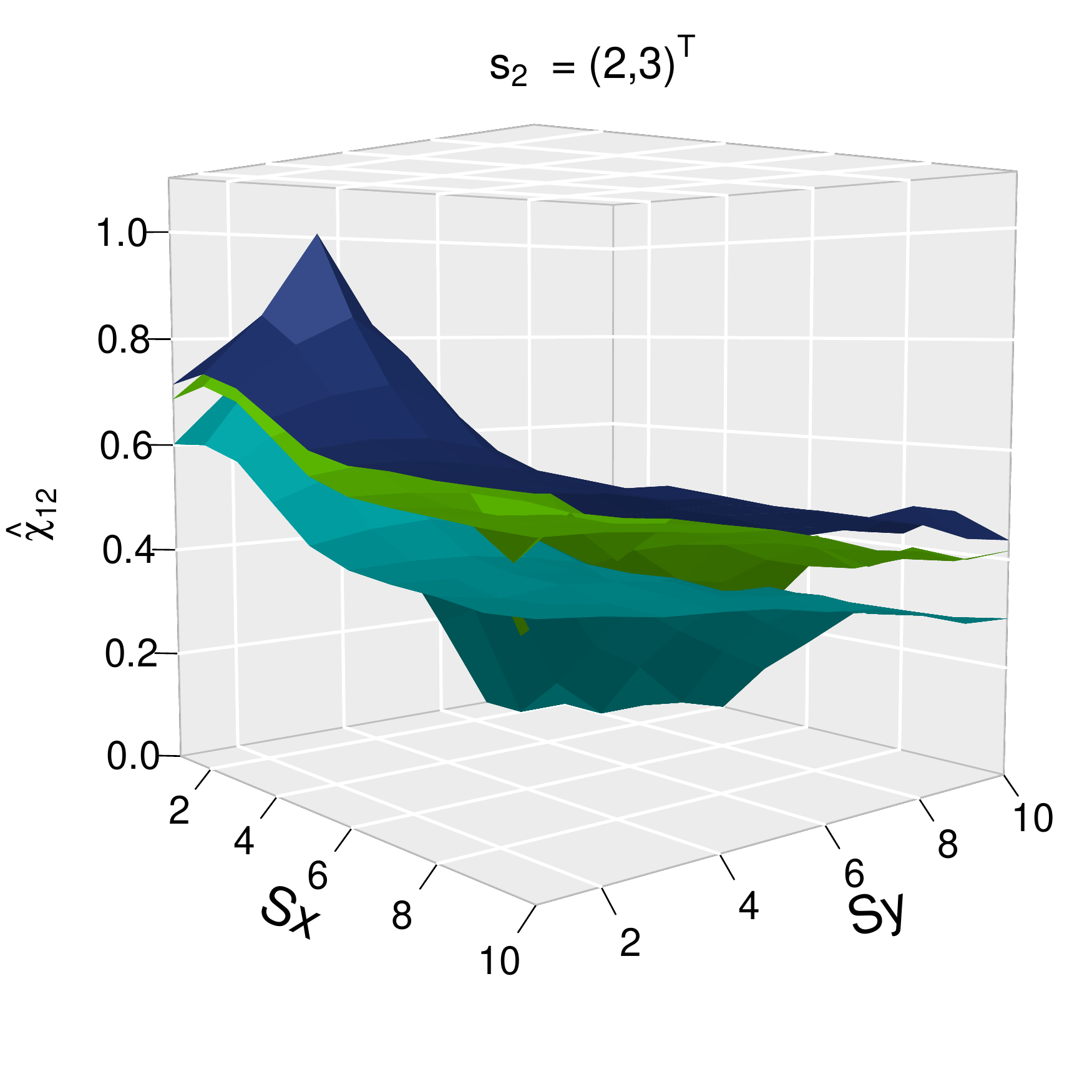}
\end{minipage}
\caption{\footnotesize Functional boxplot (left) for the range $\delta_{\mathbf{s}}$, $\mathbf{s}=(s_x,s_y)^T$, plotted with respect to $x$-coordinate $s_x$, and surface boxplot (right) for the {bivariate $\chi_{12}$ measure} plotted as a function of location $\mathbf{s}_1$ for fixed $\mathbf{s}_2=(2,3)^T$. In the surface boxplot, the dark and light blue surfaces represent the $1^{\rm st}$ and $3^{\rm rd}$ quartiles (i.e., the ``box'' of the boxplot), respectively, while the green surface is the median. The surface boxplot's ``whiskers'' are not displayed for better visualization. Both panels show the results for the mildly non-stationary case with $\nu=2.5$ and $D_0=20$ nearest neighbors in the local estimation approach.}    
\label{MeanSurf.pdf}
\end{figure}
As the true range parameter $\delta_{\mathbf{s}}$, $\mathbf{s}=(s_x,s_y)^T$, varies only with respect to $s_x$, it is easier to visualize its estimated uncertainty. As expected, $\delta_{\mathbf{s}}$ appears to be well estimated overall with higher uncertainty for larger $\delta_{\mathbf{s}}$ values. The estimated median curve follows the true curve very closely, even near the boundaries and around $s_x=1$, where the true curve is the steepest, while the functional inter-quartile range is fairly narrow for all values of $s_x$. {Moreover, the estimates for single runs, superimposed on the functional boxplot, suggest that the hard-thresholding approach produces reasonably smooth estimates, even with only about 25 threshold exceedances at each location. In our application in Section~\ref{DataApp}, we have {218} exceedances per location on average}. As for the {bivariate $\chi_{12}$ measure}, the surface boxplot suggests that it can also be rather well estimated with relatively low uncertainty.
Table~\ref{tableRMISE} reports the root mean integrated squared error (RMISE) for all the data generating configurations described in Section~\ref{ExpNonStatLocalModel}. The results are coherent with our intuition: the estimation is more difficult for higher levels of non-stationarity and rougher random fields (i.e., with smaller $\nu$). Overall, simulations confirm that our local censored estimation approach provides reasonable estimates of the true range and rate parameters while capturing their dynamics over space.

To assess the performance of our local likelihood approach as a function of the neighborhood size {$D_0$}, we fixed the rate parameter $\lambda_{\mathbf{s}}$ to the strongly non-stationary scenario, while considering the weakly, mildly and strongly non-stationary cases for the range parameter $\delta_{\mathbf{s}}$ (recall Figure~\ref{SimSce.pdf}). We then fitted the copula model using the local likelihood approach described in Section~\ref{sec:Inference} with neighborhoods $\mathcal N_{\mathbf{s}_0;D_0}$ defined in terms of $D_0=5,10,15,20,25$ nearest neighbors. Table~\ref{tabled0} reports the RMISE for all cases. While the RMISE is quite small and fairly constant overall for the range $\delta_{\mathbf{s}}$, the rate $\lambda_{\mathbf{s}}$ seems more difficult to estimate, and it improves with lower degrees of non-stationarity for $\delta_{\mathbf{s}}$ and bigger neighborhoods (i.e., with larger $D_0$), even in the strongly non-stationary scenario. This suggests that the size of neighborhoods will likely be dictated by available computational resources. Unless the non-stationarity is extremely severe, it is advisable to consider large neighborhoods, as this would improve the estimation efficiency at a fairly moderate cost in bias.

{Finally, using the mildly non-stationary scenario for the rate and range parameters and $\nu=0.5$, we tested an ad-hoc profile likelihood approach to select the smoothness parameter. We fitted our model by setting $\nu = \nu_{\text{test}}$ with $\nu_{\text{test}}\in\{0.5, 1.0, 1.5, 2.0, 2.5\}$, and counted the proportion of times that $\nu_{\text{test}}$ maximized the local log-likelihood across all grid points and the 1000 replications. Our proposed procedure is to select the ``best'' value for $\nu$ as the one with the largest proportion of maximized local likelihoods. Among all grid points and all replicates, the likelihood was here maximized in $37\%$ of the cases for $\nu_{\text{test}}=0.5$, $24\%$ of the cases for $\nu_{\text{test}}=1$, $21\%$ of the cases for $\nu_{\text{test}}=1.5$, and $18\%$ of the cases for $\nu_{\text{test}}=2.5$. {Therefore, our profile procedure seems a reasonable method to choose $\nu$ in a non-arbitrary way. It is important to notice that this is not a method to properly \emph{estimate} the smoothness parameter per se. Rather, the aim of the profile likelihood approach and the selection via frequencies is to select (or validate) $\nu$ in a non-arbitrary data-driven way. In general, estimation of the smoothness parameter is difficult; with a single realization in a fixed domain, not all the Mat\'ern correlation parameters can be estimated consistently (see~\citeauthor{zhang2004inconsistent},~\citeyear{zhang2004inconsistent}). Moreover, when $\nu$ is estimated jointly with $\delta$ and $\lambda$, we have found that the estimated parameters become very difficult to identify}. 
}

 \begin{table}[t!]
 \begin{center}
 \caption{ 
  \label{tabled0}\footnotesize RMISEs for the rate $\hat{\lambda}_{\mathbf{s}}$ and range $\hat{\delta}_{\mathbf{s}}$ parameters for $\nu=2.5$, as a function of the number $D_0$ of nearest neighbors used in the local estimation approach. The rate $\hat{\lambda}_{\mathbf{s}}$ was kept fixed to the strongly non-stationary case, while different degrees of non-stationarity are considered for the range $\delta_{\mathbf{s}}$.}
   \vspace{5pt}   
\begin{tabular}{l|*{2}{c}| *{2}{c}|*{2}{c}|*{2}{c}|*{2}{c}}
Configuration for the&\multicolumn{2}{c|}{$D_0 = 5$}&\multicolumn{2}{c|}{$D_0 = 10$} &\multicolumn{2}{c|}{$D_0 = 15$} &\multicolumn{2}{c|}{$D_0 = 20$}&\multicolumn{2}{c}{$D_0 = 25$} \\\cline{2-11}
range parameter $\delta_{\mathbf{s}}$& $\hat{\lambda}_{\mathbf{s}}$&$\delta_{\mathbf{s}}$ & $\hat{\lambda}_{\mathbf{s}}$&$\delta_{\mathbf{s}}$ &  $\hat{\lambda}_{\mathbf{s}}$&$\delta_{\mathbf{s}}$ &  $\hat{\lambda}_{\mathbf{s}}$&$\delta_{\mathbf{s}}$ & $\hat{\lambda}_{\mathbf{s}}$&$\delta_{\mathbf{s}}$\\\hline
  Weakly non-stationary &1.16 &0.02 &0.44 &0.01 &0.25 &0.00  &0.18 &0.00 & 0.15 & 0.00 \\
  Mildly non-stationary &1.33 &0.03 &0.61 &0.01 &0.44 &0.01 &0.34 &0.01 &0.30 & 0.01 \\
  Strongly non-stationary &1.38 &0.03 &0.69 &0.01 &0.55 &0.02 &0.47 &0.02 &0.45 &0.02
\end{tabular}
\end{center}
\end{table}

% \begin{table}[t!]
% \begin{center}
% \caption{ 
%  \label{tabled0}\footnotesize RMISEs for the rate $\hat{\lambda}_{\mathbf{s}}$ and range $\hat{\delta}_{\mathbf{s}}$ parameters for $\nu=2.5$, as a function of the number $D_0$ of nearest neighbors used in the local estimation approach. The rate $\hat{\lambda}_{\mathbf{s}}$ was kept fixed to the strongly non-stationary case, while different degrees of non-stationarity are considered for the range $\delta_{\mathbf{s}}$.}
%\begin{tabular}{l|*{2}{c}| *{2}{c}|*{2}{c}|*{2}{c}|*{2}{c}}
%Configuration for the&\multicolumn{2}{c|}{$D_0 = 5$}&\multicolumn{2}{c|}{$D_0 = 10$} &\multicolumn{2}{c|}{$D_0 = 15$} &\multicolumn{2}{c|}{$D_0 = 20$}&\multicolumn{2}{c}{$D_0 = 25$} \\\cline{2-11}
%range parameter $\delta_{\mathbf{s}}$& $\hat{\lambda}_{\mathbf{s}}$&$\delta_{\mathbf{s}}$ & $\hat{\lambda}_{\mathbf{s}}$&$\delta_{\mathbf{s}}$ & $\hat{\lambda}_{\mathbf{s}}$&$\delta_{\mathbf{s}}$ & $\hat{\lambda}_{\mathbf{s}}$&$\delta_{\mathbf{s}}$ & $\hat{\lambda}_{\mathbf{s}}$&$\delta_{\mathbf{s}}$\\\hline
%  Weakly non-stationary &1.16 &0.02 &0.44 &0.01 &0.25 &0.00  &0.18 &0.00 & 0.15 & 0.00 \\
%  Mildly non-stationary &1.33 &0.03 &0.61 &0.01 &0.44 &0.01 &0.34 &0.01 &0.3 & 0.01 \\
%  Strongly non-stationary &1.38 &0.03 &0.69 &0.01 &0.55 &0.02 &0.47 &0.02 &0.45 &0.02
%\end{tabular}
%\end{center}
%\end{table}

%===========================
%===========================
%======= APPLICATION =======
%===========================
%===========================
\section{Case study: U.S. winter precipitation extremes}\label{DataApp}
\subsection{Data description}\label{DataDesc}
Daily precipitation data, freely available online, were gathered from the U.S. Historical Climatological Network (USHCN). They are measured in hundredths of an inch and were collected from $1900$ to $2014$ at the $1218$ stations represented in Figure~\ref{fig:usmap.pdf}. To ensure data quality, we discarded all observations marked by any reliability or accuracy flag. We focus on winter data (December 21 to March 20) to remove seasonal effects, and we consider five day-cumulative precipitation, in order to capture the intensity and duration of distinct storms and to reduce the effect of temporal dependence. A preprocessing stage was conducted to check for possible temporal non-stationarity, and no evidence of non-stationarity was found (see {Section 3.1} {of} the Supplementary Material).
This procedure yields $2070$ time replicates per station with $19.8\%$ of missing data overall, which corresponds to about 2 million observations in total at all sites. The resulting dataset ranges from $0$ (no rain over five days) to $3006$ hundredths of an inch. 
\begin{figure}[t!]
\centering
\includegraphics[width=0.48\linewidth]{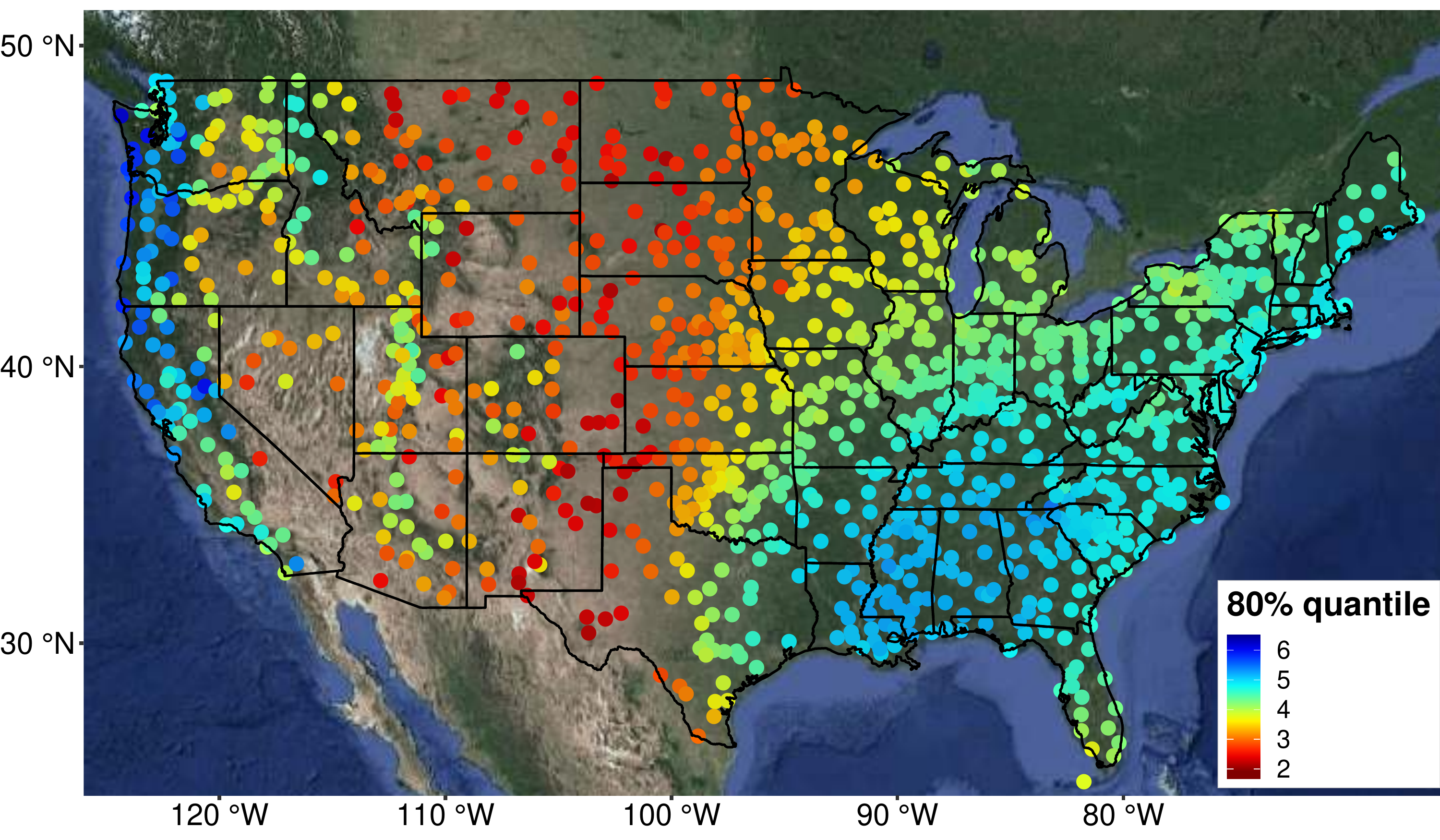}
\includegraphics[width=0.48\linewidth]{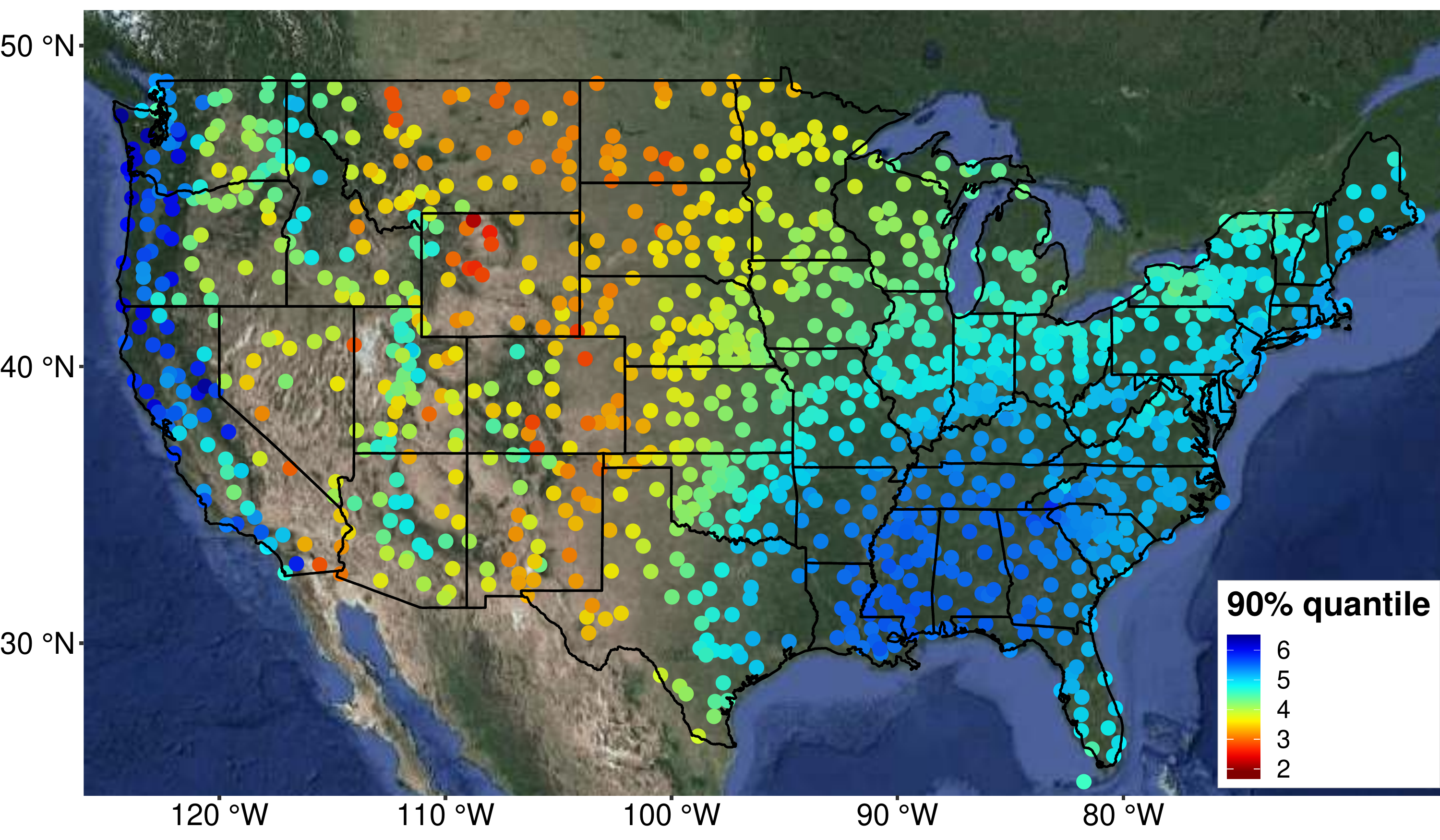}
\caption{\footnotesize Empirical {$80\%$ (left) and $90\%$} (right) quantiles of five day-cumulative winter precipitation data, observed at each of the monitoring stations, plotted on the same logarithmic color scale (units are the logarithm of hundredths of an inch).}    
\label{fig:quantiles}
\end{figure}
The empirical {$80\%$ and $90\%$} quantiles, plotted in Figure~\ref{fig:quantiles} using a logarithmic scale, reveal interesting spatial patterns that are due to marginal distributions varying smoothly over space. In order to disentangle the local marginal and dependence effects, we use the (two-step) local censored likelihood approach described in Section~\ref{sec:Inference} and fit the exponential factor copula model; recall Sections~\ref{sec:Modeling} and~\ref{sec:Inference}. 

\subsection{Estimation grid and neighborhood selection}\label{Neighborhoodselection}

To describe the local dependence structure of precipitation extremes over the U.S., we generated a regular grid $\mathcal G\subset\mathcal S$ (using the WGS84/UTM zone 14N metric coordinate system) with {$2235$} grid points at an internodal distance of $60$km. When plotted with respect to longitude and latitude, this results in a ``distorted'' grid, owing to the metric-to-degree system change.

An important step to fit our model for extremal dependence at each grid point $\mathbf{s}_0\in\mathcal G$ using the local estimation approach described in Section~\ref{sec:Inference} is to select a suitable number of nearest neighbors $D_0$, which can vary over space. Although cross-validation techniques {would} usually {be} advisable, pragmatic approaches are often adopted in practice: in the time series context, \citet{Davison.Ramesh:2000} suggested selecting the local likelihood bandwidth by the naked eye, while in the spatial context, \citet{anderes2011local} advocated a heuristic method based on a measure of spatial variation in the estimated parameters. In principle, the choice of $D_0$ should be such that the spatial dependence structure of threshold exceedances is approximately stationary within each selected neighborhood $\mathcal N_{\mathbf{s}_0;D_0}$ around $\mathbf{s}_0\in\mathcal G$. Small neighborhoods (with small $D_0$) yield good stationary approximations, but poor statistical efficiency, and vice versa, and our simulation study in Section~\ref{Simulation} suggests that $D_0$ should be as large as our computational resources permit, provided the dependence structure is not overly non-stationary. In the hydrological literature, a variety of tests to assess homogeneity of marginal distributions were proposed; see for instance~\citet{lu1992sampling}, \citet{fill1995homogeneity}, \citet{hosking1993some}, and \citet{hosking2005regional}. Testing for stationarity of the extremal dependence structure is, however, much more complicated. Here, for simplicity, we assume that the local dependence structure of extremes is stationary whenever {this} is the case for margins, and we follow the recommendations of \citet{viglione2007comparison} by testing the homogeneity of the margins through a compromise between the \citet{hosking1993some} test and a modified Anderson--Darling test \citep{scholz1987k}. Our ad-hoc neighborhood selection procedure can therefore be summarized as follows: considering an increasing nested sequence of neighborhoods, we test for marginal homogeneity until the test rejects the null hypothesis or a predefined maximum neighborhood size has been reached. In our case, we fix the maximum neighborhood to have radius $150$km and, for computational reasons, $D_0\leq 30$. {The conditions {of marginal homogeneity within a 150km radius} were not satisfied at only 35 of the 2235 grid points}. Although this procedure has no theoretical guarantee to be optimal, we have found that it yields reasonable estimates with our dataset.

\subsection{Local likelihood inference for extreme precipitation}\label{DataInference}
Following Section~\ref{sec:Inference}, we fitted the stationary exponential factor copula model~\eqref{LocalModel} within small local neighborhoods, by maximizing the censored local log-likelihood \eqref{WLL} at all 2200 grid points $\mathbf{s}_0\in\mathcal G$ (we discarded the 35 grid points where the conditions {of marginal homogeneity were not satisfied}), choosing the empirical {$80\%$ quantile} as a threshold to define extreme events. The left-hand panel of Figure~\ref{fig:quantiles} illustrates the selected threshold at each monitoring station on the scale of the observations. As the smoothness parameter $\nu$ in \eqref{corr} is difficult to estimate, we adopted {the} profile likelihood approach {described in Section~\ref{Simulation}}. Among all grid points, the likelihood was maximized in {$58.2\%$, $18.6\%$, $11.3\%$, and $11.8\%$} cases for $\nu=0.5,1,1.5,2$, respectively. To be able to easily compare rate and scale parameter estimates across space, we then chose to fix $\nu=0.5$ at all locations, which boils down to using an exponential correlation function for the underlying Gaussian process. Thanks to the embarrassingly parallel nature of our local likelihood estimation procedure, we could make an efficient use of distributed computing resources by fitting the local model at each grid point independently; {on average, each set of 32 parallel fits took 3 hours. The fits were computed on a cluster with 70 nodes of 32 cores each.}

To understand the dynamics of the extremal dependence strength over space, Figure~\ref{Output2.pdf} shows {$\chi_h(u)$ at high but finite levels (specifically, $u=0.95, 0.98$) for $h = 20, 40, 80, 160$km. 
\begin{figure}[t!]
\begin{center}
	\hskip0.5cm {\bf $u=0.95$} \hskip7cm {\bf $u=0.98$}
\end{center}\vskip-.2cm
\centering
\includegraphics[width=0.48\linewidth]{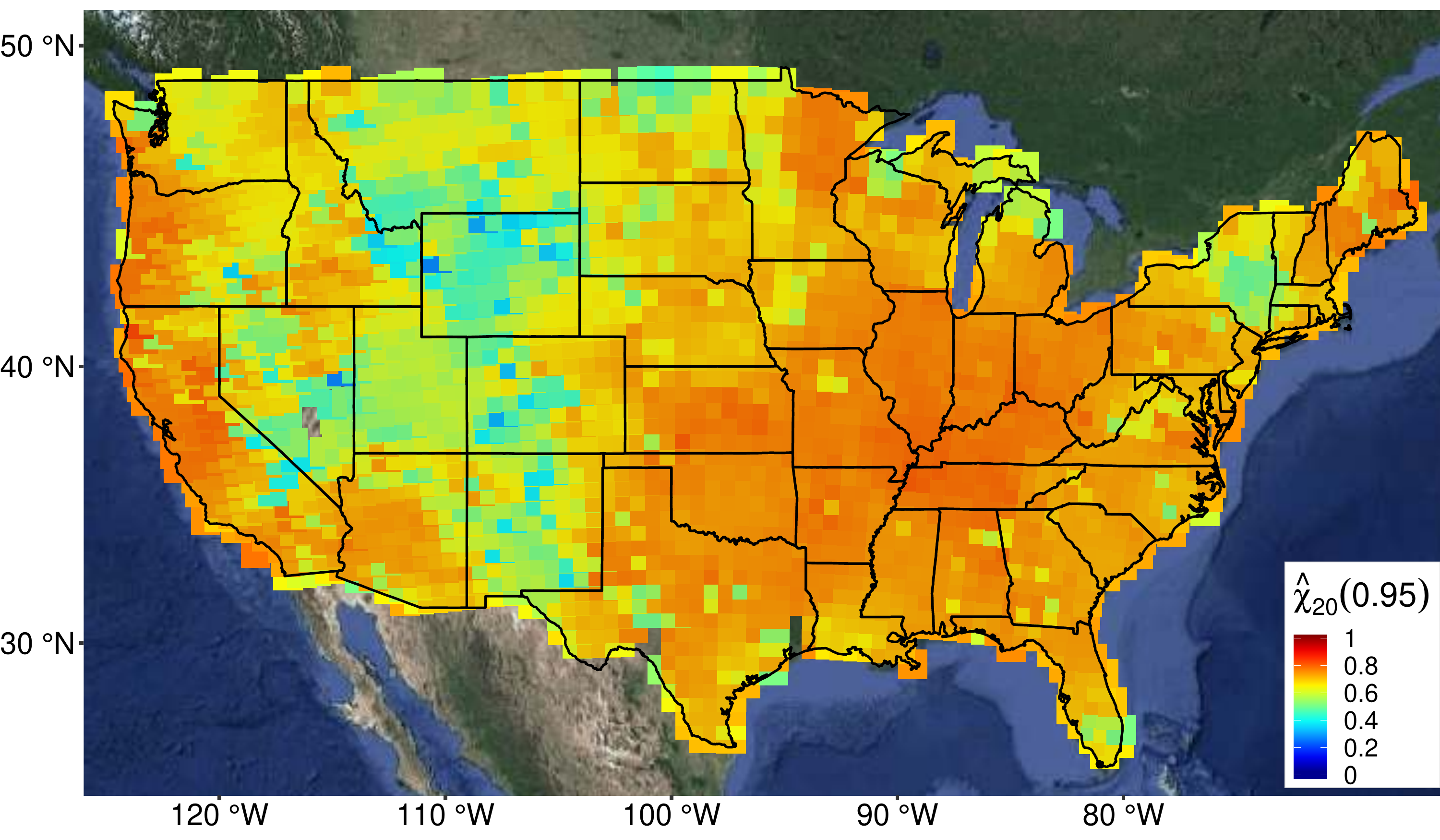}\hspace{10pt}
\includegraphics[width=0.48\linewidth]{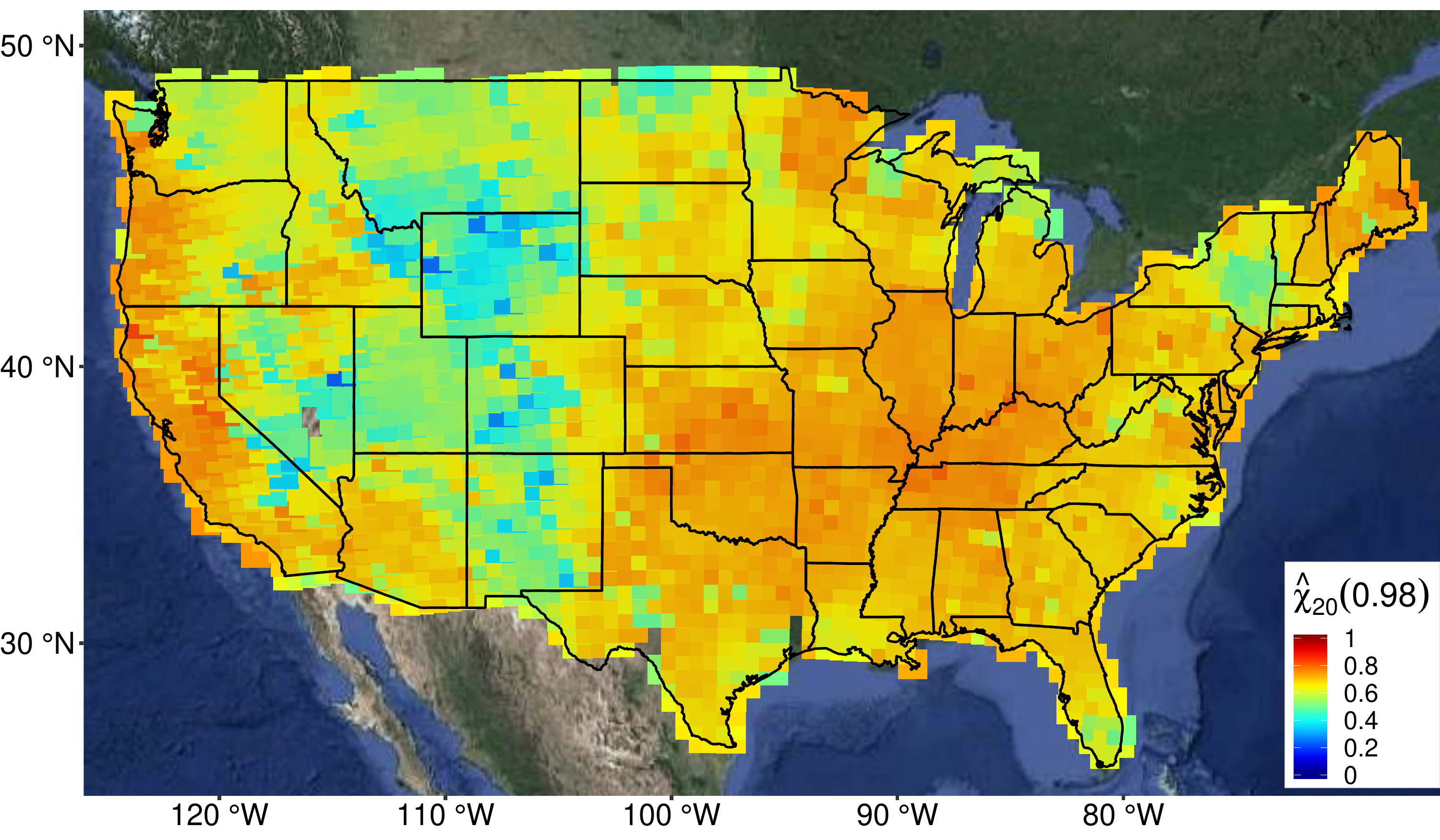}\\[5pt]
\includegraphics[width=0.48\linewidth]{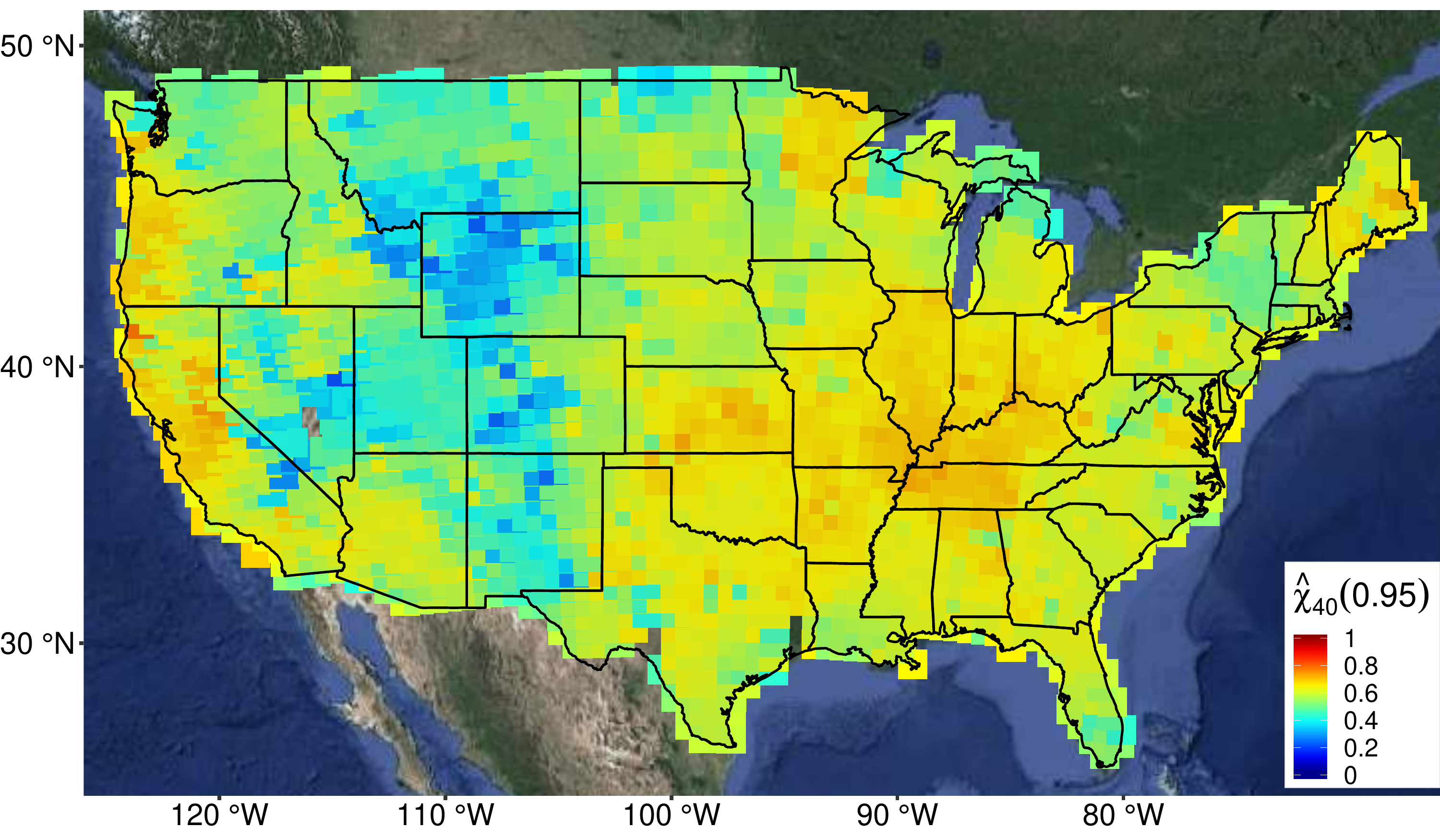}\hspace{10pt}
\includegraphics[width=0.48\linewidth]{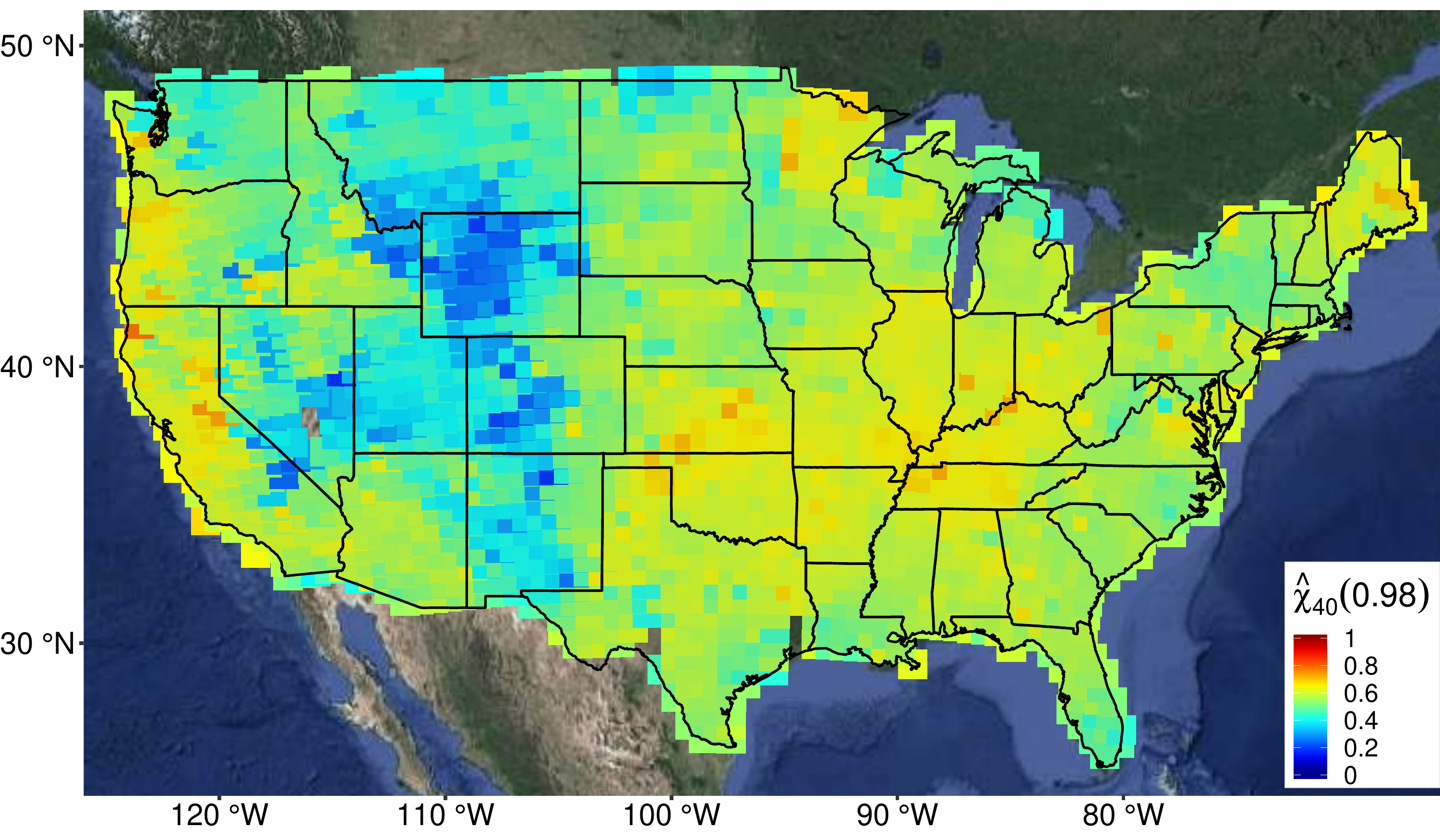}\\[5pt]
\includegraphics[width=0.48\linewidth]{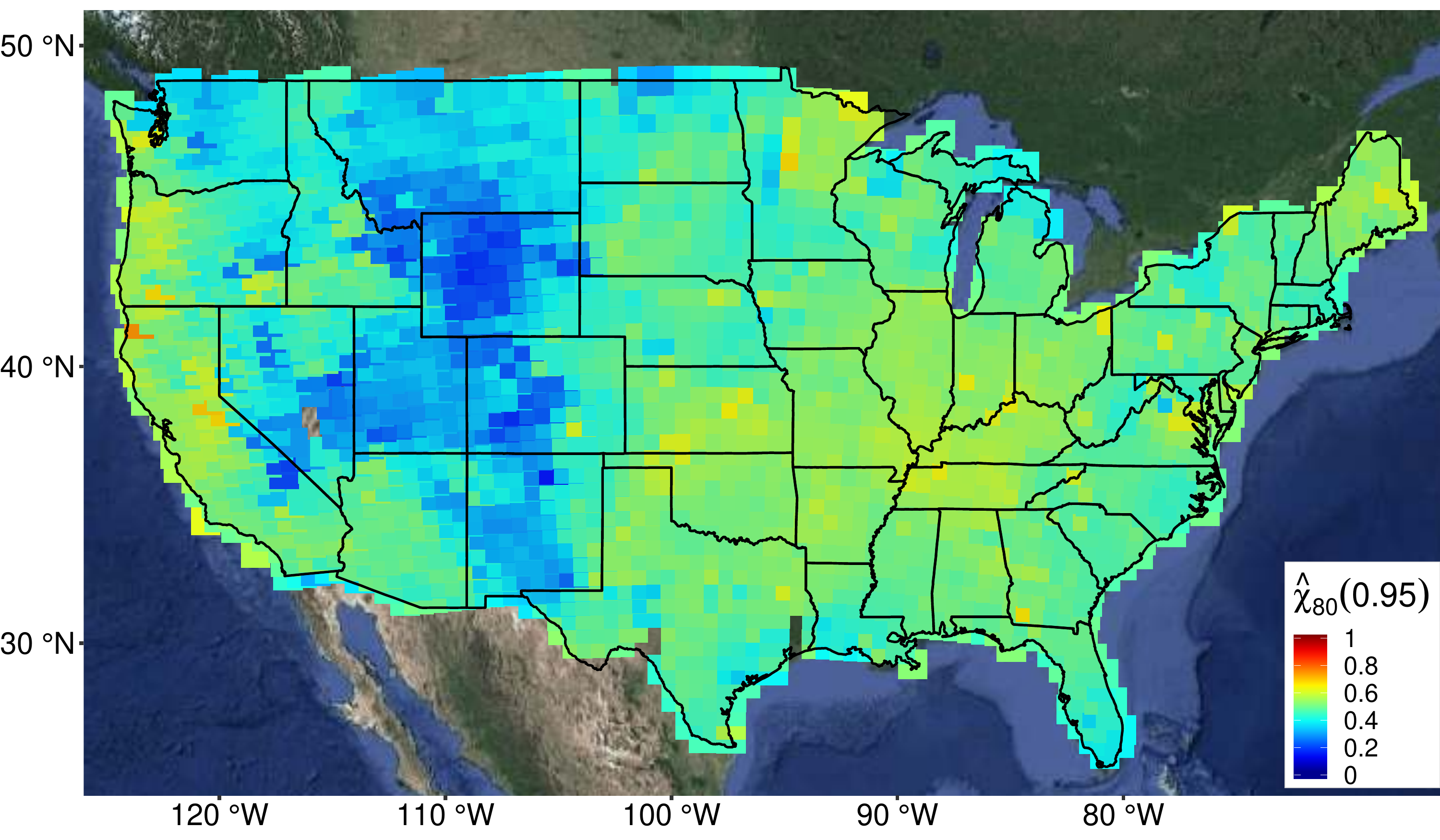}\hspace{10pt}
\includegraphics[width=0.48\linewidth]{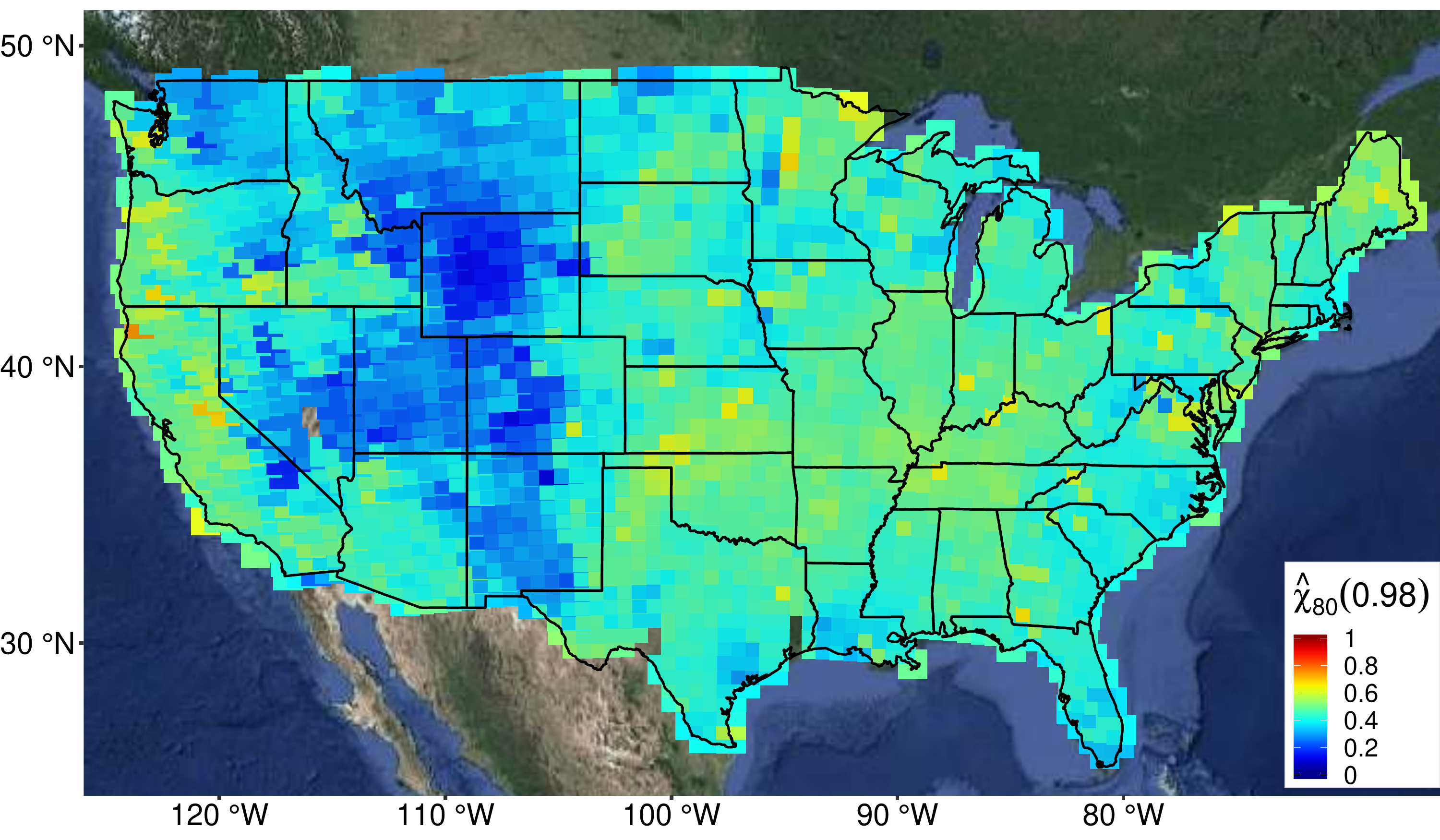}\\[5pt]
\includegraphics[width=0.48\linewidth]{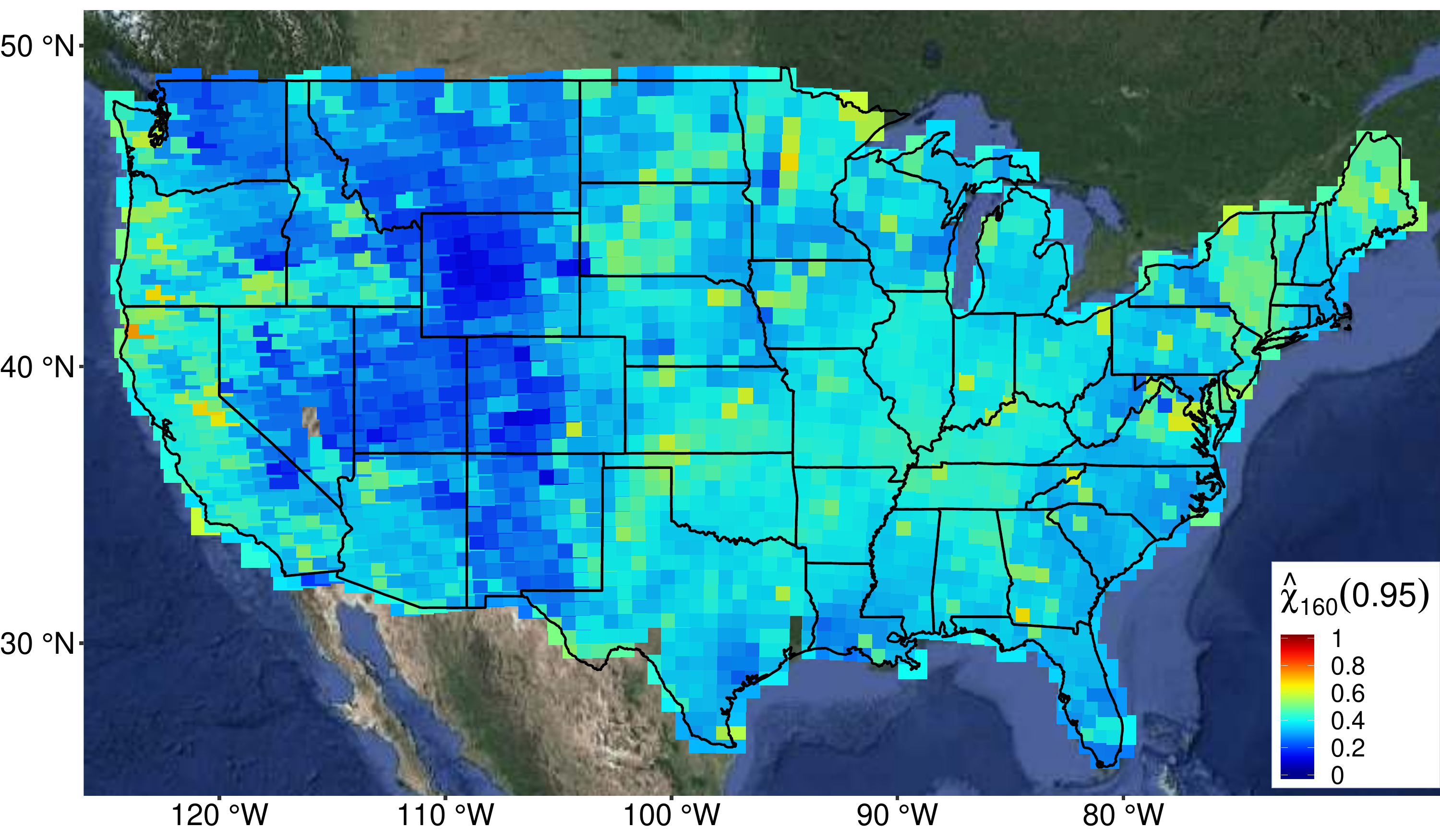}\hspace{10pt}
\includegraphics[width=0.48\linewidth]{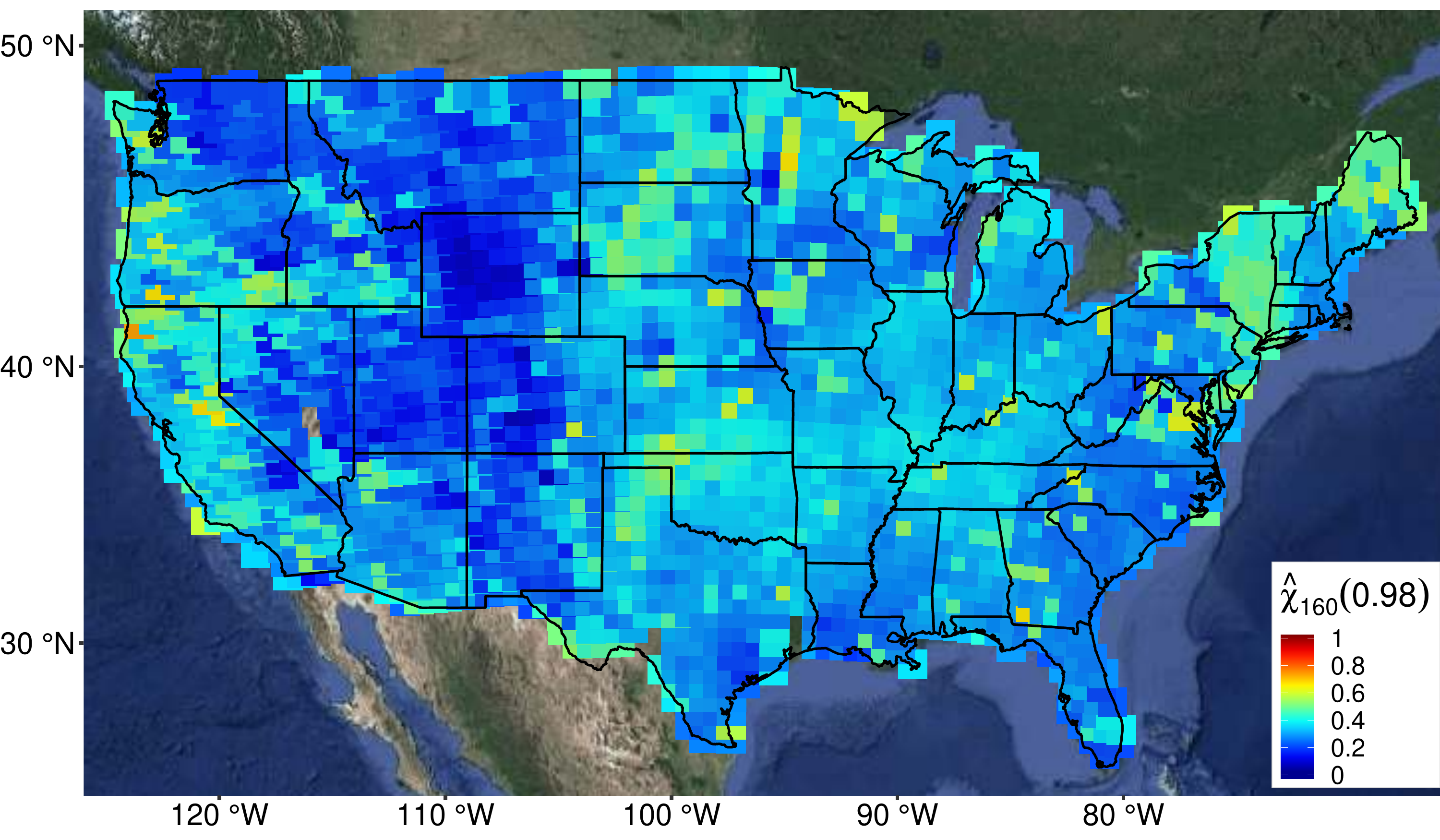}\vspace{-.3cm}
\caption{\footnotesize {Estimates of $\chi_h(u)$ for $h = 20, 40, 80, 160$km (top to bottom) and $u=0.95, 0.98$ (left to right).}}
 \label{Output2.pdf}
\end{figure}
Since our model is fitted locally, interpretation of $\chi_h(u)$ is possible within the {selected} neighborhoods, and care must be taken at locations where the radius used to fit the model is smaller than the selected distance $h${; see Section~3.2 in the Supplementary Material for details on the radius chosen {at} each grid point}}. As expected, tail dependence weakens at larger distances, and the patterns are quite smoothly varying over space. Extremal dependence is stronger in the Pacific West, central region (except for Colorado), mid-South, mid-West and North East. By contrast, some states in the South East (e.g., North/South Carolina and Florida) and the Rocky Mountains show weaker extremal dependence, suggesting that extreme precipitation events are more localized in these regions. {To assess the variability of the estimates, we computed standard deviations for the estimates of $\chi_h(u)$ from 300 block bootstrap samples with monthly blocks. As we can see from Figure~\ref{fig:sdchi.pdf}, estimates are relatively focused, with quite small standard deviations overall that are slightly higher with larger distances and thresholds.} {To complement this analysis, Section 3.3 in the Supplementary Material shows 95\% confidence intervals for $\chi_h(u)$ at some selected locations, while Sections 3.4 and 3.5 show results for $\chi_h$ (the limiting case) and the estimated log-rate and log-range parameters, respectively.}

\begin{figure}[t!]
\centering
	\includegraphics[scale = 0.28]{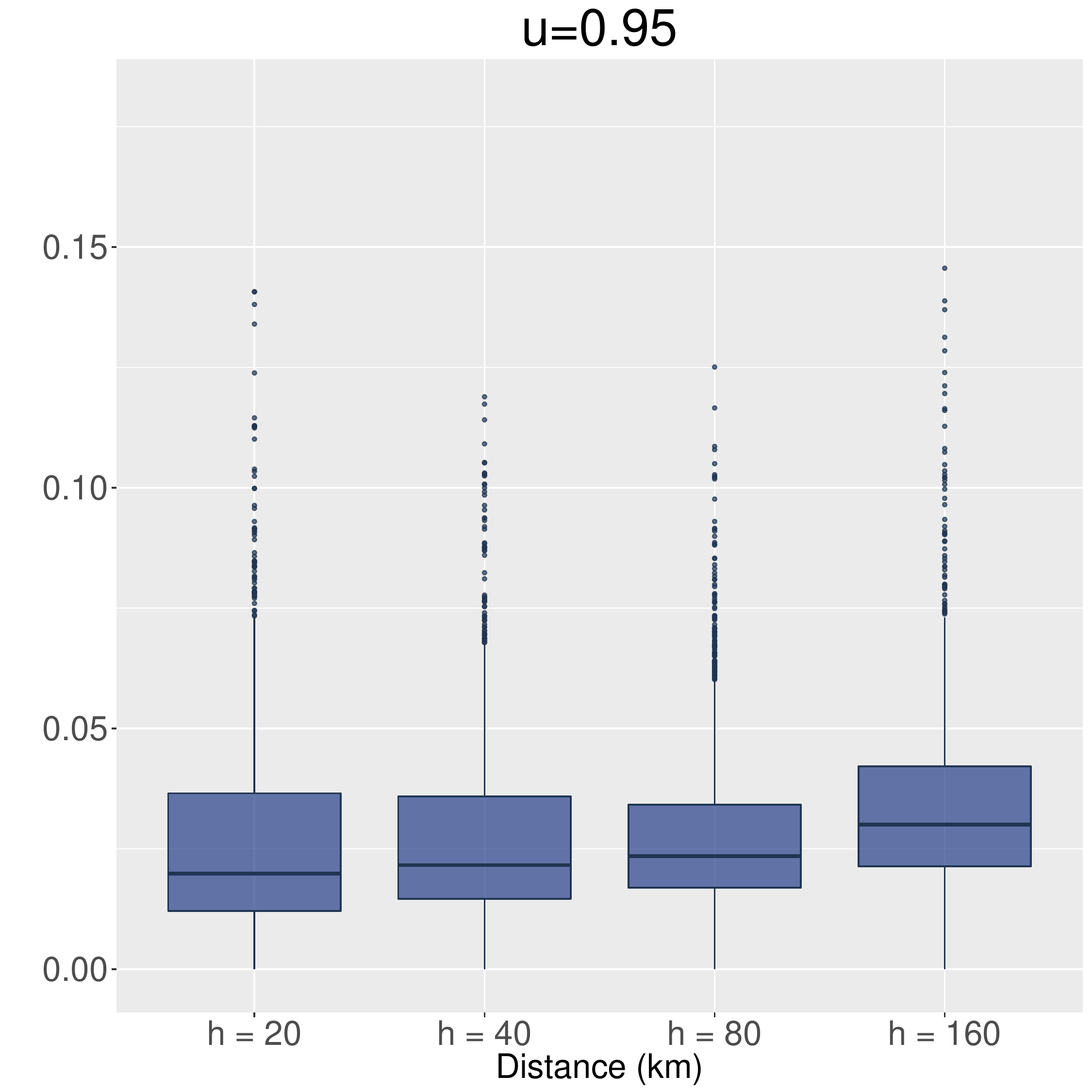}
	\includegraphics[scale = 0.28]{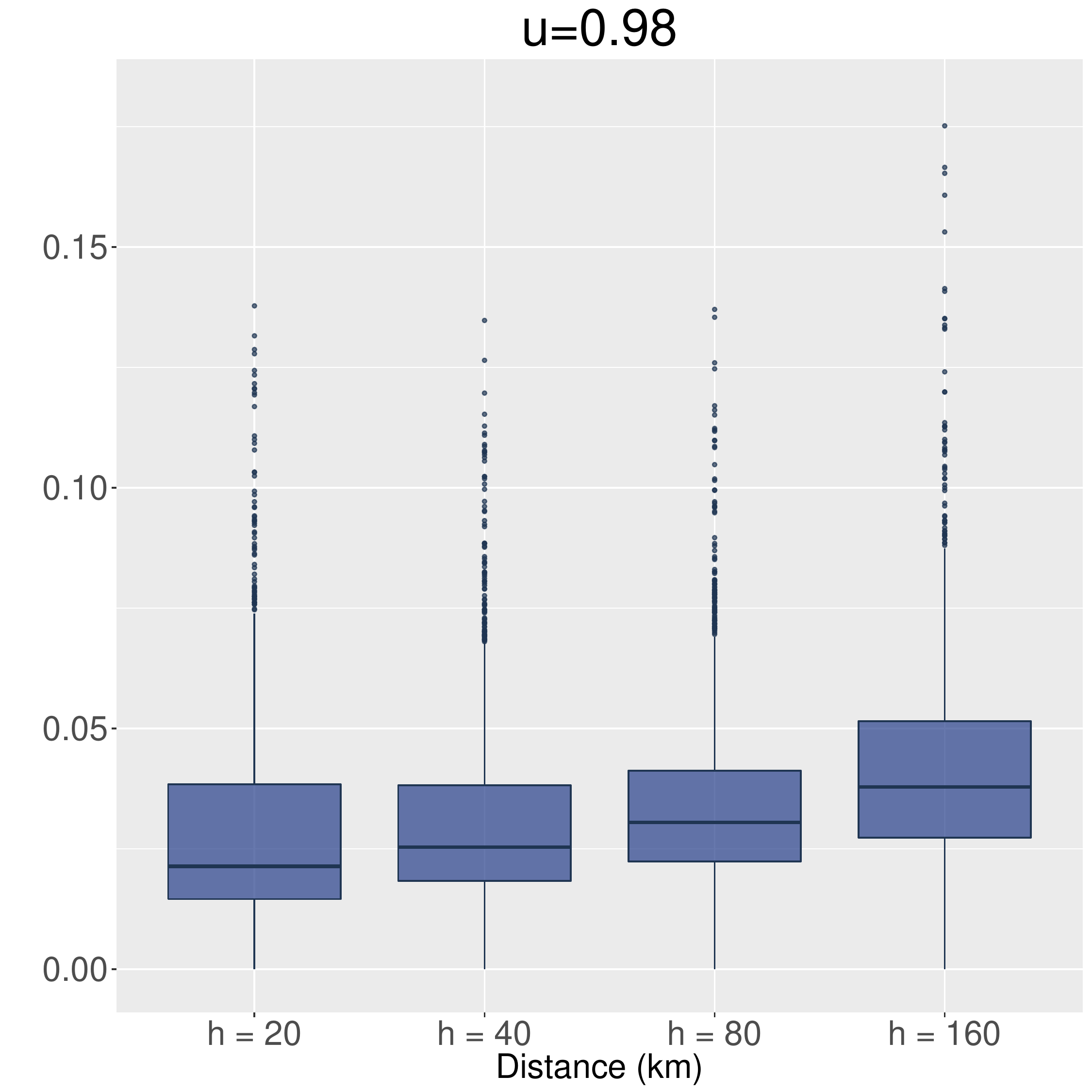}
	\caption{\footnotesize {Summary of standard deviations for estimates of $\chi_h(u)$ for $u=0.95$ ({left)} and $u=0.98$ ({right}) and each distance $h=20, 40, 80, 160$km. The standard deviations were computed from 300 block bootstrap samples with monthly blocks.}}
 \label{fig:sdchi.pdf}

\end{figure}

To further validate our fitted model, Figure~\ref{fig:chiu} compares empirical and fitted conditional tail probabilities, as defined in~\eqref{eq:chiu}. The scarcity of observations in the right tail makes the empirical estimates {of} $\chi_h(u)$ highly variable, which in turn yields wide empirical confidence envelopes (in dark blue). Nevertheless, the pointwise {model-based} estimates (in green) and the 95\% confidence envelope (in light blue) suggest that our model succeeds in capturing {the} {decaying} extremal dependence {between} close-by sites, {with relatively low uncertainty}. Overall, our model has great tail flexibility, and our local estimation approach can uncover complex non-stationary patterns of extremal dependence over space, which has important implications in terms of regional risk assessment of extreme precipitation events.

Return levels and return periods are intuitive measures to quantify the risk associated with extreme events. To assess how the different spatial tail structures affect the joint occurrence of precipitation extremes, we computed the return periods (in years) associated with the events $E_{(k)}=\{Y_1^{(k)}>F_1^{-1}(u), Y_2^{(k)}>F_2^{-1}(u),Y_3^{(k)}>F_3^{-1}(u)\}$ of observing simultaneous extreme {precipitation} at three nearby stations {within} the same state $k$, $k=1,\ldots,5$, where $u\in(0,1)$ is a high (uniform) quantile. The return periods were {estimated} {by} simulating $N = 5\times 10^5$ replicates from our fitted copula model {at each one of the 15 locations. These replicates were transformed to a common uniform scale using~\eqref{cdfW1}, and event probabilities $p_{(k)}=\pr\{E_{(k)}\}$ were then {estimated as} $\hat{p}_{(k)} = N^{-1}\sum_{i=1}^N\mathbb I\{U_{i1}^{(k)} > u, U_{i2}^{(k)} >u, U_{i3}^{(k)} >u\}$, where $U_{ij}^{(k)}$ is the $i$-th simulated value at the $j$-th station ($j=1,2,3$) in state $k$, and $\mathbb I(\cdot)$ is the indicator function. Return period {estimates} were finally obtained by taking the inverse of these {estimated} probabilities $\hat p_{(k)}$, adjusting for the time unit (years).} Table~\ref{tablePROBS} displays the average of the thresholds $F_1^{-1}(u)$, $F_2^{-1}(u)$ and $F_3^{-1}(u)$ used at the three stations on the scale of the data {corresponding to the $94\%$ and $99\%$ quantiles} for Louisiana, Mississippi, Kentucky, Florida and Tennessee, notorious to have experienced several extreme precipitation events in the recent past. 
 \begin{table}
 \begin{center}
 \caption{ 
  \label{tablePROBS}\footnotesize Pointwise estimates and 95\% confidence intervals for the return periods (in years) associated with the joint probability of observing an extreme precipitation event exceeding the {$94\%$ and $99\%$ quantiles} at three locations simultaneously. The $1^{\rm st}$ column reports the state of these locations, while the $2^{\rm nd}$ column represents the {average thresholds} (in hundredths of an inch) {for each quantile.}}
   \vspace{5pt}       
\begin{tabular}{l| c| *{3}{l}}
&Average thresholds  & \multicolumn{3}{c}{Return period (years)} \\\hline
State&$u=94\%$ / $u=99\%$ & Model $u=94\%$ & Empirical $u=94\%$ & Model $u=99\%$\\\hline
  Louisiana &343.0 / 664.9 & 0.51 (0.51, 0.52) & 0.46 (0.35,0.58) & 1.50 (1.46, 1.55)\\
  Mississippi &317.9 / 493.2 & 0.54 (0.53, 0.55) & 0.95 (0.26,1.63) & 1.69 (1.64,1.74) \\
  Kentucky & 267.8 / 514.3& 0.55 (0.54, 0.56) & 0.63 (0.45,0.82) & 1.80 (1.74,1.86)\\
  Florida &343.0 / 611.1 & 0.86 (0.84, 0.88) & 1.01 (0.33,2.34) & 4.15 (3.93, 4.37)\\
  Tennessee & 319.9 / 605.1& 0.65 (0.64, 0.66) & 0.75 (0.52,0.98) & 2.24 (2.16,2.32)\\
\end{tabular}
\end{center}
\end{table}
{Return period estimates along with bootstrap-based 95\% confidence intervals show that the selected locations in the five states are {at} risk of simultaneous extreme events. According to our fitted model, such spatial extreme events occur on average once every 6 to 11 months when $u=94\%$ and once every 1 to 4 years when $u=99\%$. {Empirical estimates and confidence intervals for the return periods {(provided only for $u=94\%$ due to the high variability of empirical estimates when $u=99\%$)} show that our estimates are sensible when we take the associated uncertainty into account}. These return period estimates, of course, strongly depend on the relative positions of the selected locations. Note that the uncertainty, which accounts for both marginal and dependence estimation uncertainty, is quite low for the model-based estimates.

%  \label{tablePROBS}\footnotesize Pointwise estimates and 95\% confidence intervals for the return periods (in years) associated with the joint probability of observing an extreme precipitation event exceeding the $94\%$ quantile at three locations simultaneously. The $1^{\rm st}$ column reports the state of these locations, while the $2^{\rm nd}$ column represents the thresholds (in hundredths of inches) at each of the three selected locations.}    

%& & \multicolumn{2}{c}{Return period (years)} \\\hline
%State&Thresholds (0.01 inch) & Model-based & Empirical \\\hline
%  Louisiana &\{335.7, 344.7, 348.5\} & 0.51 (0.51, 0.52) & 0.46 (0.35,0.58)\\
%  Mississippi &\{337.9, 308.6, 307.2\} & 0.54 (0.53, 0.55) & 0.95 (0.26,1.63) \\
%  Kentucky & \{270.0, 294.0, 239.5\}& 0.55 (0.54, 0.56) & 0.63 (0.45,0.82)\\
%  Florida &\{270.0, 460.0, 299.0\} & 0.86 (0.84, 0.88) & 1.01 (0.33,2.34)\\
%  Tennessee & \{302.4, 343.1, 314.1 \}& 0.65 (0.64, 0.66) & 0.75 (0.52,0.98)\\

\vspace{-.3cm}

%===========================
%===========================
%======= CONCLUSSION =======
%===========================
%===========================
\section{Concluding remarks}\label{conclusion}
In this paper, we have developed a censored local likelihood methodology based on factor copula models for sub-asymptotic spatial extremes observed over large heterogeneous regions. Our proposed modeling approach can capture complex non-stationary patterns and is well-suited when the dependence strength weakens as events become more extreme. This contrasts with the current spatial extremes literature, which often relies on asymptotic models with a more rigid tail structure. %Our model can be efficiently fitted to high threshold exceedances using a local censored likelihood estimation procedure at a reasonable computational burden. 
Our extensive simulation experiments demonstrate the flexibility of our model and the efficiency of our local estimation approach based on high threshold exceedances, while providing some guidance on the selection of regional neighborhoods.

%By fitting our model to precipitation data over the whole contiguous U.S., a diverse and complex tail dependence structure emerges, revealing rich and intuitive spatial patterns. Moreover, our model can be used in practice to compute return periods associated with simultaneous extreme events {at a local scale}.

{Our methodology provides a general picture of the local dependence characteristics governing precipitation extremes across the contiguous U.S.. Specifically, our fitted model revealed a diverse and complex tail dependence structure, with rich and intuitive spatial patterns. Because of the small-scale nature of the precipitation process {(see, e.g.,~\citealp{wilby1998statistical,katz2002statistics})}, our paper was focused on uncovering local dependence characteristics rather than trying to accurately capture long-distance properties. Our proposed model lacks long-distance independence, which is a common feature of most models for spatial extremes. Thus, our model must be interpreted locally, or perhaps regionally, and care is needed when extrapolating to distances higher than the ones used to fit the model.} 
Further research should be devoted to developing models for spatial and spatio-temporal extremes that can more realistically capture long-range independence. {\cite{morris2017space} discuss one way to achieve this based on random partitions of the study region}.

%{Our modeling approach belongs to the class of ``penultimate'' models, and it is specifically tailored for extreme but non-asymptotic data, which implies that the threshold over which extremes are defined needs to be chosen with care. To illustrate this, Figure~\ref{fig:boxplots0809.pdf} shows boxplots for the standard deviation of estimates of the rate and range parameters using the $80\%$ and $90\%$-quantile. We can see that estimates are more focused in terms of their standard deviation when using the $80\%$-quantile threshold.}

Because our sub-asymptotic model captures weakening of dependence as events become more extreme, it is well-suited at levels less extreme than usually considered in practice. In our data application, we fitted our model at the $80\%$-quantile, and showed that the fit was satisfactory for a wide range of quantiles. Our approach has the additional advantage of using more data for inference, which yields lower estimation variability, as illustrated in Figure~\ref{fig:boxplots0809.pdf} which compares the fits at the $80\%$ and $90\%$ thresholds.
\begin{figure}[!t]
\centering
	\includegraphics[scale = 0.33]{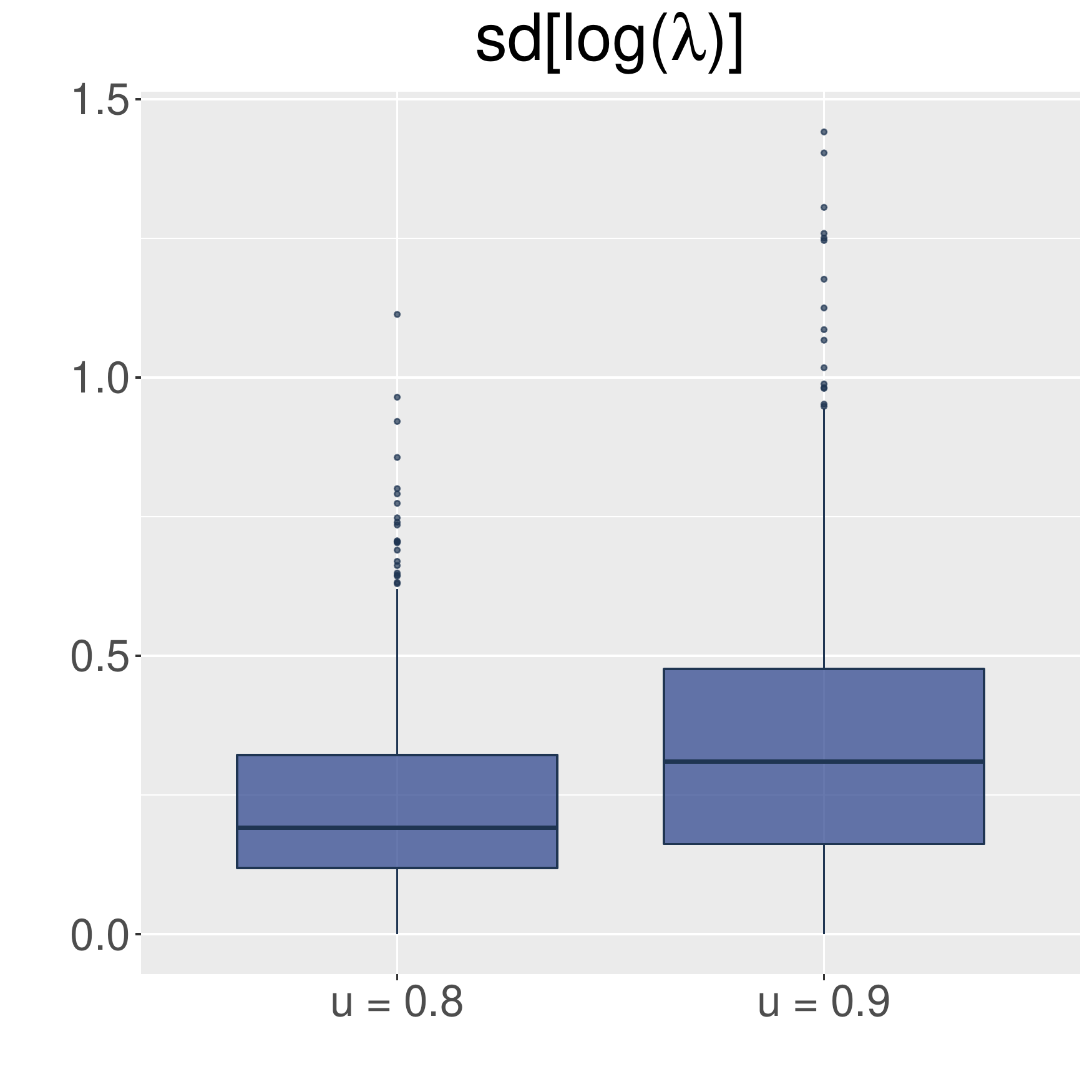}
	\includegraphics[scale = 0.33]{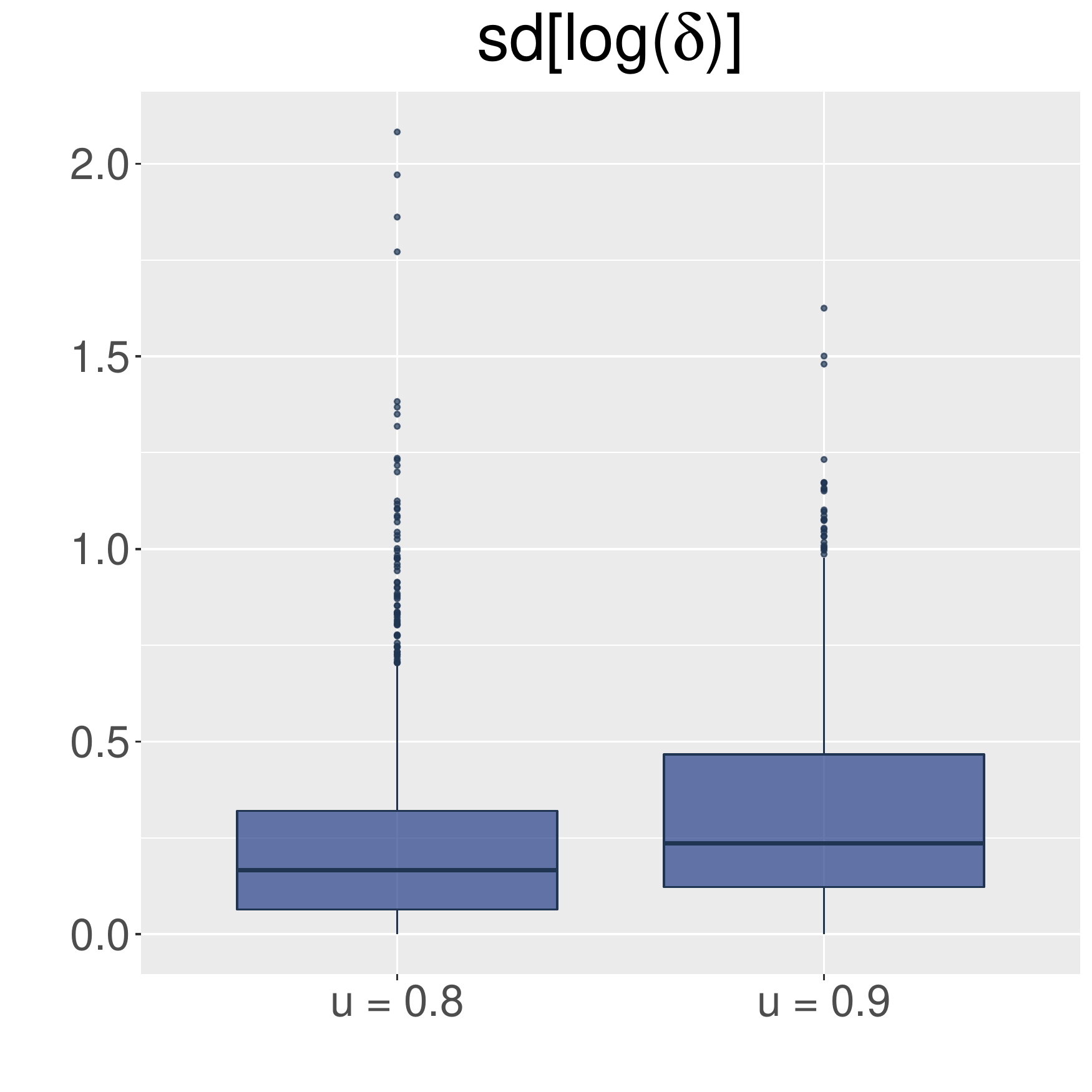}
	\caption{\footnotesize Boxplots of the standard deviation of the estimates of the log-rate (left) and log-range (right) parameters using the $80\%$ and $90\%$ quantiles.}
	\label{fig:boxplots0809.pdf}
\end{figure}

{In our application, standard deviations were computed using a block bootstrap procedure, which can be computationally expensive. It would also be possible to exploit the faster-to-compute Fisher information, provided the assumption of temporal independence is satisfied.}

%================================
%================================
%======= Acknowledgements =======
%================================
%================================
\section*{Acknowledgements}
We thank Luigi Lombardo (University of Twente) for cartographic support and Eduardo Gonz\'alez (KAUST) for computational support. We extend our thanks to Dan Cooley (Colorado State University) for helpful comments and suggestions. Support from the KAUST Supercomputing Laboratory and access to Shaheen is also gratefully acknowledged. We are particularly grateful to the two referees for their comments and suggestions that have led to a much improved version of this paper.

%========================
%========================
%======= APPENDIX =======
%========================
%========================
\appendix\label{Appendix}
\section{Simplified expressions for the model likelihood}\label{simplified}
In this appendix, we obtain simplified expressions for the local censored likelihood function~\eqref{WLL} based on Model~\eqref{LocalModel}, in order to compute the maximum likelihood estimates in a reasonable time, while avoiding numerical instabilities.

% ====================================================================================
\subsection{Joint density of the $W$ process in \eqref{LocalModel}}
From expression \eqref{pdfW}, {we have}
\begin{align*}
 f_{D}^{\mathbf{W}}(\bold{w})&= \lambda\int_0^\infty \phi_D(\bold{w}-v\bold{1}_D;\mathbf{\Sigma}_{\mathbf{Z}})\exp(-\lambda v)\dt v\\
 &=  \lambda\int_0^\infty (2\pi)^{-D/2}(\det\mathbf{\Sigma}_{\mathbf{Z}})^{-1/2}\exp\left(\frac{{a_4}^2a_3 - a_1}{2}\right)\exp\left\{-\frac{1}{2a_3^{-1}}\left(v - a_4\right)^2\right\}\dt v\\
 &=  \lambda (2\pi)^{-(D-1)/2}a_3^{-1/2}(\det\mathbf{\Sigma}_{\mathbf{Z}})^{-1/2}\exp\left(\frac{{a_4}^2a_3 - a_1}{2}\right)\Phi(a_3^{1/2}a_4),
\end{align*}
where $a_1 = \bold{w}^T\mathbf{\Sigma}_{\mathbf{Z}}^{-1}\bold{w}$, $a_2 = \bold{1}_D^T\mathbf{\Sigma}_{\mathbf{Z}}^{-1}\bold{w}$, $a_3 = \bold{1}_D^T\mathbf{\Sigma}_{\mathbf{Z}}^{-1}\bold{1}_D$, $a_4 = (a_2 - \lambda)/a_3$, $\bold{w} = (w_{1},\ldots,w_{D})^T$, $\bold{1}_D = (1,\ldots,1)^T\in\mathbb R^D$. The marginal density $f_1^{\mathbf{W}}(w;\lambda)$ with $D = 1$ can be easily deduced.

\subsection{Joint distribution of the $W$ process in \eqref{LocalModel}}
Here we express the finite-dimensional distribution $F_D^{\mathbf{W}}$ as a function of conditional multivariate normal distributions. This alternative formulation allows us to use efficient routines implemented in \texttt{R}, substantially improving accuracy and execution time. Using integration by parts in expression \eqref{cdfW}, {we} obtain
\begin{align*}
F_D^{\mathbf{W}}(\bold{w}) &= \Phi_{D}(\bold{w};\mathbf{\Sigma}_{\mathbf{Z}}) - \sum_{j = 1}^D\int_0^\infty\Phi_{D-1}(\bold{w}_{-j} - v\bold{1}_{D-1}- \mathbf{\mu}_{\mid j}; \mathbf{\Sigma}_{\mathbf{Z}\mid j})\phi(w_j - v)\exp(-\lambda v)\dt v\\
&= \Phi_{D}(\bold{w};\mathbf{\Sigma}_{\mathbf{Z}}) - \sum_{j = 1}^D\exp\left(\frac{\lambda^2}{2} - \lambda w_j\right)\int_0^\infty\Phi_{D-1}\left(\bold{w}_{-j} - v\bold{1}_{D-1}- \mathbf{\mu}_{\mid j}; \mathbf{\Sigma}_{\mathbf{Z}\mid j}\right)\phi(v-w_j + \lambda)\dt v,
\end{align*}
where $\bold{w}_{-j} = (w_1,\ldots, w_{j-1}, w_{j+1},\ldots,w_D)^T\in\mathbb R^{D-1}$ and $\mathbf{\mu}_{\mid j}$ and $ \mathbf{\Sigma}_{\mathbf{Z}\mid j}$ are the conditional mean and covariance matrix of $\bold{w}_{-j}|w_j$, respectively. Precisely,
\begin{equation*}
\mathbf{\mu}_{\mid j} = \mathbf{\Sigma}_{\mathbf{Z};-j,j}(w_j - v),\qquad \mathbf{\Sigma}_{\mathbf{Z}\mid j} = \mathbf{\Sigma}_{\mathbf{Z};-j,-j} - \mathbf{\Sigma}_{\mathbf{Z};-j,j}\mathbf{\Sigma}_{\mathbf{Z};-j,j}^T,\qquad j=1,\ldots,D,
\end{equation*}
where $\mathbf{\Sigma}_{\mathbf{Z};-j,j}$ denotes the $j$th column of the matrix $\mathbf{\Sigma}_{\mathbf{Z}}$ with the $j$th row removed, etc. Now,
\begin{align}
\int_0^\infty\Phi_{D-1}&\left(\bold{w}_{-j} - v\bold{1}_{D-1}- \mathbf{\mu}_{\mid j}; \mathbf{\Sigma}_{\mathbf{Z}\mid j}\right)\phi(v-w_j + \lambda)\dt v \nonumber\\
&= \int_0^\infty\Phi_{D-1}\left\{\bold{w}_{-j} - \mathbf{\Sigma}_{\mathbf{Z};-j,j}w_j-v(\bold{1}_{D-1} - \mathbf{\Sigma}_{\mathbf{Z};-j,j});\mathbf{\Sigma}_{\mathbf{Z}\mid j}\right\}\phi(v-w_j + \lambda)\dt v \nonumber\\
&= \pr\left(\bold{Y}\leq \mathbf{\Sigma}_{\mathbf{Z}\mid j}^{-1/2}\left\{\bold{w}_{-j} - \mathbf{\Sigma}_{\mathbf{Z};-j,j}w_j-\tilde{V}(\bold{1}_{D-1} - \mathbf{\Sigma}_{\mathbf{Z};-j,j})\right\}, \tilde{V}>0\right)\nonumber\\
%&= \pr\left(\mathbf{\Sigma}_{\mathbf{Z}\mid j}^{1/2}\bold{Y} + \tilde{V}(\bold{1}_{D-1} - \mathbf{\Sigma}_{\mathbf{Z}; -j,j})\leq \bold{w}_{-j} - \mathbf{\Sigma}_{\mathbf{Z};-j,j}w_j,\tilde{V}>0\right)\nonumber\\
&= \pr\left(\bold{Q}_j\leq \bold{w}_{-j} - \mathbf{\Sigma}_{\mathbf{Z};-j,j}w_j,-\tilde{V}\leq 0\right)\label{eq:proba}
\end{align}
where $\bold{Y}\sim\mathcal{N}(\bold{0}, \bold{I}_{D-1})$ follows the standard multivariate normal distribution in dimension $D-1$, and is independent of the univariate normal random variable $\tilde{V}$ with mean $w_j - \lambda$ and unit variance, and where $\bold{Q}_j=\mathbf{\Sigma}_{\mathbf{Z}\mid j}^{1/2}\bold{Y} + \tilde{V}(\bold{1}_{D-1} - \mathbf{\Sigma}_{\mathbf{Z}; -j,j})$. To compute \eqref{eq:proba}, notice that the $D$-dimensional vector $\bold{Q}_{j,0} =(\bold{Q}_j^T,-\tilde{V})^T$ is jointly multivariate normal, and that $\bold{Q}_j$ has mean $(w_j -\lambda)(\bold{1}_{D-1} - \mathbf{\Sigma}_{\mathbf{Z};-j,j})$ and covariance matrix $\mathbf{\Sigma}_{\mathbf{Z}\mid j} + (\bold{1}_{D-1} - \mathbf{\Sigma}_{\mathbf{Z};-j,j})(\bold{1}_{D-1} - \mathbf{\Sigma}_{\mathbf{Z};-j,j})^T=\bold{1}_{D-1}\bold{1}_{D-1}^T - 2\bold{1}_{D-1}\mathbf{\Sigma}_{\mathbf{Z};-j,j}^T+\mathbf{\Sigma}_{\mathbf{Z};-j,-j}$. Hence, {we have}
\begin{equation*}
\pr\left(\bold{Q}_j\leq \bold{w}_{-j} - \mathbf{\Sigma}_{\mathbf{Z};-j,j}w_j,-\tilde{V}\leq 0\right)= \pr\left(\bold{Q}_{j,0}\leq \bold{q}_{j,0}\right),\quad \bold{Q}_{j,0} = (\bold{Q}_j^T, -\tilde{V})^T\sim\mathcal{N}_D\left(\mathbf{\mu}_{j,0},\mathbf\Omega_{j,0}\right),
\end{equation*}
where 
\begin{align*}
\bold{q}_{j,0} &= \begin{pmatrix}\bold{w}_{-j} - \mathbf{\Sigma}_{\mathbf{Z};-j,j}w_j\\ 0\end{pmatrix},  \quad \mathbf{\mu}_{j,0} = \begin{pmatrix} (w_j - \lambda)(\bold{1}_{D-1} - \mathbf{\Sigma}_{\mathbf{Z};-j,j})\\ \lambda -w_j \end{pmatrix},\\
\mathbf\Omega_{j,0} &=\begin{pmatrix} \bold{1}_{D-1}\bold{1}_{D-1}^T - 2\mathbf{\Sigma}_{\mathbf{Z};-j,j}+\mathbf{\Sigma}_{\mathbf{Z};-j,-j}&\mathbf{\Sigma}_{\mathbf{Z}; -j,j}-\bold{1}_{D-1}\\\mathbf{\Sigma}_{\mathbf{Z}; -j,j}-\bold{1}_{D-1}&1\end{pmatrix}.
\end{align*}
Finally, {we} obtain
\begin{align}
\label{eq:multivariatecdf}
F_D^{\mathbf{W}}(\bold{w}) &= \Phi_{D}(\bold{w};\mathbf{\Sigma}_{\mathbf{Z}})  - \sum_{j = 1}^D\exp\left(\frac{\lambda^2}{2} - \lambda w_j\right)\Phi_{D}\left(\bold{q}_{j,0}- \mathbf{\mu}_{j,0};\mathbf\Omega_{j,0}\right).
\end{align}
In particular, by setting $D=1$, the marginal distribution may be written as 
\begin{equation}
\label{eq:marginalcdf}
F_1^{\mathbf{W}}(w;\lambda) = \Phi(w) - \exp(\lambda^2/2 - \lambda w)\Phi(w - \lambda).
\end{equation}

\subsection{Partial derivatives of the joint distribution $F_D^{\mathbf{W}}$}
We want to compute $\partial_{J_i}F_D^{\mathbf{W}}$, the partial derivative of $F_D^{\mathbf{W}}$ with respect to the variables indexed by the set $J_i\subseteq\{1,\ldots,D\}$. Without loss of generality, we here assume that $J_i=\{1,\ldots,k\}$, for some integer $1\leq k\leq D-1$. Writing $r=D-k$, {we} can express the covariance matrix $\mathbf{\Sigma}_{\mathbf{Z}}$ in block notation as follows:
$$\mathbf{\Sigma}_{\mathbf{Z}} = \begin{pmatrix} \mathbf{\Sigma}_{\mathbf{Z},k}&\Sigma_{\mathbf{Z},rk}^T\\\mathbf{\Sigma}_{\mathbf{Z},rk}&\mathbf{\Sigma}_{\mathbf{Z},r}\end{pmatrix}.$$
Then, writing $\bold{w}=(\bold{w}_k^T,\bold{w}_r^T)^T$, with $\bold{w}_k=(w_1,\ldots,w_k)^T$ and $\bold{w}_r=(w_{k+1},\ldots,w_D)^T$, {we} have
\begin{align*}
\partial_{J_i}F_D^{\mathbf{W}}(\bold{w}) &=\partial_{1:k}^k\lambda\int_0^\infty\Phi_{D}(\bold{w}-v\bold{1}_D;\mathbf{\Sigma}_{\mathbf{Z}})\exp(-\lambda v)\dt v\\
&= \lambda\int_0^\infty\phi_{k}(\bold{w}_k-v\bold{1}_k;\mathbf{\Sigma}_{\mathbf{Z},k})\Phi_r(\bold{w}_r-v\bold{1}_r-\mu_{|k}; \mathbf{\Sigma}_{\mathbf{Z}\mid k})\exp(-\lambda v)\dt v,
\end{align*}
where $\mu_{\mid k}$ and $\mathbf{\Sigma}_{\mathbf{Z}\mid k}$ are the conditional mean and covariance of $\bold{w}_r\mid\bold{w}_k$, respectively, and are obtained as
$
\mu_{\mid k} = \mathbf{\Sigma}_{\mathbf{Z},rk}\mathbf{\Sigma}_{\mathbf{Z},k}^{-1}(\bold{w}_k - v\bold{1}_k)$, $\mathbf{\Sigma}_{\mathbf{Z}\mid k} = \mathbf{\Sigma}_{\mathbf{Z},r} - \mathbf{\Sigma}_{\mathbf{Z},rk}\mathbf{\Sigma}_{\mathbf{Z},k}^{-1}\mathbf{\Sigma}_{\mathbf{Z},rk}^T.
$
Tedious but straightforward calculations yield $\phi_{k}(\bold{w}_k-v\bold{1}_k;\mathbf{\Sigma}_{\mathbf{Z},k})\exp(-\lambda v) = C\; \phi\{b_3^{1/2}(v-b_4)\}$, where $C = (2\pi)^{-(k-1)/2}b_3^{-1/2}(\det{\mathbf{\Sigma}_{\mathbf{Z},k}})^{-1/2}\exp\left\{(b_4^2b_3 - b_1)/2\right\}$, with $b_1 = \bold{w}_k^T\mathbf{\Sigma}_{\mathbf{Z},k}^{-1}\bold{w}_k$, $b_2 = \bold{1}_k^T\mathbf{\Sigma}_{\mathbf{Z},k}^{-1}\bold{w}_k$, $b_3 = \bold{1}_k^T\mathbf{\Sigma}_{\mathbf{Z},k}^{-1}\bold{1}_k$, and $b_4 = (b_2 - \lambda)/b_3$. Therefore,
\begin{align}
\partial_{J_i} F_D^{\mathbf{W}}(\bold{w}) &= \lambda C\int_0^\infty\phi\{b_3^{1/2}(v-b_4)\}\Phi_r(\bold{w}_r - v\bold{1}_r-\mu_{\mid k}; \mathbf{\Sigma}_{\mathbf{Z}\mid k})\dt v\nonumber\\
&= \lambda C\int_0^\infty\phi\{b_3^{1/2}(v-b_4)\}\Phi_r\left\{\bold{w}_r - \mathbf{\Sigma}_{\mathbf{Z},rk}\mathbf{\Sigma}_{\mathbf{Z},k}^{-1}\bold{w}_k - v(\bold{1}_r- \mathbf{\Sigma}_{\mathbf{Z},rk}\mathbf{\Sigma}_{\mathbf{Z},k}^{-1}\bold{1}_k);\mathbf{\Sigma}_{\mathbf{Z}\mid k}\right\}\dt v\nonumber\\
&= \lambda C\;\pr\left(\bold{Y}\leq \mathbf{\Sigma}_{\mathbf{Z}\mid k}^{-1/2}\left\{\bold{w}_r - \mathbf{\Sigma}_{\mathbf{Z},rk}\mathbf{\Sigma}_{\mathbf{Z},k}^{-1}\bold{w}_k - \tilde{V}(\bold{1}_r- \mathbf{\Sigma}_{\mathbf{Z},rk}\mathbf{\Sigma}_{\mathbf{Z},k}^{-1}\bold{1}_k)\right\},\tilde{V}>0\right)\nonumber\\
%&= \lambda C\;\pr\left(\mathbf{\Sigma}_{\mathbf{Z}\mid k}^{1/2}\bold{Y}+\tilde{V}(\bold{1}_r- \mathbf{\Sigma}_{\mathbf{Z},rk}\mathbf{\Sigma}_{\mathbf{Z},k}^{-1}\bold{1}_k)\leq \bold{w}_r - \mathbf{\Sigma}_{\mathbf{Z},rk}\mathbf{\Sigma}_{\mathbf{Z},k}^{-1}\bold{w}_k,\tilde{V}>0\right)\nonumber\\
&= \lambda C\;\pr\left(\mathbf{Q}_r\leq \bold{w}_r - \mathbf{\Sigma}_{\mathbf{Z},rk}\mathbf{\Sigma}_{\mathbf{Z},k}^{-1}\bold{w}_k,-\tilde{V}\leq 0\right),\label{eq:derivformula}
\end{align}
where $\bold{Y}\sim\mathcal{N}(\bold{0}, \bold{I}_r)$ has the standard multivariate normal distribution in dimension $r$, and is independent of the univariate normal random variable $\tilde{V}$ with mean $b_4$ and variance $b_3^{-1}$, and where $\mathbf{Q}_r=\mathbf{\Sigma}_{\mathbf{Z}\mid k}^{1/2}\bold{Y}+\tilde{V}(\bold{1}_r- \mathbf{\Sigma}_{\mathbf{Z},rk}\mathbf{\Sigma}_{\mathbf{Z},k}^{-1}\bold{1}_k)$. To compute \eqref{eq:derivformula}, notice that the $(r+1)$-dimensional vector $\bold{Q}_{r,0} =(\bold{Q}_r^T,-\tilde{V})^T$ is jointly multivariate normal, and that $\bold{Q}_r$ has mean $b_4(\bold{1}_r- \mathbf{\Sigma}_{\mathbf{Z},rk}\mathbf{\Sigma}_{\mathbf{Z},k}^{-1}\bold{1}_k)$ and covariance $\mathbf{\Sigma}_{\mathbf{Z}\mid k} + b_3^{-1}(\bold{1}_r- \mathbf{\Sigma}_{\mathbf{Z},rk}\mathbf{\Sigma}_{\mathbf{Z},k}^{-1}\bold{1}_k)(\bold{1}_r- \mathbf{\Sigma}_{\mathbf{Z},rk}\mathbf{\Sigma}_{\mathbf{Z},k}^{-1}\bold{1}_k)^T$. Hence, 
\begin{equation*}
\pr\left(\mathbf{Q}_r\leq \bold{w}_r - \mathbf{\Sigma}_{\mathbf{Z},rk}\mathbf{\Sigma}_{\mathbf{Z},k}^{-1}\bold{w}_k,-\tilde{V}\leq 0\right)= \pr\left(\bold{Q}_{r,0}\leq \bold{q}_{r,0}\right),
\end{equation*}
where $\bold{Q}_{r,0} = (\bold{Q}_r^T, -\tilde{V})^T\sim\mathcal{N}_{r+1}\left(\mathbf{\mu}_{r,0},\mathbf\Omega_{r,0}\right)$ and
\begin{align*}
\bold{q}_{r,0} &= \begin{pmatrix}\bold{w}_r - \mathbf{\Sigma}_{\mathbf{Z},rk}\mathbf{\Sigma}_{\mathbf{Z},k}^{-1}\bold{w}_k\\ 0\end{pmatrix},  \quad \mathbf{\mu}_{r,0} = \begin{pmatrix} b_4(\bold{1}_r- \mathbf{\Sigma}_{\mathbf{Z},rk}\mathbf{\Sigma}_{\mathbf{Z},k}^{-1}\bold{1}_k)\\ -b_4 \end{pmatrix},\\
\mathbf\Omega_{r,0} &=\begin{pmatrix} \mathbf{\Sigma}_{\mathbf{Z}\mid k} + b_3^{-1}(\bold{1}_r- \mathbf{\Sigma}_{\mathbf{Z},rk}\mathbf{\Sigma}_{\mathbf{Z},k}^{-1}\bold{1}_k)(\bold{1}_r- \mathbf{\Sigma}_{\mathbf{Z},rk}\mathbf{\Sigma}_{\mathbf{Z},k}^{-1}\bold{1}_k)^T&b_3^{-1}(\mathbf{\Sigma}_{\mathbf{Z},rk}\mathbf{\Sigma}_{\mathbf{Z},k}^{-1}\bold{1}_k-\bold{1}_r)\\b_3^{-1}(\mathbf{\Sigma}_{\mathbf{Z},rk}\mathbf{\Sigma}_{\mathbf{Z},k}^{-1}\bold{1}_k-\bold{1}_r)^T&b_3^{-1}\end{pmatrix}.
\end{align*}
Finally, {we} obtain
\begin{align*}
\partial_{J_i} F_D^{\mathbf{W}}(\bold{w}) &= \lambda C\; \Phi_{r+1}\left(\bold{q}_{r,0} - \mathbf{\mu}_{r,0};\mathbf\Omega_{r,0}\right).
\end{align*}

% ====================================================================================
% ====================================================================================
\section{Tail properties of the non-stationary exponential factor copula model}\label{tailprop}

In this appendix, we provide detailed information on the sub-asymptotic joint tail behavior of the stationary exponential factor model~\eqref{LocalModel}.

% ====================================================================================
\begin{lemma}\label{lemma1}
In Model~\eqref{GlobalModel}, the marginal distribution at site $\mathbf{s}\in\mathcal{S}$, $F_{1;\mathbf{s}}^{\mathbf{W}}(w;\lambda_{\mathbf{s}})$, satisfies
\begin{align*}
F_{1;\mathbf{s}}^{\mathbf{W}}(w;\lambda_{\mathbf{s}}) &= 1-\exp(\lambda_{\mathbf{s}}^2/2-\lambda_{\mathbf{s}} w)+\lambda_{\mathbf{s}}\phi(w)\{w(w-\lambda_{\mathbf{s}})\}^{-1} + O\{\phi(w)w^{-4}\},\qquad w\to\infty.
\end{align*}
\end{lemma}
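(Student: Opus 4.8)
The plan is to reduce the lemma to the large-$w$ asymptotics of an explicit closed-form expression. In Model~\eqref{GlobalModel} we have $W(\mathbf{s})=Z(\mathbf{s})+\lambda_{\mathbf{s}}^{-1}E$ with $Z(\mathbf{s})\sim\mathcal N(0,1)$ and $\lambda_{\mathbf{s}}^{-1}E\sim{\rm Exp}(\lambda_{\mathbf{s}})$ independent, so the marginal law at site $\mathbf{s}$ coincides with that of the stationary model~\eqref{LocalModel} with rate $\lambda_{\mathbf{s}}$; in particular, by~\eqref{cdfW1},
\[
F_{1;\mathbf{s}}^{\mathbf{W}}(w;\lambda_{\mathbf{s}})=\Phi(w)-\exp(\lambda_{\mathbf{s}}^2/2-\lambda_{\mathbf{s}}w)\,\Phi(w-\lambda_{\mathbf{s}}).
\]
Hence it suffices to expand the right-hand side as $w\to\infty$ for fixed $\mathbf{s}$.

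Next I would pass to the upper tail $\overline\Phi:=1-\Phi$. Substituting $\Phi(w)=1-\overline\Phi(w)$ and $\Phi(w-\lambda_{\mathbf{s}})=1-\overline\Phi(w-\lambda_{\mathbf{s}})$ gives
\[
F_{1;\mathbf{s}}^{\mathbf{W}}(w;\lambda_{\mathbf{s}})=1-\exp(\lambda_{\mathbf{s}}^2/2-\lambda_{\mathbf{s}}w)+R(w),\qquad R(w):=\exp(\lambda_{\mathbf{s}}^2/2-\lambda_{\mathbf{s}}w)\,\overline\Phi(w-\lambda_{\mathbf{s}})-\overline\Phi(w),
\]
so the first two terms of the claimed expansion are already present and only $R(w)$ needs analysis. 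The key device is the completing-the-square identity $\exp(\lambda_{\mathbf{s}}^2/2-\lambda_{\mathbf{s}}w)\,\phi(w-\lambda_{\mathbf{s}})=\phi(w)$, which lets the exponential prefactor be absorbed once the Mills-ratio representation of $\overline\Phi(w-\lambda_{\mathbf{s}})$ is inserted, putting both pieces of $R(w)$ on the common scale $\phi(w)$.

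Finally I would invoke the standard Mills-ratio expansion $\overline\Phi(x)=\phi(x)\{x^{-1}-x^{-3}+O(x^{-5})\}$ (obtained by successive integration by parts) at $x=w$ and at $x=w-\lambda_{\mathbf{s}}$, which yields
\[
R(w)=\phi(w)\left\{\left(\frac{1}{w-\lambda_{\mathbf{s}}}-\frac{1}{w}\right)-\left(\frac{1}{(w-\lambda_{\mathbf{s}})^{3}}-\frac{1}{w^{3}}\right)+O(w^{-5})\right\}.
\]
The leading difference is exactly $(w-\lambda_{\mathbf{s}})^{-1}-w^{-1}=\lambda_{\mathbf{s}}\{w(w-\lambda_{\mathbf{s}})\}^{-1}$, which I would keep in closed form (this reproduces the third term of the lemma verbatim), while $(w-\lambda_{\mathbf{s}})^{-3}-w^{-3}=O(w^{-4})$ and the remaining terms are $O(w^{-4})$ for fixed $\lambda_{\mathbf{s}}$; hence $R(w)=\lambda_{\mathbf{s}}\phi(w)\{w(w-\lambda_{\mathbf{s}})\}^{-1}+O\{\phi(w)w^{-4}\}$, which gives the stated expansion.

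There is no genuinely difficult step; the only point demanding care is the bookkeeping of error orders — one must check that the cancellation of the two $\phi(w)/w$ terms is precisely what produces the $O(\phi(w)w^{-2})$ third term, that this term is retained unexpanded, and that the next correction (from the $\phi(w)/w^{3}$ terms of the two Mills expansions and from their remainders) is genuinely $O\{\phi(w)w^{-4}\}$ as $w\to\infty$ with $\mathbf{s}$ held fixed.
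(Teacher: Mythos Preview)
Your proposal is correct and follows essentially the same approach as the paper: start from the closed-form marginal~\eqref{cdfW1}, insert the Mills-ratio expansion $1-\Phi(x)=\phi(x)\{x^{-1}-x^{-3}+O(x^{-5})\}$ at $x=w$ and $x=w-\lambda_{\mathbf{s}}$, use the identity $\exp(\lambda_{\mathbf{s}}^2/2-\lambda_{\mathbf{s}}w)\phi(w-\lambda_{\mathbf{s}})=\phi(w)$ to put everything on the scale $\phi(w)$, and simplify the difference $(w-\lambda_{\mathbf{s}})^{-1}-w^{-1}$. Your explicit isolation of $R(w)$ and the completing-the-square identity just make transparent what the paper does in one line.
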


\begin{proof}
From \eqref{eq:marginalcdf}, {we} have $F_{1;\mathbf{s}}^{\mathbf{W}}(w;\lambda_{\mathbf{s}})=\Phi(w)-\Phi(w-\lambda_{\mathbf{s}})\exp(\lambda_{\mathbf{s}}^2/2-\lambda_{\mathbf{s}} w)$.
Furthermore, thanks to a well-known expansion of the Gaussian tail, {we have} 
\begin{equation}
\label{MillsRatio}
1-\Phi(w)=\phi(w) \{w^{-1}-w^{-3}+O(w^{-5})\},\qquad w\to\infty.
\end{equation}
Plugging \eqref{MillsRatio} into the expression for $F_{1;\mathbf{s}}^{\mathbf{W}}(w;\lambda_{\mathbf{s}})$ yields, as $w\to\infty$,
\begin{eqnarray*}
F_{1;\mathbf{s}}^{\mathbf{W}}(w;\lambda_{\mathbf{s}})&=&1-\phi(w) \{w^{-1}-w^{-3}+O(w^{-5})\}\\
&&-\left[1-\phi(w-\lambda_{\mathbf{s}}) \{(w-\lambda_{\mathbf{s}})^{-1}-(w-\lambda_{\mathbf{s}})^{-3}+O(w^{-5})\}\right]\exp(\lambda_{\mathbf{s}}^2/2-\lambda_{\mathbf{s}} w)\\
&=&1-\exp(\lambda_{\mathbf{s}}^2/2-\lambda_{\mathbf{s}} w)+\phi(w)\left[(w-\lambda_{\mathbf{s}})^{-1}-(w-\lambda_{\mathbf{s}})^{-3}-w^{-1}+w^{-3}+O(w^{-5})\right]\\
&=&1-\exp(\lambda_{\mathbf{s}}^2/2-\lambda_{\mathbf{s}} w)+\lambda_{\mathbf{s}}\phi(w)\{w(w-\lambda_{\mathbf{s}})\}^{-1} + O\{\phi(w)w^{-4}\}.
\end{eqnarray*}
\end{proof}

% ====================================================================================
\begin{lemma}\label{lemma3}
In Model~\eqref{GlobalModel}, the marginal quantile function at location $\mathbf{s}\in\mathcal{S}$, $q^{\mathbf{W}}_{\mathbf{s}}(t;\lambda_{\mathbf{s}}) = {F_{1;\mathbf{s}}^{\mathbf{W}}}^{-1}(1-t^{-1};\lambda_{\mathbf{s}})$, admits the expansion
\begin{align*}
q^{\mathbf{W}}_{\mathbf{s}}(t;\lambda_{\mathbf{s}}) &= \lambda_{\mathbf{s}}^{-1}\log t + \lambda_{\mathbf{s}}/2 -t{\phi(\lambda_{\mathbf{s}}^{-1}\log t + \lambda_{\mathbf{s}}/2)\over \lambda_{\mathbf{s}}^{-2}\log^2 t-\lambda_{\mathbf{s}}^2/4}\{1+o(1)\},\quad t\to\infty.
\end{align*}
\end{lemma}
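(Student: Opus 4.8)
The plan is to obtain the expansion by inverting the tail formula of Lemma~\ref{lemma1} through a single bootstrapping (asymptotic inversion) step. Write $\bar F(w)=1-F_{1;\mathbf{s}}^{\mathbf{W}}(w;\lambda_{\mathbf{s}})$; since $F_{1;\mathbf{s}}^{\mathbf{W}}(\cdot;\lambda_{\mathbf{s}})$ is continuous and strictly increasing on $\mathbb{R}$, the quantile $q^{\mathbf{W}}_{\mathbf{s}}(t;\lambda_{\mathbf{s}})$ is well defined and characterised by $\bar F\{q^{\mathbf{W}}_{\mathbf{s}}(t;\lambda_{\mathbf{s}})\}=t^{-1}$, with $q^{\mathbf{W}}_{\mathbf{s}}(t;\lambda_{\mathbf{s}})\to\infty$ as $t\to\infty$. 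By Lemma~\ref{lemma1}, $\bar F(w)=\exp(\lambda_{\mathbf{s}}^2/2-\lambda_{\mathbf{s}} w)-\lambda_{\mathbf{s}}\phi(w)\{w(w-\lambda_{\mathbf{s}})\}^{-1}+O\{\phi(w)w^{-4}\}$ as $w\to\infty$, and since $\phi(w)=o\{\exp(-\lambda_{\mathbf{s}} w)\}$, the two $\phi$-terms are negligible relative to the leading exponential term for $w$ of order $\log t$. A crude argument therefore gives $\exp(\lambda_{\mathbf{s}}^2/2-\lambda_{\mathbf{s}} q^{\mathbf{W}}_{\mathbf{s}}(t;\lambda_{\mathbf{s}}))=t^{-1}\{1+o(1)\}$, whence $q^{\mathbf{W}}_{\mathbf{s}}(t;\lambda_{\mathbf{s}})=\lambda_{\mathbf{s}}^{-1}\log t+\lambda_{\mathbf{s}}/2+o(1)$.

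Next I would refine this. Set $w_0=\lambda_{\mathbf{s}}^{-1}\log t+\lambda_{\mathbf{s}}/2$, chosen so that $\exp(\lambda_{\mathbf{s}}^2/2-\lambda_{\mathbf{s}} w_0)=t^{-1}$ exactly, and write $q^{\mathbf{W}}_{\mathbf{s}}(t;\lambda_{\mathbf{s}})=w_0+\eta$ with $\eta=o(1)$ by the previous step. Substituting into $\bar F(w_0+\eta)=t^{-1}$, using Lemma~\ref{lemma1} once more, and dividing through by $t^{-1}$ gives $e^{-\lambda_{\mathbf{s}}\eta}-1=t\lambda_{\mathbf{s}}\phi(w_0+\eta)\{(w_0+\eta)(w_0+\eta-\lambda_{\mathbf{s}})\}^{-1}+O\{t\phi(w_0+\eta)w_0^{-4}\}$. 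Because $\eta\to0$ and $w_0$ grows only logarithmically, we have $w_0\eta\to0$, hence $\phi(w_0+\eta)=\phi(w_0)\exp(-w_0\eta-\eta^2/2)=\phi(w_0)\{1+o(1)\}$; moreover $t\phi(w_0)\to0$ faster than any power of $t$. The right-hand side is thus $o(1)$, which re-confirms $\eta\to0$, and expanding $e^{-\lambda_{\mathbf{s}}\eta}-1=-\lambda_{\mathbf{s}}\eta\{1+o(1)\}$ yields $\eta=-t\phi(w_0)\{w_0(w_0-\lambda_{\mathbf{s}})\}^{-1}\{1+o(1)\}$, the remainder from Lemma~\ref{lemma1} being of strictly smaller order than the retained term. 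Using the identity $w_0(w_0-\lambda_{\mathbf{s}})=\lambda_{\mathbf{s}}^{-2}\log^2 t-\lambda_{\mathbf{s}}^2/4$ and writing $q^{\mathbf{W}}_{\mathbf{s}}(t;\lambda_{\mathbf{s}})=w_0+\eta$ then gives the stated expansion.

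The main obstacle is the bookkeeping of error terms and the verification that the single bootstrapping step is self-consistent. The delicate feature is that the correction $\eta$ is \emph{super-exponentially} small in $\log t$, so one must confirm simultaneously that it is genuinely negligible against the logarithmically growing leading term $w_0$ and that replacing $w_0+\eta$ by $w_0$ inside the Gaussian density $\phi$ --- which is extremely sensitive to its argument --- is legitimate; the latter holds precisely because $w_0\eta\to0$. One also has to check that the remainder $O\{\phi(w)w^{-4}\}$ inherited from Lemma~\ref{lemma1}, together with the quadratic and higher-order terms in the expansion of $e^{-\lambda_{\mathbf{s}}\eta}$, are all $o\{t\phi(w_0)w_0^{-2}\}$, so that they are absorbed into the $\{1+o(1)\}$ factor. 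Once these order comparisons are in hand, no further estimates are needed.
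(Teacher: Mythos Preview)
Your proposal is correct and follows essentially the same bootstrapping strategy as the paper: extract the leading term $w_0=\lambda_{\mathbf{s}}^{-1}\log t+\lambda_{\mathbf{s}}/2$ from the dominant exponential in Lemma~\ref{lemma1}, write $q=w_0+\eta$ with $\eta=o(1)$, substitute back, and expand $e^{-\lambda_{\mathbf{s}}\eta}-1$ to isolate $\eta$. Your bookkeeping is in fact more explicit than the paper's, which replaces $q(t)$ by $w_0$ inside $\phi$ without commenting on why this is permissible; your observation that this hinges on $w_0\eta\to0$ (ultimately because $\eta$ decays super-exponentially, not merely because $\eta=o(1)$) is the right justification.
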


\begin{proof}
From notational convenience, we shall write $q(t)\equiv q^{\mathbf{W}}_{\mathbf{s}}(t;\lambda_{\mathbf{s}})$. Because $1-F_{1;\mathbf{s}}^{\mathbf{W}}\{q(t)\}=t^{-1}$ and $q(t)\to\infty$ as $t\to\infty$, {we have} from Lemma~\ref{lemma1} that
\begin{equation}
\label{eq:formula}
\exp\{\lambda_{\mathbf{s}}^2/2-\lambda_{\mathbf{s}} q(t)\}-\lambda_{\mathbf{s}}\phi\{q(t)\}\left[q(t)\{q(t)-\lambda_{\mathbf{s}}\}\right]^{-1} + O[\phi\{q(t)\}q(t)^{-4}]=t^{-1}.
\end{equation}
Noting that the leftmost term in \eqref{eq:formula} is dominant, this yields
\begin{equation}
\label{eq:1}
q(t) = \lambda_{\mathbf{s}}^{-1}\log t + \lambda_{\mathbf{s}}/2 + r(t),
\end{equation}
where $r(t)\to0$, as $t\to\infty$. Using \eqref{eq:1} back into \eqref{eq:formula} gives 
\begin{equation*}
t^{-1}\exp\{-\lambda_{\mathbf{s}}r(t)\} - \lambda_{\mathbf{s}}{\phi(\lambda_{\mathbf{s}}^{-1}\log t + \lambda_{\mathbf{s}}/2)\over \lambda_{\mathbf{s}}^{-2}\log^2 t-\lambda_{\mathbf{s}}^2/4}\{1+o(1)\}=t^{-1}.
\end{equation*}
Thus, because $\exp(-x)=1-x\{1+o(1)\}$, as $x\to0$, {we} obtain
\begin{equation*}
r(t) = -t{\phi(\lambda_{\mathbf{s}}^{-1}\log t + \lambda_{\mathbf{s}}/2)\over \lambda_{\mathbf{s}}^{-2}\log^2 t-\lambda_{\mathbf{s}}^2/4}\{1+o(1)\},
\end{equation*}
which, combined with \eqref{eq:1}, concludes the proof.
\end{proof}

% ====================================================================================
\begin{proof}[Proof of Proposition~\ref{prop2}]
Consider the stationary exponential factor model~\eqref{LocalModel}, and let $z(u)={F_1^{\mathbf{W}}}^{-1}(u;\lambda)$ denote univariate quantiles. Thanks to \eqref{eq:multivariatecdf}, {we} have
\begin{align}
C_2^{\mathbf{W}}(u,u)&=F_2^{\mathbf{W}}\left\{z(u),z(u)\right\}\nonumber\\
&=\Phi_2\{z(u),z(u);\mathbf{\Sigma}_{\mathbf{Z}}\} - 2\exp\left\{\lambda^2/2 - \lambda z(u)\right\}\Phi_2\left[\lambda\{1-\rho(h)\},z(u)-\lambda;\mathbf\Omega\right],\nonumber\\
&=\Phi_2\{z(u),z(u);\mathbf{\Sigma}_{\mathbf{Z}}\} - 2\exp\left\{\lambda^2/2 - \lambda z(u)\right\}\Phi_2\left[\lambda\sqrt{\{1-\rho(h)\}/2},z(u)-\lambda;\mathbf{\tilde\Omega}\right],\label{eq:bivariatecdf}
\end{align}
where $\rho(h)$ is the underlying correlation function, and the covariance matrices $\mathbf{\Omega}$ and $\mathbf{\tilde\Omega}$ are
$$\mathbf{\Omega}=\begin{pmatrix}
2\{1-\rho(h)\} & -\{1-\rho(h)\}\\
-\{1-\rho(h)\} & 1
\end{pmatrix},
\quad\mathbf{\tilde\Omega}=\begin{pmatrix}
1 & -\sqrt{\{1-\rho(h)\}/2}\\
-\sqrt{\{1-\rho(h)\}/2} & 1
\end{pmatrix}.$$
Now, thanks to Lemma~\ref{lemma3}, $z(u)=q^{\mathbf{W}}_{\mathbf{s}}\{(1-u)^{-1};\lambda\}=-\lambda^{-1}\log(1-u)+\lambda/2+s(u)$, with 
\begin{equation}
\label{eq:su}
s(u)=-{\phi\{z(u)\}\over (1-u)z(u)\{z(u)-\lambda\}}\{1+o(1)\},\quad u\to1,%=-{\phi\{-\lambda^{-1}\log(1-u) + \lambda/2\}\over (1-u)\left\{\lambda^{-2}\log^2(1-u)-\lambda^2/4\right\}}\{1+o(1)\},\quad u\to1.
\end{equation}
such that $s(u)\to0$, $z(u)=-\lambda^{-1}\log(1-u)+\lambda/2\;\{1+o(1)\}$, and $z(u)\to\infty$, as $u\to1$.

Consequently, from the definition \eqref{eq:chiu} and \eqref{eq:bivariatecdf}--\eqref{eq:su}, {we} can write
\begin{equation}
\label{eq:chiuformula}
\chi_h(u)={1-2u+C_2^{\mathbf{W}}(u,u)\over 1-u}=2-{1-C_2^{\mathbf{W}}(u,u)\over 1-u}=2-f(u)-g(u)h(u),
\end{equation}
where, using the bivariate Gaussian survivor function $\overline\Phi_2$, {we have}
\begin{align}
f(u)&={1-\Phi_2\{z(u),z(u);\mathbf{\Sigma}_{\mathbf{Z}}\}\over 1-u}={2[1-\Phi\{z(u)\}]-\overline\Phi_2\{z(u),z(u);\mathbf{\Sigma}_{\mathbf{Z}}\}\over 1-u}\label{eq:fu}\\
&={2[1-\Phi\{z(u)\}]\{1+o(1)\}\over 1-u}={2\phi\{z(u)\}\over (1-u)z(u)}\{1+o(1)\}\to0,\qquad u\to1;\nonumber\\
g(u)&={\exp\left\{\lambda^2/2 - \lambda z(u)\right\}\over1-u}=\exp\{-\lambda s(u)\}\to1,\quad u\to1;\label{eq:gu}\\
h(u)&=2\Phi_2\left[\lambda\sqrt{\{1-\rho(h)\}/2},z(u)-\lambda;\mathbf{\tilde\Omega}\right]\to 2\Phi\left[\lambda\sqrt{\{1-\rho(h)\}/2}\right],\quad u\to 1.\label{eq:hu}
\end{align}
The $3^{\rm rd}$ equality in \eqref{eq:fu} is true as bivariate Gaussian random vectors $(Z_1,Z_2)^T$ with correlation $\rho<1$ are tail-independent, i.e., $\pr(Z_1>z\mid Z_2>z)\to0$, as $z\to\infty$ \citep{Ledford.Tawn:1996}; the expansion in the $4^{\rm th}$ equality in \eqref{eq:fu} is due to \eqref{MillsRatio}; finally, $f(u)\to0$, as $\phi\{z(u)\}/(1-u)\sim (2\pi)^{-1/2}\exp\{-\log^2(1-u)/2-\log(1-u)\}\to0$, as $u\to1$.

Equations \eqref{eq:chiuformula}--\eqref{eq:hu} imply that $\chi_h=\lim_{u\to1}\chi_h(u)=2-2\Phi[\lambda\sqrt{\{1-\rho(h)\}/2}]$. Hence, by writing $A=\lambda\sqrt{\{1-\rho(h)\}/2}$, {we} obtain as $u\to1$
\begin{align}
\chi_h(u)-\chi_h&=2\Phi(A)-f(u) -g(u)h(u)\nonumber\\
&=-f(u) + 2\Phi(A)\left[1-\exp\{-\lambda s(u)\}{\Phi_2\left\{A,z(u)-\lambda;\mathbf{\tilde\Omega}\right\}\over \Phi(A)}\right]\nonumber\\
&=-f(u) + 2\Phi(A)\left[1-\{1-\lambda s(u)\}\{1-k(u)\}\{1+o(1)\}\right]\nonumber\\
&=-f(u) + 2\Phi(A)\left\{\lambda s(u) + k(u)\right\}\{1+o(1)\}\label{eq:chiu.min.chi},
\end{align}
where $k(u)=\left[\Phi\left(A\right)-\Phi_2\left\{A,z(u)-\lambda;\mathbf{\tilde\Omega}\right\}\right]/\Phi\left(A\right)\to 0$ {as $u\to 1$.} Notice that by the l'Hospital's rule, it can be verified that for $\rho(h)>0$, $k(u)$ satisfies {the expression}
\begin{align}
\label{eq:ku2}
k(u)&={1-\Phi\{z(u)-\lambda\}\over \Phi(A)}\{1+o(1)\}={\phi\{z(u)-\lambda\}\over \Phi(A)\{z(u)-\lambda\}}\{1+o(1)\}\to0,\quad u\to1.
\end{align}
Comparing the rates of convergence of $s(u)$, $f(u)$ and $k(u)$ in \eqref{eq:su}, \eqref{eq:fu} and \eqref{eq:ku2}, respectively, we deduce that $f(u)$ is dominant as $u\to1$, and therefore from \eqref{eq:chiu.min.chi}, {we have}
\begin{equation*}
\chi_h(u)-\chi_h=-f(u)\{1+o(1)\}=-{2\phi\{z(u)\}\over (1-u)z(u)}\{1+o(1)\}=-{2\phi\{-\lambda^{-1}\log(1-u)+\lambda/2\}\over(1-u)\{-\lambda^{-1}\log(1-u)+\lambda/2\}}\{1+o(1)\},
\end{equation*}
which concludes the proof.
\end{proof}

\baselineskip 16pt

\bibliographystyle{CUP}
\bibliography{References.bib}

\baselineskip 10pt

\end{document}